\newtheorem{theorem}{Theorem}[section]
\newtheorem{lemma}[theorem]{Lemma}
\newtheorem{definition}[theorem]{Definition}
\newcommand{\FF}{${\cal F}$-gate}
\newtheorem{myclaim}{Claim}
\begin{document}

\title{ {\bf Holographic Algorithms with Matchgates Capture Precisely Tractable Planar \#CSP}}

\vspace{0.3in}
\author{Jin-Yi Cai\footnotemark[1]
\and Pinyan Lu\footnotemark[2] \and Mingji Xia\footnotemark[3]}

\renewcommand{\thefootnote}{\fnsymbol{footnote}}

\footnotetext[1]{University of Wisconsin-Madison.
 {\tt jyc@cs.wisc.edu}. Supported by NSF CCF-0830488 and CCF-0511679.}

\footnotetext[2]{Microsoft Research Asia. {\tt
pinyanl@microsoft.com}}

\footnotetext[3]{Institute of Software, Chinese Academy of Sciences.
{\tt xmjljx@gmail.com}. Supported by the Grand Challenge Program
``Network Algorithms and Digital Information'' of the Institute of
Software, CAS. Partially supported by NSFC 60970003.}
\date{}
 \maketitle

\bibliographystyle{plain}

\begin{abstract}
Valiant introduced matchgate computation and holographic algorithms.
A number of seemingly exponential time
problems can be solved by this novel algorithmic paradigm in polynomial
time.
%
We show that, in a very strong sense,
matchgate computations and  holographic algorithms based on them
provide a universal methodology to a broad class of counting problems
studied in statistical physics community for decades.
They capture precisely those problems which are \#P-hard on general
graphs but computable in polynomial time on planar graphs.

More precisely,
we prove complexity dichotomy theorems in the framework of counting
CSP problems. The local constraint functions take Boolean
inputs, and can be arbitrary real-valued symmetric functions.
We prove  that, {\it every} problem in this class belongs to precisely
three  categories:
(1) those which are
tractable (i.e., polynomial time computable) on general graphs, or
(2) those which are \#P-hard on general graphs but tractable on planar
graphs, or
(3) those which are \#P-hard even on planar graphs.
The classification criteria are explicit.
Moreover, problems in category (2) are tractable on
planar graphs precisely by holographic algorithms with matchgates.
\end{abstract}


\section{Introduction}

Given a set of functions  $\mathcal{F}$, the Counting Constraint
Satisfaction Problem \#CSP($\mathcal{F}$) is the following
problem:  An input instance consists of
 a set of {\it variables}  $X= \{x_1, x_2,
\ldots, x_n\}$ and a set of {\it constraints} where each constraint
is a function $f \in \mathcal{F}$ applied to some variables in $X$.
The output is the sum, over all assignments to $X$, of the products
of these function evaluations. This sum-of-product evaluation
 is called the {\it partition function}.   In the special case where $f \in
\mathcal{F}$ outputs values in $\{0,1\}$ it counts the number of
satisfying assignments. But constraint functions taking real or complex values
are also interesting, called (real or complex) weighted
\#CSP. Our $\mathcal{F}$ consists of real or complex valued
functions in general. There is a
deeper reason for allowing this generality: The theory
of {\it holographic reductions} is a powerful tool which
operates naturally over ${\mathbb C}$, even  if the original problem has
only 0-1 valued functions.

A closely related framework for
locally constrained counting problems is called
Holant Problems~\cite{FOCS08,STOC09}.
This  framework is inspired by the introduction of
{\it Holographic Algorithms} by L.~Valiant~\cite{HA_J,AA_FOCS}.
In two ground-breaking papers~\cite{Valiant02,HA_J}
Valiant introduced matchgates and holographic algorithms
based on matchgates to
solve a number of
problems in polynomial time, which appear to require exponential time.
At the heart of these exotic algorithms is a tensor transformation
from a given problem to the problem of counting (complex) weighted
perfect matchings over planar graphs.  The latter problem has
a remarkable P-time algorithm
(FKT-algorithm)~\cite{TF1961,Kasteleyn1961,Kasteleyn1967}.
Planarity is crucial, as
counting perfect matchings over general graphs
is \#P-hard~\cite{Valiant79}.
Most of these holographic algorithms use
a suitable linear basis to {realize} locally a {\it  symmetric} function
with at most 3  Boolean variables on a  matchgate.
This work has been extended in \cite{STOC07}.
In particular we have obtained a complete characterization of
all realizable symmetric  functions
by matchgates over the complex field ${\mathbb C}$.

The study of ``tractable \#CSP'' type problems has a much longer
history in the statistical physics community (under different
names). Ever since Wilhelm Lenz who invented what is now known as
the Ising model, and asked his student Ernst
Ising~\cite{ising1925beitrag} to work on it, physicists have studied
so-called ``Exactly Solved
Models''~\cite{baxter1982exactly,mccoy1973two}. In the language of
modern complexity theory, physicists' notion of an ``Exactly
Solvable'' system corresponds
 to systems with polynomial time computable
partition functions. This is captured completely by
the computer science  notion of  ``tractable \#CSP''.
In Physics, many great researchers worked to build this intellectual
edifice, with remarkable contributions by Ising, Onsager, C.N.Yang,
T.D.Lee, Fisher, Temperley, Kasteleyn, Baxter, Lieb, Wilson
etc~\cite{ising1925beitrag,onsager1944crystal,yang1952spontaneous,yang1952statistical,lee1952statistical,TF1961,Kasteleyn1961,Kasteleyn1967,baxter1982exactly,lieb1981general,welsh1993complexity}.
A central question is to identify what
``systems'' can be solved ``exactly'' and what
``systems'' are ``difficult''.
The basic conclusion from physicists is that some ``systems'',
including the Ising model,
are ``exactly solvable'' for planar graphs, but
they appear difficult for higher dimensions.
There does not exist any rigorous or provable classification.
This is partly because the notion of a ``difficult''
partition function had no rigorous definition in physics.
However, in the language of complexity theory,
it is natural to consider the classification problem.
In this paper we do that, in the more
general setting of \#CSP with real valued constraint functions.
This will also shed light on why the valiant efforts by
physicists to generalize the  ``exactly solved'' planar system
to higher dimensions failed.
(In the appendix we will give some more background.)

Now turning from Physics to CS proper,
after Valiant introduced his holographic algorithms
with matchgates,  the following question can be raised:
Do these novel algorithms capture all P-time tractable counting problems
on planar graphs, {\it or} are there other more exotic
algorithmic paradigms yet undiscovered?
A suspicion (and perhaps an audacious proposition) is that
they  have indeed
captured all tractable planar counting
problems.
If so it would provide a universal methodology to a
broad class of counting problems
studied in statistical physics and beyond.
%
%
The results of this paper can be viewed
as an affirmation of that  suspicion.
Within the framework of weighted Boolean  \#CSP problems
our answer is YES, for {\it all} symmetric
real valued functions.
%

While  \#CSP problems provide a natural framework
to address this question, it turns out that the
deeper reason comes from Holant problems, which can be
described as follows: An input graph  $G=(V,E)$ is
given, where each $v \in V$ is attached a
 function $f_v \in \mathcal{F}$, mapping $\{0, 1\}^{\deg (v)}
\rightarrow {\mathbb R}$ or ${\mathbb C}$.
 We consider
all edge assignments $\sigma: E \rightarrow \{0,1\}$.
For each $\sigma$, $f_v$ takes its input bits
from the incident edges $E(v)$ at $v$, and evaluates
to $f_v(\sigma\mid_{E(v)})$.
The  counting problem on instance $G$ is to compute
${\rm Holant}_G = \sum_{\sigma}
\prod_{v\in V} f_v(\sigma\mid_{E(v)})$.
In effect, in a Holant problem, edges are variables and
vertices represent constraint functions.
This framework is very natural; e.g., the problem
of {\sc Perfect Matching} corresponds to attaching
the {\sc Exact-One} function at each vertex,
 taking 0-1 inputs.
The class of all Holant problems with
function set $\mathcal{F}$ is denoted by Holant($\mathcal{F}$).

Every \#CSP problem can be simulated by a Holant problem.
Represent  any instance of a \#CSP problem by  a bipartite graph where
LHS are labeled  by variables and RHS are labeled by constraints.
Denote by $=_k: \{0, 1\}^k \rightarrow \{0, 1\}$
the {\sc Equality} function of arity $k$, which is 1 on $0^k$ and $1^k$,
and is 0 elsewhere.
Then we can turn the  \#CSP instance to
an input graph
of
a Holant problem, by  replacing every variable  vertex $v$ on LHS
by $=_{\deg(v)}$.
In fact,  \#CSP($\mathcal{F}$)  is exactly the same as
Holant($\mathcal{F} \cup \{=_k \mid k \ge 1\}$).
Thus,  \#CSP problems can be viewed as Holant problems
where all {\sc Equality} functions are available for free,
or assumed to be present.
However, when we wish to discuss some restricted classes
of counting problems, e.g., for 3-regular  graphs,
the framework of Holant problems is the more natural one.
And as it turns out, the main technical breakthrough for
 our dichotomy theorem for planar \#CSP comes from Holant problems.

In this paper we will only consider Boolean variables $X$.
For a symmetric function on $k$ variables,
we denote it as $[f_0, f_1, \ldots, f_k]$,
where $f_i$ is the value of $f$ on inputs of  Hamming weight $i$.
E.g., $(=_1) = [1,1], (=_2) = [1,0,1]$ and $(=_3) = [1,0,0,1]$ etc.
When we relax Holant problems by allowing all
{\sc Equality} functions for free, we obtain  \#CSP.
We can also consider other relaxations.
Let ${\bf 0} = [1,0]$ and ${\bf 1} = [0,1]$ denote
the constant 0 and 1 unary (arity 1) functions.
Then Holant$^c$ is the natural class
of Holant problems where ${\bf 0}$ and ${\bf 1}$
are free.  This amounts to
computing Holant on input graphs where we can
set  0 or 1 to some dangling edges (one end has degree 1).
Another class of Holant problems is called  Holant$^*$ problems
 where we assume
all unary functions $[u_0, u_1]$ are free.

In \cite{STOC09} we obtained a dichotomy theorem
for (complex) Holant$^*$ problems and (real)  Holant$^c$
problems. The  dichotomy criterion
for  Holant$^*$ problems is still valid for {\it planar graphs}.
The proof of dichotomy theorems in this paper starts from there.

In Section~\ref{section:holant}, we
prove that for any real-valued symmetric function set $\mathcal{F}$,
the  planar  Holant$^c(\mathcal{F})$ problem
is tractable (i.e., computable in P) iff either it is tractable
over general graphs (for which we already have an effective
dichotomy theorem~\cite{STOC09}), or
 it is tractable because every function in $\mathcal{F}$
is realizable by a matchgate, in which case
  the  planar  Holant$^c(\mathcal{F})$ problem
 is computable by matchgates in P-time using FKT.
In {\it all other cases} the problem is \#P-hard.\footnote{Strictly
speaking, we must only consider $\mathcal{F}$ where functions take
computable  real numbers; this will be assumed implicitly.} A
crucial ingredient of the proof is a crossover construction whose
validity is proved algebraically, which seems to defy any direct
combinatorial justification.

Our second theorem (Section~\ref{section:csp})
is about planar  \#CSP problems.
We prove that for any set of
 real-valued symmetric functions $\mathcal{F}$,
the  planar   \#CSP$(\mathcal{F})$ problem
is  tractable iff either it is tractable
as  \#CSP$(\mathcal{F})$ without the planarity
restriction (for which we  have an effective
dichotomy theorem~\cite{STOC09}), or
 it is tractable because every function in $\mathcal{F}$
is realizable by a matchgate under a specific holographic
basis transformation.
Thus planar \#CSP$(\mathcal{F})$ is solvable by a holographic
algorithm in the second case.  For all other  $\mathcal{F}$
the problem is \#P-hard.
The proof of this dichotomy theorem for planar \#CSP
is built on the one for  planar Holant$^c$ in Section~\ref{section:holant}.

Our third result is a dichotomy theorem for
planar 2-3 regular bipartite
Holant problems (Section~\ref{section:2-3-regular}).
(This theorem deals with Holant problems without
assuming unary  ${\bf 0}$ and ${\bf 1}$.)
This includes Holant problems for 3-regular
graphs as a special case.
The tractability criterion is the same:
Either it is  tractable for general graphs (for which we also
have an effective
dichotomy theorem~\cite{CHL09}),
or it is tractable by a suitable holographic
algorithm, which is a  holographic
reduction  to FKT using matchgates.
In all other cases the problem is \#P-hard.

The  three dichotomy theorems are not mutually subsumed by
each other and are of independent interest.  In each framework
the respective theorem is a demonstration
that holographic algorithms with matchgates capture
precisely those \#P-hard problems which become tractable
for planar graphs.

\section{Preliminaries}\label{sec:background}

\subsection{Problem and Definitions}
Our functions take values in $\mathbb{C}$ by default.
The framework of
 Holant problems is defined for functions
mapping any  $[q]^k\rightarrow \mathbb{C}$ for a finite $q$. Our
results in this paper are for the Boolean case $q=2$.
So we give the following definitions only
for $q=2$ for notational simplicity.

A {\it signature grid} $\Omega = (H, {\mathcal F}, \pi)$
consists of a graph $H=(V,E)$, and a labeling $\pi$
which labels  each vertex with
a function $f_v \in {\mathcal F}$.
 The Holant problem on instance
$\Omega$ is to compute ${\rm Holant}_\Omega=\sum_{\sigma}
\prod_{v\in V} f_v(\sigma\mid_{E(v)})$, a sum over all edge
assignments $\sigma: E \rightarrow \{0,1\}$.
A function $f_v$ can be represented as a vector of length $2^{\deg(v)}$,
or a tensor in $({\mathbb{C}}^{2})^{\otimes \deg(v)}$.
A function $f \in {\mathcal F}$ is also called a {\it signature}.
We denote by $=_k$ the {\sc Equality} signature of arity $k$.
A symmetric function $f$ on $k$ Boolean variables
can be expressed by $[f_0,f_1,\ldots,f_k]$, where $f_i$ is the value
of $f$ on inputs of Hamming weight $i$.
Thus, $(=_k)=[1,0,\ldots,0,1]$ (with $k-1$ zeros).
A Holant problem is parameterized by a set of signatures.
\begin{definition}
Given a set of signatures ${\mathcal F}$, we define a counting problem
${\rm Holant}({\mathcal F})$:

Input: A {\it signature grid} $\Omega = (G, {\mathcal F}, \pi)$;

Output: ${\rm Holant}_\Omega$.
\end{definition}

Planar Holant problems are Holant problems on planar graphs.
\begin{definition}
Given a set of signatures ${\mathcal F}$, we define a counting problem
{\rm Pl-Holant}$({\mathcal F})$:

Input: A {\it signature grid} $\Omega = (G, {\mathcal F}, \pi)$, where $G$ is a planar graph;

Output: ${\rm Holant}_\Omega$.
\end{definition}

We would like to characterize the complexity of Holant problems in
terms of its signature sets.~\footnote{Usually our set of
signatures ${\mathcal F}$ is a finite set, and the assertion
of either ${\rm Holant}({\mathcal F})$ is tractable
or \#P-hard has the usual meaning.  However our dichotomy
theorem is actually stronger: we allow ${\mathcal F}$
to be infinite, e.g., to include $\{=_1, =_2, =_3,\ldots\}$
or all unary signatures.
${\rm Holant}({\mathcal F})$ is tractable means that
it is computable in P
even when we include the description
of the signatures in the input $\Omega$ in  the input size.
${\rm Holant}({\mathcal F})$ is   \#P-hard  means that
there exists a finite subset of ${\mathcal F}$
for which the problem is  \#P-hard.}
For some  ${\mathcal F}$, it is possible that
Holant$({\mathcal F})$ is \#P-hard, while {\rm Pl-Holant}$({\mathcal F})$ is tractable.
These new tractable cases make dichotomies for planar Holant problems more
challenging.
This is also the focus of this work.
Some special families of Holant
problems have already been widely studied. For
example, if ${\mathcal F}$ contains all {\sc Equality} signatures
$\{=_1, =_2, =_3,\ldots\}$, then this is exactly the weighted \#CSP
problem.  Pl-\#CSP denotes the restriction  of \#CSP
to  planar structures, i.e., the standard bipartite
graphs representing the input instances of  \#CSP are planar.
   In \cite{STOC09}, we also introduced the
following two special families of Holant problems
by assuming some signatures are freely available.

\begin{definition}
Let ${\mathcal U}$ denote the set of all unary signatures. Given a set
of signatures ${\mathcal F}$, we use ${\rm Holant}^* ({\mathcal F})$ (or {\rm Pl-Holant}$^* ({\mathcal F})$ respectively) to
denote ${\rm Holant}({\mathcal F}\cup {\mathcal U})$ (or {\rm Pl-Holant}$({\mathcal F}\cup {\mathcal U})$ respectively).
\end{definition}

\begin{definition}
Given a set of signatures ${\mathcal F}$, we use
${\rm Holant}^c
({\mathcal F})$ (or {\rm Pl-Holant}$^c ({\mathcal F})$ respectively) to denote ${\rm Holant}({\mathcal F}\cup \{ [1,0],
[0,1] \})$ ( or {\rm Pl-Holant}$({\mathcal F}\cup \{ [1,0],
[0,1] \})$ respectively).
\end{definition}

Replacing a signature $f\in {\mathcal F}$ by a constant multiple $cf$, where $c
\neq 0$, does not change the complexity of ${\rm Holant}({\mathcal
F})$. So we  view $f$ and $cf$ as the same signature.
An important property of a signature is whether  it is degenerate.
\begin{definition}
A signature is degenerate iff it is a
tensor product of unary signatures. In particular, a symmetric signature in ${\mathcal F}$
is degenerate iff it can be expressed as
$\lambda[x,y]^{\otimes k}$.
\end{definition}

\subsection{${\cal F}$-Gate and Matchgate}
A signature from ${\cal F}$
is a basic function which can be used at a vertex in an input graph.
 Instead of a single vertex, we can use graph fragments to generalize
this notion. An ${\cal F}$-gate $\Gamma$ is a tuple $(H, {\cal F},
\pi)$, where $H = (V,E,D)$ is a graph where the edge set consists of
regular edges $E$ and dangling edges $D$. Some nodes of degree 1 are
designated as external nodes, and all other nodes are internal
nodes;  a dangling edge connects an internal node to an external
node, while a regular edge connects two internal nodes. The labeling
$\pi$ assigns a function from ${\cal F}$ to each internal node. The
dangling edges define variables for the ${\cal F}$-gate. (See Figure
\ref{Figs:F-gate} for one example.)
\begin{figure}[htbp]
\begin{center}
\includegraphics[width=2 in]{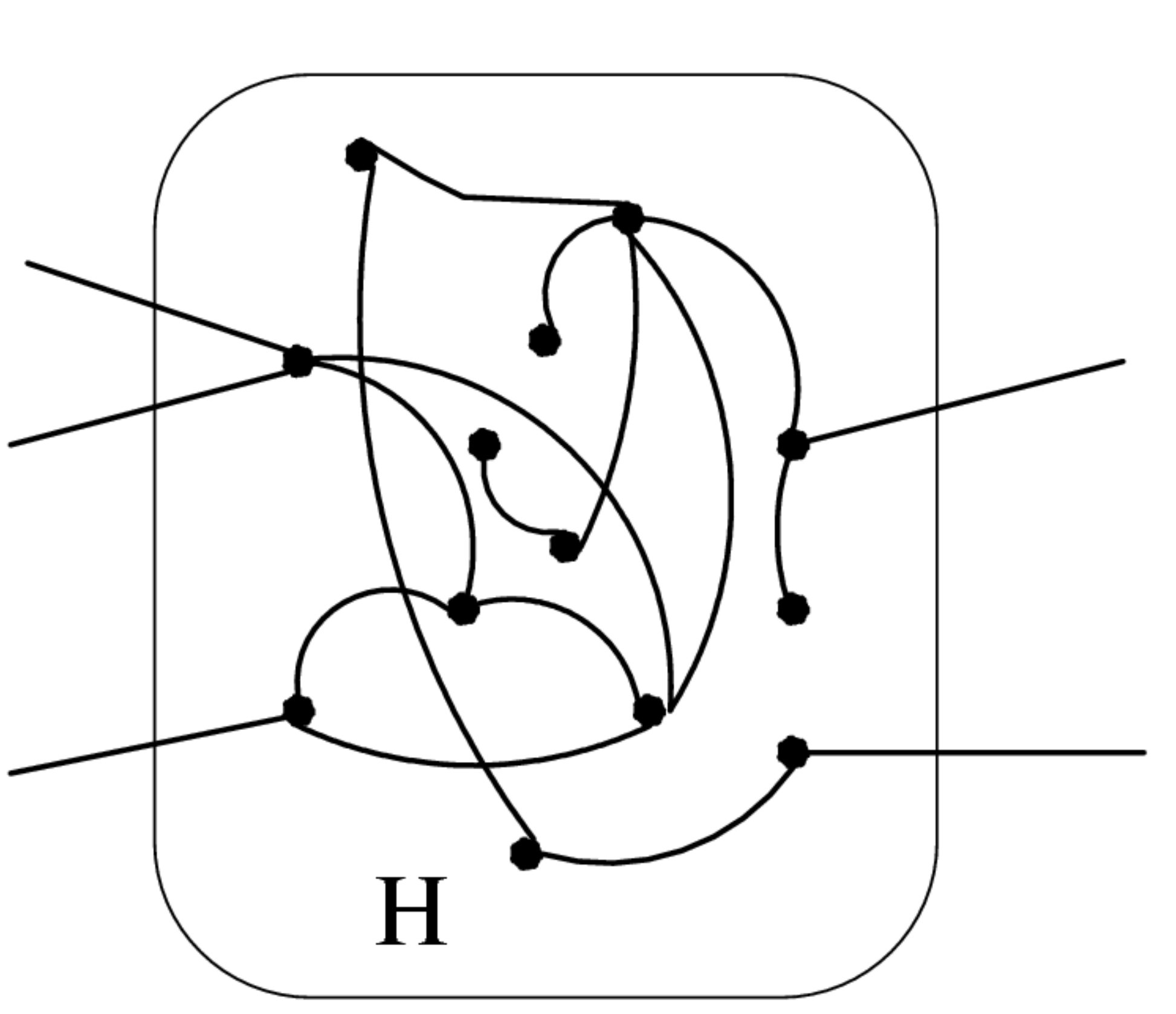} \caption{An ${\cal F}$-gate
with 5 dangling edges.} \label{Figs:F-gate}
\end{center}
\end{figure}
We denote the regular edges in $E$ by
$1,2,\ldots,m$, and  denote the dangling edges in $D$ by
$m+1,m+2,\ldots, m+n$. Then we can define a function for this ${\cal
F}$-gate $\Gamma = (H, {\cal F}, \pi)$,
\[\Gamma(y_1, y_2,
\ldots, y_n)=\sum_{x_1, x_2, \ldots x_m}H(x_1,x_2,\ldots,x_m,y_1,y_2,
\ldots y_n),\] where $(y_1, y_2, \ldots, y_n) \in \{0,1\}^n$ denotes
an assignment on the dangling edges
 and $H(x_1,x_2,\ldots, x_m,y_1,y_2,\ldots,y_n)$ denotes
the value of the signature grid on an assignment of all edges. We
will also call this function the signature of the ${\cal F}$-gate
$\Gamma$. An ${\cal F}$-gate can be used in a signature grid as if
it is just a single node with the particular signature.

Using the idea of ${\cal F}$-gates, we can reduce one Holant
problem to another. Let $g$ be the signature of some ${\cal
F}$-gate $\Gamma$. Then ${\rm Holant} ({\cal F} \cup \{g
\})\leq_T {\rm Holant} ({\cal F}) $. The reduction is quite
simple. Given an instance of ${\rm Holant} ({\cal F} \cup \{g
\})$, by replacing every appearance of $g$ by an ${\cal F}$-gate
$\Gamma$, we get an instance of ${\rm Holant} ({\cal F}) $. Since
the signature of  $\Gamma$ is $g$, the values for these two
signature grids are identical.

We note that even for a very simple signature set ${\cal F}$, the
signatures for all ${\cal F}$-gates can be quite complicated and
expressive. Matchgate signatures are an example. Matchgate is
introduced by Valiant $\cite{Valiant02,Valiant02_matchgate,HA_J}$,
whose definition is combinatorial in nature. Matchgates can be
viewed as a special case of planar ${\cal F}$-gates, where ${\cal
F}$ contains Exact-One functions of all arities and weight functions
($[1,0,w]$, $w\in \mathbb{C}$) on edges. (Formally, we replace each
matchgate edge of weight $w$ by a path of length 2, and the new node
on the path is assigned weight function $[1, 0, w]$.) The signature
function $\Gamma$ defined above for a matchgate is called a
matchgate signature, or a standard signature. A signature function
is realizable by a matchgate if it is the  standard signature of
that matchgate. (After  a  holographic transformation, a signature
function is realizable under a basis if it is the transformed
signature of a matchgate; see below.)

\subsection{Holographic Reduction}

To introduce the idea of holographic reductions, it is convenient
to consider bipartite graphs.  This is without loss of generality.
For any general graph, we can make it
bipartite by replacing each edge by a path of length two,
and giving each
new vertex  the {\sc Equality} function  $=_2$ on 2 inputs.
(This is just the incident graph.)

We use ${\rm Holant}( {\mathcal G}|{\mathcal R})$ to denote all counting
problems, expressed as Holant problems on bipartite graphs
$H=(U,V,E)$, where each signature for a vertex in $U$ or $V$ is from
${\mathcal G}$ or ${\mathcal R}$, respectively.  An input instance
for the bipartite
 Holant problem is a bipartite signature grid and is denoted as $\Omega =
(H, {\mathcal G}|{\mathcal R},\pi)$. Signatures in ${\mathcal G}$ are
denoted by column vectors (or
contravariant tensors); signatures in ${\mathcal R}$ are
denoted by row vectors (or covariant
tensors)~\cite{dodson}.

One can perform (contravariant and covariant) tensor
transformations on the signatures.
We will define a simple version of  holographic reductions,
which are invertible.
They are called holographic because they may
produce exponential cancellations in the tensor space.
Suppose  ${\rm Holant}( {\mathcal  G}|{\mathcal R})$ and
 ${\rm Holant}( {\mathcal G'}|{\mathcal R'})$ are two Holant problems defined for the same
family of graphs, and
 $T \in {\bf GL}_2({\mathbb C})$.
We say that there is an (invertible) holographic
reduction from  ${\rm Holant}( {\mathcal G}|{\mathcal R})$ to
 ${\rm Holant}( {\mathcal G'}|{\mathcal R'})$, and $T$ is the basis
transformation, if
the {\it contravariant} transformation
$G' = T^{\otimes g} G$ and the {\it covariant} transformation
$R=R' T^{\otimes r}$ map $G\in {\mathcal G}$ to $G'\in {\mathcal G'}$
and $R\in {\mathcal R}$  to $R' \in {\mathcal R'}$,
and vice versa,
where $G$ and $R$ have arity $g$ and $r$ respectively.
(Notice the reversal of directions when the
transformation $T^{\otimes n}$ is applied. This is the meaning of
{\it contravariance} and {\it covariance}.)

\begin{theorem}[Valiant's Holant Theorem \cite{HA_J}]\label{thm:holant}
Suppose there is  a holographic reduction from $\#{\mathcal G}|{\mathcal
R}$ to $\#{\mathcal G'}|{\mathcal R'}$ mapping signature grid $\Omega$
to $\Omega'$, then ${\rm Holant}_{\Omega} = {\rm Holant}_{\Omega'}.$
\end{theorem}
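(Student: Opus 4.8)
The plan is to prove the identity ${\rm Holant}_\Omega = {\rm Holant}_{\Omega'}$ by a direct tensor-contraction computation whose heart is the trivial observation that the $2\times 2$ identity factors as $I = T^{-1}T$ and so may be inserted on every edge of the grid. By the incidence-graph construction described above we may assume the underlying graph $H=(U,V,E)$ is bipartite, with each $u\in U$ carrying a contravariant signature $G_u\in{\mathcal G}$ and each $v\in V$ carrying a covariant signature $R_v\in{\mathcal R}$; this reduction does not change the Holant value. Writing $G'_u=T^{\otimes\deg(u)}G_u$ as in the contravariant rule, the covariant rule $R=R'T^{\otimes r}$ is equivalent to $R'_v$ being obtained from $R_v$ by applying $T^{-1}$ to each tensor factor, i.e.\ $(R'_v)_{\mathbf{c}}=\sum_{\mathbf{a}}(R_v)_{\mathbf{a}}\prod_k (T^{-1})_{a_k c_k}$, and one checks directly that this is consistent with $R=R'T^{\otimes r}$ using $\sum_c (T^{-1})_{bc}T_{ca}=\delta_{ba}$.

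First I would expand each transformed signature occurring in ${\rm Holant}_{\Omega'}=\sum_{\tau:E\to\{0,1\}}\prod_u G'_u(\tau|_{E(u)})\prod_v R'_v(\tau|_{E(v)})$ back in terms of the original signatures: for a fixed $\tau$ I introduce an independent ``inner'' index $s^{(u)}_e$ for each edge $e$ at $u$ and $s^{(v)}_e$ for each edge $e$ at $v$, so that $G'_u(\tau|_{E(u)})=\sum_{s^{(u)}}\big(\prod_{e\in E(u)}T_{\tau(e),\,s^{(u)}_e}\big)G_u(s^{(u)})$ and $R'_v(\tau|_{E(v)})=\sum_{s^{(v)}}\big(\prod_{e\in E(v)}(T^{-1})_{s^{(v)}_e,\,\tau(e)}\big)R_v(s^{(v)})$. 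Next I would interchange the order of summation and sum over $\tau$ first. Since $H$ is bipartite, each edge $e=(u,v)$ lies in exactly one incidence list on the $U$-side and exactly one on the $V$-side, so the variable $\tau(e)$ occurs in precisely the two factors $T_{\tau(e),\,s^{(u)}_e}$ and $(T^{-1})_{s^{(v)}_e,\,\tau(e)}$. Summing over $\tau(e)\in\{0,1\}$ contracts these into $\sum_{\tau(e)}(T^{-1})_{s^{(v)}_e,\,\tau(e)}\,T_{\tau(e),\,s^{(u)}_e}=(T^{-1}T)_{s^{(v)}_e,\,s^{(u)}_e}=\delta_{s^{(v)}_e,\,s^{(u)}_e}$.

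Performing this contraction on every edge forces the inner indices at the two endpoints of each edge to agree, so the whole family $\{s^{(u)}\}\cup\{s^{(v)}\}$ collapses to a single global assignment $\sigma:E\to\{0,1\}$, and what survives is exactly $\sum_\sigma \prod_u G_u(\sigma|_{E(u)})\prod_v R_v(\sigma|_{E(v)})={\rm Holant}_\Omega$. The reverse direction of the holographic reduction follows by the same argument with $T$ replaced by $T^{-1}$.

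I do not expect a genuine obstacle in this proof; the entire content is the insertion-of-identity trick together with careful bookkeeping. The two points that must be handled with care are the \emph{direction} of the transformation — contravariant $T$ on the ${\mathcal G}$-side versus covariant $T^{-1}$ on the ${\mathcal R}$-side, which is exactly what makes the inserted factors multiply to $T^{-1}T$ rather than $T^2$ — and the use of bipartiteness, which is precisely what guarantees that each edge variable $\tau(e)$ is shared by exactly one $G'$-factor and exactly one $R'$-factor, so that the inserted identities decouple edge by edge.
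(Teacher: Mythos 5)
Your proof is correct and is the standard argument for Valiant's Holant Theorem (which the paper states with a citation to Valiant's work rather than reproving): insert $I=T^{-1}T$ on every edge, absorb $T$ into the $\mathcal{G}$-side tensors and $T^{-1}$ into the $\mathcal{R}$-side tensors, and observe that bipartiteness makes the edge-by-edge contraction collapse all inner indices to a single global assignment. Your bookkeeping of the contravariant/covariant directions and the resulting $(T^{-1}T)_{s^{(v)}_e,s^{(u)}_e}=\delta_{s^{(v)}_e,s^{(u)}_e}$ contraction is exactly right.
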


In particular,  for invertible  holographic reductions from ${\rm Holant}( {\mathcal G}|{\mathcal R})$ to
 ${\rm Holant}( {\mathcal G'}|{\mathcal R'})$, one problem
is in P iff the other one is, and similarly one problem is \#P-hard
iff the other one is also.

In the study of Holant problems, we will commonly transfer between bipartite and
non-bipartite settings. When this does not cause confusion,
 we do not
distinguish signatures between  column vectors (or
contravariant tensors) and  row vectors (or covariant
tensors).  Whenever we write a transformation as $T^{\otimes n} F$ or $T \mathcal{F}$,
we view the signature or signatures as  column vectors (or
contravariant tensors);
  whenever we write a transformation as $F T^{\otimes n} $ or $\mathcal{F} T $,
we view the signature or signatures as  row vectors (or covariant
tensors).


\subsection{Some Known Dichotomy Results}
In this subsection, we state some known dichotomy theorems.
We first review three dichotomy theorems from \cite{STOC09}.

\begin{theorem}\label{thm-holant-star}
Let $\mathcal{F}$ be a set of symmetric signatures over $\mathbb{C}$.
Then $\rm{Holant}^*(\mathcal{F})$ is computable in polynomial time
in the following three cases. In all other cases,
$\rm{Holant}^*(\mathcal{F})$ is \#P-hard.
\begin{enumerate}
    \item Every signature in $\mathcal{F}$ is of arity no more than two;
    \item There exist two constants $a$ and $b$ (not both zero, depending
        only on $\mathcal{F}$), such that for every signature
        $[x_0,x_1,\ldots,x_n] \in \mathcal{F}$ one of the two
 conditions is
        satisfied: (1) for every $k=0,1,\ldots,n-2$, we have
        $ax_{k}+bx_{k+1}-ax_{k+2}=0$; (2) $n=2$ and the signature
        $[x_0,x_1,x_2]$ is of form $[2a\lambda,b\lambda,-2a\lambda]$.
    \item For every signature $[x_0,x_1,\ldots,x_n] \in \mathcal{F}$, one
        of the two conditions is satisfied: (1) For every $k=0,1,\ldots,n-2$,
        we have $x_{k}+x_{k+2}=0$; (2) $n=2$ and the signature $[x_0,x_1,x_2]$
        is of form $[\lambda,0,\lambda]$.
\end{enumerate}
The same dichotomy also holds for Pl-Holant$^*(\mathcal{F})$.
\end{theorem}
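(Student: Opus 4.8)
The plan is to prove the dichotomy in three movements --- a normalization step, the tractability of the three listed cases, and the \#P-hardness of everything else --- and then to observe that on the tractable side every argument is blind to planarity while on the hard side every gadget and base problem can be taken planar, so that the Pl-$\mathrm{Holant}^*$ statement follows with no extra work.

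\emph{Normalization.} The defining feature of $\mathrm{Holant}^*$ is that the set $\mathcal{U}$ of all unary signatures is invariant under every $T\in\mathbf{GL}_2(\mathbb{C})$; hence by Valiant's Holant Theorem (Theorem~\ref{thm:holant}) the problems $\mathrm{Holant}^*(\mathcal{F})$ and $\mathrm{Holant}^*(T\mathcal{F})$ are interreducible, and likewise for the planar versions, so I may freely replace $\mathcal{F}$ by $T\mathcal{F}$. I also record the two gadget moves available here: (i) attaching a free unary $[a,b]$ to one dangling edge of a symmetric signature $[x_0,\dots,x_n]$ yields $[ax_0+bx_1,\dots,ax_{n-1}+bx_n]$, so from any signature one obtains all of its ``pinnings'' down to arity $3$ or $2$; and (ii) joining two signatures (or a signature and a binary signature obtained by pinning) through an edge produces a new $\mathcal{F}$-gate. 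Both moves preserve planarity. Describing a symmetric signature through its associated binary form, a non-degenerate signature of arity $\ge 3$ can, unless it is already of one of the special forms of cases~(2)--(3), be pinned down to a non-degenerate ternary signature still outside cases~(2)--(3); and up to a transformation $T$ such a ternary is a scalar multiple of the Exact-One signature $[0,1,0,0]$, because every non-degenerate ternary is $\mathbf{GL}_2$-equivalent either to a generalized equality $[\alpha,0,0,\beta]$ with $\alpha,\beta\neq 0$ (the case-(2) situation) or to $[0,1,0,0]$.

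\emph{Tractability.} Case~(1): if every signature has arity $\le 2$, then every signature grid has maximum degree $2$, hence is a disjoint union of paths and cycles, and $\mathrm{Holant}$ on each component is a trace or a product of $2\times 2$ matrices with unary caps, computable in polynomial time. Cases~(2) and~(3): the recurrence $ax_k+bx_{k+1}-ax_{k+2}=0$ has characteristic roots whose product is $-1$ (with $a=0$ giving the generalized equality $[\alpha,0,\dots,0,\beta]$ itself) and the recurrence $x_k+x_{k+2}=0$ has roots $\pm i$, so in either case one single transformation $T$ carries \emph{every} signature of $\mathcal{F}$ simultaneously into a generalized-equality signature or into one of a few compatible binary signatures --- the exceptional arity-$2$ forms $[2a\lambda,b\lambda,-2a\lambda]$ and $[\lambda,0,\lambda]$ being exactly those binaries that $T$ does not spoil (they become scalar multiples of the disequality $[0,1,0]$ or of $=_2$). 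And $\mathrm{Holant}^*$ over generalized-equality signatures together with unaries and these compatible binaries is tractable: a generalized-equality vertex forces all its incident edges equal, so the partition function factors over the connected components of the ``equality graph'', each contributing only two terms (a disequality or $=_2$ on an edge being just a wire, possibly with a value-flip); this runs in polynomial time and needs no planarity, which already gives tractability of cases~(2)--(3) for Pl-$\mathrm{Holant}^*$.

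\emph{Hardness, and the main obstacle.} Suppose $\mathcal{F}$ falls into none of the three cases. By the normalization step there is, after a holographic transformation, a ternary signature $\mathbf{GL}_2$-equivalent to $[0,1,0,0]$ available as an $\mathcal{F}$-gate, so it suffices to show that $\mathrm{Holant}^*(\{[0,1,0,0]\})$ is \#P-hard, and \#P-hard even on planar graphs. From $[0,1,0,0]$ and the free unaries one realizes, via small planar gadgets, further symmetric signatures --- for instance the disequality $[0,1,0]$, hence $=_2$ by putting two disequalities in series, and ``NAND''-type and ``at-most-one''-type signatures --- from which one reduces a known planar-\#P-hard counting problem (a planar two-spin or monomer--dimer type problem, or a planar \#CSP, invoking the hardness results of~\cite{STOC09,Valiant79}); where a single gadget does not directly produce the desired signature I would use a one-parameter family of gadgets together with a polynomial-interpolation (Vandermonde) argument to recover it. The case in which every non-degenerate high-arity signature of $\mathcal{F}$ is individually of case-(2) or case-(3) type but no common $T$ works is handled by joining two signatures that sit in incompatible normal forms into a gadget of the preceding (Exact-One) type. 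The main obstacle is precisely this hardness analysis: constructing the explicit planar gadgets that realize the hard signatures, carrying out the interpolation arguments that dispose of the degenerate sub-cases, and --- for the converse half of the ``iff'' --- verifying that no signature just outside cases~(2)--(3) can be salvaged. Finally, since every construction above is planar and every base hard problem can be chosen planar, whereas every tractability argument is independent of planarity --- and, crucially, matchgate-transformability does \emph{not} produce a new tractable $\mathrm{Holant}^*$ class, because not all of the free unaries in $\mathcal{U}$ are matchgate-realizable --- the identical dichotomy holds for Pl-$\mathrm{Holant}^*(\mathcal{F})$.
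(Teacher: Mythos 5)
This theorem is quoted in the paper as a known result from~\cite{STOC09} (Section~2.4, ``Some Known Dichotomy Results'') and is not proved here, so there is no in-paper argument to compare your proposal against. Evaluated on its own merits, your sketch has the right skeleton for a from-scratch proof --- exploit the $\mathbf{GL}_2$-invariance of $\mathcal{U}$ to normalize $\mathcal{F}$ by a holographic transformation, dispose of the arity-$\le 2$ case by path/cycle decomposition, transform cases~(2)--(3) into generalized equalities with compatible binaries, and reduce the remaining cases to a canonical hard form --- and your closing observation that matchgates supply no new tractable class for Holant$^*$ (because $\mathcal{U}$ cannot be carried into matchgate signatures by a single transformation) is exactly the right reason why the planar and non-planar criteria coincide here, in contrast to Theorems~\ref{thm:holant-c} and~\ref{thm:csp}.

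The genuine gap is that the entire hardness half is deferred. You state that from a ternary in the Exact-One orbit plus free unaries one realizes disequality, $=_2$, NAND/at-most-one type signatures, ``from which one reduces a known planar-\#P-hard counting problem,'' and that where a gadget fails one would ``use a one-parameter family of gadgets together with a polynomial-interpolation argument'' --- but you never exhibit these gadgets, identify the base planar-hard problem, or carry out the Vandermonde argument, and you explicitly label this as ``the main obstacle.'' That is an honest assessment, but it means the proposal is not a proof: these constructions are the substantive content of the theorem, and they require care (note that $\mathrm{Pl}$-$\mathrm{Holant}([0,1,0,0])$ on $3$-regular planar graphs without free unaries is planar perfect matching, which is in P via FKT, so the reduction must genuinely exploit the unaries). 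A second underdeveloped point is the case where each signature individually admits a case-(2)/(3) normal form but no single $T$ works for all of $\mathcal{F}$: you assert that joining two signatures ``in incompatible normal forms'' produces a gadget of Exact-One type, but whether the join actually lands in the double-root orbit rather than another generalized-equality orbit or a degenerate signature depends on the algebra of the two bases, and that verification is omitted. Also worth flagging: cases~(2)--(3) are conditions on the set $\mathcal{F}$ as a whole, not on individual signatures, so the phrase ``pinned down to a non-degenerate ternary still outside cases~(2)--(3)'' should really be ``pinned down to a ternary whose $\mathbf{GL}_2$-orbit witnesses the failure of cases~(2)--(3) for $\mathcal{F}$'' --- you acknowledge this distinction later, but the early framing is imprecise.
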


\begin{theorem}\label{thm-holnat-c}
Let $\mathcal{F}$ be a set of \emph{real} symmetric signatures, and let
$\mathcal{F}_1, \mathcal{F}_2$ and $\mathcal{F}_3$
 be three families of signatures
defined as
\begin{eqnarray*}
    \mathcal{F}_1 & = &
        \{\lambda([1,0]^{\otimes k}+i^r[0,1]^{\otimes k}) |
            \lambda \in \mathbb{C}, k=1,2, \ldots, r=0,1,2,3\}; \\
    \mathcal{F}_2 & = &
        \{\lambda([1,1]^{\otimes k}+i^r[1,-1]^{\otimes k}) |
            \lambda \in \mathbb{C}, k=1,2, \ldots, r=0,1,2,3\}; \\
    \mathcal{F}_3 & = &
        \{\lambda([1,i]^{\otimes k}+i^r[1,-i]^{\otimes k}) |
            \lambda \in \mathbb{C}, k=1,2, \ldots, r=0,1,2,3\}.
\end{eqnarray*}
Then ${\rm Holant}^c (\mathcal{F})$ is computable in polynomial time if
(1) After removing unary signatures from $\mathcal{F}$,
it falls in one of the three Classes of Theorem~\ref{thm-holant-star}
(this implies  ${\rm Holant}^* (\mathcal{F})$ is computable in polynomial time)
 or
(2) (Without removing any unary signature)
$\mathcal{F} \subseteq \mathcal{F}_1 \cup \mathcal{F}_2 \cup \mathcal{F}_3$.
 Otherwise, ${\rm Holant}^c (\mathcal{F})$ is \#P-hard.
\end{theorem}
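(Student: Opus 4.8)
\noindent\emph{Proof plan.} The plan is to split the statement into tractability of the listed cases and \#P-hardness of everything else, after the routine preliminary that removes degenerate signatures: a vertex carrying $[u,v]^{\otimes k}$ is interchangeable with $k$ pendant copies of the unary $[u,v]$, so one may assume every signature in $\mathcal{F}$ of arity $\ge 2$ is non-degenerate. For tractability, case~(1) is immediate: since $\mathrm{Holant}^*$ makes all unary signatures free, deleting the unary members of $\mathcal{F}$ does not change $\mathrm{Holant}^*(\mathcal{F})$, so the hypothesis of~(1) says precisely that $\mathrm{Holant}^*(\mathcal{F})$ is in a tractable class of Theorem~\ref{thm-holant-star}, while $\mathrm{Holant}^c(\mathcal{F})\le_T\mathrm{Holant}^*(\mathcal{F})$ because $[1,0]$ and $[0,1]$ are unary. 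For case~(2) I would show that $\mathcal{F}_1\cup\mathcal{F}_2\cup\mathcal{F}_3$, together with $[1,0]$ and $[0,1]$, all sit inside one tractable class --- the affine signatures $\mathcal{A}$ (those supported on an affine subspace of $\mathbb{F}_2^n$ with values a fourth root of unity raised to a quadratic form): $\mathcal{F}_1$ consists of ``equality with a fourth-root-of-unity weight'' signatures and is visibly in $\mathcal{A}$, while $\mathcal{F}_2$ and $\mathcal{F}_3$ are the images of $\mathcal{F}_1$ under the Hadamard basis (columns $[1,1]$, $[1,-1]$) and under the basis with columns $[1,i]$, $[1,-i]$ --- the latter being $\mathrm{diag}(1,i)$ times the Hadamard matrix --- and both bases stabilize $\mathcal{A}$. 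Then I would invoke (or re-derive by a Gauss-sum evaluation) that $\mathrm{Holant}(\mathcal{A})$ is polynomial-time computable; the three bases are exactly those carrying $\{[1,0],[0,1]\}$ into $\mathcal{A}$, which explains why the extra tractable family breaks into exactly three pieces and why an $\mathcal{F}$ drawing from all three is still tractable.

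For \#P-hardness, suppose $\mathcal{F}$ lies in neither case. Failure of~(1) together with Theorem~\ref{thm-holant-star} gives that $\mathrm{Holant}^*(\mathcal{F})=\mathrm{Holant}(\mathcal{F}\cup\mathcal{U})$ is \#P-hard, so it would suffice to prove $\mathrm{Holant}(\mathcal{F}\cup\mathcal{U})\le_T\mathrm{Holant}^c(\mathcal{F})$, i.e.\ to simulate every unary signature from $\mathcal{F}\cup\{[1,0],[0,1]\}$. My main tool would be polynomial interpolation: build an $\mathcal{F}$-gate with a single dangling edge whose unary signature $[a,b]$ is in general position, together with a recursively defined $\mathcal{F}$-gate family --- a fixed binary $\mathcal{F}$-gate iterated in a chain, with its spare ports pinned by $[1,0]$ or $[0,1]$ --- whose signatures evolve under a $2\times 2$ transfer matrix; whenever that matrix has two eigenvalues whose ratio is not a root of unity, a Vandermonde argument realizes any target unary signature as a formal combination of the gates, and \#P-hardness follows.

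The hard part will be classifying the \emph{bad} sets $\mathcal{F}$ for which this interpolation cannot be set up. An obstruction forces every binary sub-signature extractable from $\mathcal{F}$ --- obtained by pinning inputs of its symmetric signatures with $[1,0]$ and $[0,1]$, and by composing such gates --- to have a degenerate or zero-trace transfer matrix, which propagates into a second-order recurrence $af_k+bf_{k+1}-af_{k+2}=0$ on the entries of each $f=[f_0,\ldots,f_n]\in\mathcal{F}$ (together with the arity-$2$ exception $[2a\lambda,b\lambda,-2a\lambda]$); over $\mathbb{R}$ the surviving root configurations are precisely the ones forcing $f$, up to a scalar and up to one of the three bases above, into the shape $[1,0]^{\otimes k}+i^r[0,1]^{\otimes k}$. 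Then I would argue two things: (i) if all of $\mathcal{F}$ is pinned into this shape with one common basis, then $\mathcal{F}\subseteq\mathcal{F}_1\cup\mathcal{F}_2\cup\mathcal{F}_3$, contradicting the assumption, so this branch is vacuous; and (ii) otherwise --- $\mathcal{F}$ straddles two incompatible bases, or one of its signatures is one of the finitely many sporadic symmetric signatures outside this shape (so that some pinning produces a hard ternary signature) --- a direct gadget reduction from a known \#P-hard counting problem closes the case. Carrying out this analysis --- matching the three holographic bases against the second-order-recurrence classification of real symmetric signatures and disposing of the sporadic hard signatures by ad hoc gadgets --- is where essentially all the work lies; everything else is immediate or a routine interpolation/Vandermonde computation.
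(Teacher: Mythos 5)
The paper does not actually prove Theorem~\ref{thm-holnat-c}: it is stated in the ``Some Known Dichotomy Results'' subsection and cited verbatim from \cite{STOC09}. So there is no in-paper proof to compare your attempt against; what follows is an assessment of the attempt on its own terms and against the general architecture of the cited result.

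Your tractability half is correct and essentially the standard argument. Case~(1) is immediate once you observe that $\mathrm{Holant}^c(\mathcal{F})\le_T\mathrm{Holant}^*(\mathcal{F})$ and that deleting unary members of $\mathcal{F}$ cannot change $\mathrm{Holant}^*(\mathcal{F})$. For case~(2), your affine argument is exactly right: the paper itself records that the symmetric part of $\mathcal{A}$ is precisely $\mathcal{F}_1\cup\mathcal{F}_2\cup\mathcal{F}_3$, the pinning signatures $[1,0],[0,1]$ lie in $\mathcal{F}_1$, and since $\mathcal{A}$ contains all equalities one has $\mathrm{Holant}(\mathcal{A})=\#\mathrm{CSP}(\mathcal{A})$, which Theorem~\ref{thm:dichotomy} puts in~P. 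The observation that $\mathcal{F}_2$ and $\mathcal{F}_3$ are the Hadamard and $\mathrm{diag}(1,i)\cdot$Hadamard images of $\mathcal{F}_1$ is also correct and genuinely explanatory.

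The hardness half is where the proposal remains a plan rather than a proof. The reduction $\mathrm{Holant}^*(\mathcal{F})\le_T\mathrm{Holant}^c(\mathcal{F})$ via interpolating all unary signatures is the right lead, and your description of the $2\times 2$ transfer-matrix/Vandermonde argument is correct in outline. But two things are glossed over. First, your characterization of the obstruction (``degenerate or zero-trace transfer matrix'') is not the full obstruction set: for the $[a,b,c]$ used in the recursive construction, the failure modes are $b^2=ac$, or $b=0$, or $a+c=0$ (cf.\ the interpolation lemma proved in this paper for the planar case); the case $b=0$ is not captured by ``degenerate or zero-trace,'' and it is precisely the parity-constrained case, which is the hardest to dispatch. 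Second, the entire burden of the theorem sits in establishing that when interpolation is blocked, one either lands in $\mathcal{F}_1\cup\mathcal{F}_2\cup\mathcal{F}_3$ or can still prove \#P-hardness by bespoke gadgets. That step is asserted (``the surviving root configurations are precisely \ldots''; ``a direct gadget reduction \ldots closes the case'') rather than argued, and the sporadic/parity casework is where nearly all the content of the actual proof in \cite{STOC09} lives. In short: the skeleton matches, the tractability direction is complete, but the \#P-hardness classification of the parity-obstructed signature sets is the genuine gap.
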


Here we explicitly list all the real signatures in $\mathcal{F}_1
\cup \mathcal{F}_2 \cup \mathcal{F}_3$, up to an arbitrary scalar
factor:
\begin{enumerate}
  \item ($\mathcal{F}_1$):  $[1,0,0,\ldots, 0, 1 \mbox{ }(\mbox{or } -1)],$
  \item ($\mathcal{F}_2$):  $[1,0,1,0,\ldots, 0 \mbox{ }(\mbox{or } 1 )],$
  \item ($\mathcal{F}_2$):  $[0,1,0,1,\ldots, 0 \mbox{ }(\mbox{or } 1 )],$
  \item ($\mathcal{F}_3$):  $[1,0,-1,0,1,0,-1,0,\ldots, 0 \mbox{ }(\mbox{or } 1 \mbox{ or } -1)],$
  \item ($\mathcal{F}_3$):  $[0,1,0,-1,0,1,0,-1,\ldots, 0 \mbox{ }(\mbox{or } 1 \mbox{ or } -1)],$
  \item ($\mathcal{F}_3$):  $[1,1,-1,-1,1,1,-1,-1,\ldots, 1 \mbox{ }(\mbox{or } -1)],$
  \item ($\mathcal{F}_3$):  $[1,-1,-1,1,1,-1,-1,1,\ldots, 1 \mbox{ }(\mbox{or } -1)].$
\end{enumerate}

\begin{definition}
A $k$-ary function $f(x_1,\ldots,x_k)$ is affine if it
has the form
\begin{displaymath}
\chi_{[AX=0]}i^{\sum_{j=1}^{n}\langle \alpha_j,X\rangle}
\end{displaymath}
where $X=(x_1,x_2,\ldots,x_k,1)$, and $\chi$ is a 0-1 indicator function
such that $\chi_{[AX=0]}$ is 1 iff $AX=0$. Note that the inner
product $\langle \alpha,X\rangle$
 is calculated over $\mathbb{F}_2$, while the
summation over $j$ on the exponent of $i = \sqrt{-1}$
is over $\mathbb{F}_4$. We use $\mathcal{A}$
to denote the set of all affine functions.

We use $\mathcal{P}$
to denote the set of functions which can be expressed as a
product of unary functions, binary equality functions  ($[1,0,1]$
on some two variables)
and binary disequality functions ($[0,1,0]$ on some two variables).
\end{definition}

\begin{theorem}\label{thm:dichotomy}
Suppose $\mathcal{F}$ is a set of functions mapping Boolean inputs to
complex numbers. If $\mathcal{F} \subseteq \mathcal{A}$ or $\mathcal{F}
\subseteq \mathcal{P}$, then \#CSP($\mathcal{F}$) is computable in
polynomial time.
Otherwise, \#CSP($\mathcal{F}$) is \#P-hard.
\end{theorem}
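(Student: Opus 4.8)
\emph{Proof proposal.} The claim is the complex-weighted Boolean \#CSP dichotomy; I would prove it in a tractability direction and a hardness direction, with the bulk of the work on the latter.

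\emph{Tractability.} If $\mathcal{F}\subseteq\mathcal{P}$, then on any instance the binary equality and disequality constraints partition the variables into blocks, two variables lying in one block when joined by a chain of such constraints; within a block one variable's value forces all the others (equal or negated) whenever the chain is consistent, and an inconsistent block (an odd disequality cycle) kills the whole term. Each consistent block admits exactly two global assignments, so the partition function factors over blocks, each factor being a two-term sum of products of the attached unary weights --- polynomial time. If $\mathcal{F}\subseteq\mathcal{A}$, then the product over constraints of the $\mathbb{F}_2$-affine indicators is again the indicator of an affine subspace $V=\{X:BX=0\}$, and the product of the phase factors $i^{\langle\alpha,X\rangle}$ equals $i^{Q(X)}$ where, after rewriting each $\mathbb{Z}$-sum of $\{0,1\}$-valued $\mathbb{F}_2$-linear forms via $a+b=(a\oplus b)+2(a\wedge b)$, the exponent $Q$ is an $\mathbb{F}_2$-linear form plus twice an $\mathbb{F}_2$-quadratic form, taken modulo $4$. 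Parametrising $V$ linearly turns the partition function into a Gauss sum $\sum_{Y\in\mathbb{F}_2^d}i^{Q'(Y)}$ of the same shape, and such sums are evaluated in polynomial time by the classical diagonalisation of quadratic forms over $\mathbb{F}_2$. Since membership in $\mathcal{A}$ and in $\mathcal{P}$ is effectively checkable, the criterion is explicit.

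\emph{Hardness --- the plan.} Suppose $\mathcal{F}\not\subseteq\mathcal{P}$ and $\mathcal{F}\not\subseteq\mathcal{A}$. The goal is a polynomial-time reduction to \#CSP($\mathcal{F}$) from a standard \#P-hard problem, e.g.\ counting independent sets or a generic-weight Ising/\#2-SAT partition function on general graphs. All equality signatures are free in \#CSP (indeed a self-loop on $=_3$ gives $=_1=[1,1]$, attaching $=_1$ to a leg of $=_3$ gives $=_2=[1,0,1]$, and chaining copies of $=_3$ gives every higher $=_k$), so it is convenient to pass to the Holant view \#CSP($\mathcal{F})\equiv_T$ Holant($\{=_k:k\ge1\}\cup\mathcal{F}$) and to enlarge $\mathcal{F}$ freely using gadget composition, self-loops, and polynomial interpolation. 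The first step is a pinning lemma, \#CSP($\mathcal{F})\equiv_T$ \#CSP($\mathcal{F}\cup\{[1,0],[0,1]\}$), proved by building from $\mathcal{F}$ a gadget whose signature has two distinct eigenvalues and recovering $[1,0]$ and $[0,1]$ by solving a Vandermonde system; the delicate point over $\mathbb{C}$ is arranging that the eigenvalue ratio is not a root of unity, so that the interpolation actually succeeds.

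\emph{Hardness --- reduction to the binary case, and the main obstacle.} With $[1,0]$ and $[0,1]$ available one can pin any inputs of any $f\in\mathcal{F}$, so every restriction of $f$ is at hand. The combinatorial core is then: (i) a structural fact that a function not in $\mathcal{P}$ (resp.\ not in $\mathcal{A}$), after pinning, self-loops, and mating with equalities, yields a binary signature not in $\mathcal{P}$ (resp.\ not in $\mathcal{A}$), which reduces everything to arity two by induction on arity; and (ii) a complete classification of binary complex signatures in which every one outside $\mathcal{P}\cup\mathcal{A}$ is shown \#P-hard by direct reduction to the target problem above. The genuinely hard case, and the one I expect to be the main obstacle, is ``mixing'': $\mathcal{F}$ may contain $f_1\in\mathcal{A}\setminus\mathcal{P}$ and $f_2\in\mathcal{P}\setminus\mathcal{A}$, each tractable in isolation, and one must combine them --- again by pinning, self-loops, and gadget composition --- to realise a binary signature outside both $\mathcal{A}$ and $\mathcal{P}$, i.e.\ to prove that the affine and the product obstructions cannot coexist without already generating a hard gadget. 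Verifying this over $\mathbb{C}$, together with making every interpolation step unconditional for complex weights, is the crux; the remaining binary case analysis is lengthy but routine, all reductions plainly run in polynomial time, and no invertible holographic reduction (Theorem~\ref{thm:holant}) is needed here, so Theorem~\ref{thm:dichotomy} follows.
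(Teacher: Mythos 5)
The paper does not prove this theorem. Theorem~\ref{thm:dichotomy} appears in Section~2.4 under ``Some Known Dichotomy Results'' and is cited verbatim from~\cite{STOC09}; its proof lives in that earlier paper, not here. So there is no in-paper proof to compare your proposal against; what follows is an assessment of the proposal on its own.

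Your tractability direction is essentially right. The $\mathcal{P}$ case is the standard block decomposition via equality/disequality chains, and the $\mathcal{A}$ case does reduce to evaluating a Gauss-type sum $\sum_{Y} i^{Q'(Y)}$ over an affine parametrization, which can be done in polynomial time; the $\mathbb{Z}_4$-valued bookkeeping you describe (separating the $\mathbb{F}_2$-linear part from twice an $\mathbb{F}_2$-quadratic part) is the correct shape of argument.

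The hardness direction, however, is an outline rather than a proof, and you yourself flag the two places where the real work lies without resolving either. First, the pinning lemma over $\mathbb{C}$: you want to interpolate $[1,0]$ and $[0,1]$ from a gadget with eigenvalue ratio not a root of unity, but you give no argument that such a gadget can always be built from an arbitrary $\mathcal{F}\not\subseteq\mathcal{A}$, $\mathcal{F}\not\subseteq\mathcal{P}$ --- and over $\mathbb{C}$ the eigenvalue ratio being a root of unity is a genuine obstruction, not a measure-zero accident one can wave away. (For real weights one has more leverage, cf.\ the argument in Lemma~\ref{lemma:interpolation} of this paper, which crucially uses that a real symmetric matrix has real eigenvalues; the complex case needs a different route.) Second, and more importantly, the ``mixing'' case $f_1\in\mathcal{A}\setminus\mathcal{P}$, $f_2\in\mathcal{P}\setminus\mathcal{A}$ is exactly where the theorem's content lies: showing these two tractability obstructions cannot coexist without producing a hard gadget is the key structural lemma, and you state it as a goal rather than prove it. Calling the remaining binary classification ``routine'' also understates the difficulty once one insists on unconditional interpolation over $\mathbb{C}$. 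In short, the proposal correctly identifies the skeleton of the argument and correctly names the crux, but it leaves the crux unproved, so it does not constitute a proof of Theorem~\ref{thm:dichotomy}.
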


As we mentioned in \cite{STOC09}, the class $\mathcal{A}$ is a natural generalization of
the symmetric signatures family $\mathcal{F}_1 \cup \mathcal{F}_2 \cup \mathcal{F}_3$.
It is easy to show that the set of symmetric signatures in  $\mathcal{A}$ is exactly
$\mathcal{F}_1 \cup \mathcal{F}_2 \cup \mathcal{F}_3$.

\vspace{.1in}

The following dichotomy for 2-3 regular graphs is from \cite{cai}.

\begin{theorem}\label{lemma-cai} {\rm (\cite{cai})}
The problem Holant$([y_0, y_1, y_2]|[1,0,0,1])$ is \#P-hard for all
$y_0, y_1, y_2 \in \mathbb{C}$ except in the following cases, for which
the problem is in P: (1) $y_1^2=y_0 y_2$;
(2) $y_0^{12}=y_1^{12}$ and $y_0 y_2=-y_1^2$ ( $y_1 \neq 0$) ;
(3) $y_1=0$;
(4) $y_0=y_2=0$.
If we restrict the input to planar graphs, then these four categories are tractable in P,
as well as a fifth category $y_0^3=y_2^3$, and the problem remains \#P-hard in all other cases.
\end{theorem}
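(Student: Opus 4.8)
\medskip
\noindent\textbf{Proof proposal.} The plan is to reformulate the problem, settle the tractable cases, and then prove hardness by gadget interpolation. First I would observe that $\mathrm{Holant}([y_0,y_1,y_2]\,|\,[1,0,0,1])$ is exactly the partition function $Z_M(\cdot)$ of the $2$-state edge-interaction model with symmetric interaction matrix $M=\bigl(\begin{smallmatrix}y_0&y_1\\ y_1&y_2\end{smallmatrix}\bigr)$ on $3$-regular (multi)graphs: contracting away the degree-$2$ vertices that carry $[y_0,y_1,y_2]$ turns a signature grid $\Omega$ into a $3$-regular graph $G'$, planar iff $\Omega$ is, whose vertices carry $[1,0,0,1]$, and $\mathrm{Holant}_\Omega=\sum_{\tau\colon V(G')\to\{0,1\}}\prod_{(u,v)\in E(G')}M_{\tau(u)\tau(v)}$. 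Throughout, two moves are free: rescaling a signature by a nonzero constant, and applying the reflection $\bigl(\begin{smallmatrix}0&1\\ 1&0\end{smallmatrix}\bigr)$, which swaps $y_0\leftrightarrow y_2$ and fixes $=_3$.

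Next I would dispatch the five tractable cases. If $y_1^2=y_0y_2$, then $[y_0,y_1,y_2]$ is degenerate, of the form $\lambda[a,b]^{\otimes2}$; absorbing the unary $[a,b]$ into each incident $=_3$ decouples the grid into connected components, each with a closed-form value, so the problem is in P. If $y_1=0$, then $[y_0,0,y_2]$ forces its two edges to be equal, so with $=_3$ every component of $G'$ is frozen and $Z_M=\prod_{\text{components}}\bigl(y_0^{|V|}+y_2^{|V|}\bigr)$. If $y_0=y_2=0$, then $[0,y_1,0]$ is disequality and $Z_M$ is $y_1^{|E|}$ times the number of proper $2$-colourings of $G'$, i.e.\ a bipartiteness test. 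For case (2) ($y_1\neq0$, $y_0y_2=-y_1^2$, $y_0^{12}=y_1^{12}$), normalize $y_1=1$, so that $y_0\in\mu_{12}$ and $M=[y_0,1,-y_0^{-1}]$; I would check that a diagonal gauge $\mathrm{diag}(1,t)$ with $t^3=1$ — whose covariant action multiplies the ``$111$'' entry of $=_3$ by $t^{-3}=1$ and hence fixes $=_3$ — followed by a scalar carries $[y_0,1,-y_0^{-1}]$ into a symmetric \emph{affine} signature, the exponent $12=\operatorname{lcm}(3,4)$ being exactly the cube-root freedom in $t$ times the fourth-root values admitted in $\mathcal A$; since $=_3\in\mathcal A$ is preserved and every Holant instance with all signatures in $\mathcal A$ is in P (the Gauss-sum algorithm, cf.\ Theorem~\ref{thm:dichotomy}), the case is tractable. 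Finally, for the extra planar case $y_0^3=y_2^3$, write $y_2=\omega^2y_0$ with $\omega^3=1$: the gauge $\mathrm{diag}(1,\omega)$ fixes $=_3$ and sends $[y_0,y_1,y_2]$ to the Ising form $[y_0,\omega y_1,y_0]$, so the planar problem becomes the planar complex-weighted Ising partition function, which is in P by the classical Fisher--Kasteleyn--Temperley / Kac--Ward reduction to counting perfect matchings of a planar graph; equivalently one exhibits a holographic basis $T$ under which $=_3$ becomes a matchgate signature of type $[0,*,0,*]$ and $[y_0,y_1,y_2]$ one of type $[*,0,*]$ (such a $T$ exists precisely when $y_0/y_2$ is a cube root of unity) and then runs FKT, which is exactly a holographic algorithm with matchgates.

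For hardness — \#P-hardness for every $(y_0,y_1,y_2)$ outside cases (1)--(4), and outside cases (1)--(5) in the planar setting — the tool is $\mathcal F$-gate interpolation. Chains and small trees built from $[y_0,y_1,y_2]$ and $=_3$ realize parametrized families of unary and binary signatures; for instance, joining two legs of a $=_3$ through one copy of $[y_0,y_1,y_2]$ realizes the unary $[y_0,y_2]$, and feeding a unary $[u_0,u_1]$ into one leg of $=_3$ realizes the binary $[u_0,0,u_1]$. A chain of $k$ such blocks carries a $2\times2$ transfer matrix whose $k$-th power, for generic $(y_0,y_1,y_2)$, is diagonalizable with an eigenvalue ratio that is not a root of unity — exactly the condition under which Lagrange/Vandermonde interpolation lets us realize an arbitrary unary, and hence a rich family of binary signatures, inside the Holant. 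Once enough signatures are available the instance is forced into a known \#P-hard regime — a hard case of the $\mathrm{Holant}^c$ dichotomy (Theorem~\ref{thm-holnat-c}), or, after recovering all of $\{=_k\}$, a hard case of \#CSP (Theorem~\ref{thm:dichotomy}) — or one reduces directly from counting perfect matchings on general graphs (\#P-hard, \cite{Valiant79}). The parameter values at which the interpolation degenerates (transfer matrix not diagonalizable, or eigenvalue ratio a root of unity of small order) are precisely the tractable cases, and making this match exact — in particular deriving the precise exponents $3$ (planar) and $12$ (general) — is the crux of the argument. I expect the main obstacle to be the planar dichotomy: every hardness reduction must be carried out with planar gadgets, which forces a crossover, and — as the companion results of this paper illustrate — the only known way to certify such a crossover for these signature families is an algebraic identity rather than a combinatorial routing argument, while at the same time one has to prove that no parameter outside $y_0^3=y_2^3$ slips into planar tractability.
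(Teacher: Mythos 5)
The paper does not prove this statement: Theorem~\ref{lemma-cai} is quoted as a known result from \cite{cai}, so there is no in-paper argument for you to reproduce or to deviate from. On the merits of your sketch, the reformulation as the $2$-spin edge-interaction model on $3$-regular graphs is the right starting point, and your treatment of the tractable side is broadly sound. One small slip: in case (3), with all spins in a component equal, the component's contribution is $y_0^{|E_c|}+y_2^{|E_c|}$, powered by the number of \emph{edges}, not vertices. Your heuristic $12=\operatorname{lcm}(3,4)$ for case (2) is in fact correct — after normalising $y_1=1$, the symmetric arity-$2$ affine signatures with $y_0y_2=-y_1^2$ give $y_0\in\mu_4$ (namely $[\pm1,1,\mp1]$ and $[\pm i,1,\pm i]$ up to scalar), and the $\mu_3$-gauges $\mathrm{diag}(1,t)$, $t^3=1$, that stabilise $=_3$ sweep this to $\mu_{12}$ — and the planar case $y_0^3=y_2^3$ via a $\mu_3$-gauge to the Ising form followed by FKT is the standard argument.

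The genuine gap is the one you name yourself: the hardness direction, which is the bulk of the theorem, is left as a roadmap. ``Build gadgets from $=_3$ and $[y_0,y_1,y_2]$, interpolate via a transfer-matrix eigenvalue ratio, then reduce from a known hard problem'' is qualitatively the plan that \cite{cai} carries out, but as written it does not constitute a proof: you have not exhibited a specific gadget whose degeneracy locus coincides with cases (1)--(4) (and (5) for planar), nor shown that the generic eigenvalue ratio avoids roots of unity, nor fixed the target hard problem and verified its hypotheses after interpolation. The planar side is harder still — every gadget and every interpolated instance must remain planar, and the crossover obstruction you flag is exactly why the planar boundary is $y_0^3=y_2^3$; your sketch does not explain why nothing outside this locus becomes planar-tractable. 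Since the theorem is imported, the appropriate move in this paper is to cite \cite{cai}; a self-contained proof would require executing the hardness half in full.
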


\subsection{Characterization of Realizable Signatures by Matchgates}
A matchgate is called even (respectively odd)
if it has an even  (respectively odd) number of vertices.
The following two lemmas are  from \cite{MGI}.

\begin{lemma}
\label{even-geometric-series-lemma}
A symmetric signature
$[z_0, \ldots, z_m]$ is the standard signature of some
even matchgate iff  for all odd $i$, $z_i =0$,
and there exist $r_1$ and $r_2$ not both zero,
 such that for every even $2 \le k \le m$,
\[
r_1 z_{k-2} = r_2 z_k.
\]
\end{lemma}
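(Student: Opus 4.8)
The plan is to characterize even matchgate signatures via the known structural theory of matchgates — specifically, the matchgate identities (also called the "Grassmann--Plücker" relations, or the parity and useful-gate conditions) together with the explicit form of even matchgate signatures for symmetric cases. First I would recall that a matchgate signature satisfies the parity condition: a standard matchgate is either even or odd, and an even matchgate has support only on inputs of even Hamming weight (while an odd matchgate has support only on odd weight). For a \emph{symmetric} signature $[z_0,\ldots,z_m]$, the parity condition applied to an even matchgate immediately forces $z_i = 0$ for all odd $i$; this handles one half of the "only if" direction and is the easy part. The substance is in the geometric-progression condition on the even-index entries.

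For the "only if" direction of the recurrence, I would invoke the matchgate identities (the Plücker-type relations among the entries of a matchgate signature, from \cite{MGI} and the earlier matchgate literature). Restricted to a symmetric signature with only even-weight support, these identities collapse to a small set of quadratic relations among $z_0, z_2, z_4, \ldots$. I expect that, after specializing the general matchgate identities to the symmetric case and using the vanishing of odd entries, one extracts exactly the relations $z_{k-2} z_{k+2} = z_k^2$ (or, more precisely, $z_j z_\ell = z_{j'} z_{\ell'}$ whenever $j + \ell = j' + \ell'$), which is equivalent to saying the sequence $z_0, z_2, z_4,\ldots$ forms a geometric progression up to the usual degenerate boundary cases — i.e., there exist $r_1, r_2$ not both zero with $r_1 z_{k-2} = r_2 z_k$ for all even $k$. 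The case analysis on whether leading/trailing entries vanish (which forces one of $r_1, r_2$ to be zero) must be done carefully but is routine.

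For the "if" direction I would exhibit, for each such sequence, an explicit even matchgate realizing it. The natural candidates: when the progression is a genuine geometric series $z_k = \lambda\, t^{k/2}$ one can build a "staircase" or chain matchgate — a planar gadget of edges with appropriately chosen edge weights (using the convention in the excerpt that an edge of weight $w$ is a length-2 path with a $[1,0,w]$ node), whose perfect-matching sums produce exactly the binomial-free symmetric pattern $[z_0, 0, z_2, 0, \ldots]$. The degenerate boundary cases (e.g. $z_k$ supported on a single even index, such as $[z_0,0,\ldots,0]$ or $[0,\ldots,0,z_m]$) are realized by trivial matchgates. One should double-check that each constructed gadget is genuinely even (even number of vertices) and planar, and that its standard signature is symmetric with the claimed entries; this amounts to a short perfect-matching count.

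The main obstacle I anticipate is the "only if" direction: extracting precisely the geometric-progression relation from the full system of matchgate identities and confirming that \emph{no further} constraints survive in the symmetric even case — in other words, that the matchgate identities together with even-parity are not only necessary but, after the realizability construction, also sufficient. This is exactly the content that \cite{MGI} supplies, so in the write-up I would lean on the characterization theorem there and just verify the symmetric specialization; the risk is in mishandling the boundary/degenerate subcases where the "ratio" $r_2/r_1$ is $0$ or $\infty$.
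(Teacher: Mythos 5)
The paper does not actually prove this lemma: it is stated (together with Lemma~\ref{odd-geometric-series-lemma}) with the one-line attribution ``The following two lemmas are from \cite{MGI},'' so there is no in-paper argument to compare against. Your outline nonetheless matches the standard proof in the cited source: parity of the matchgate forces the odd-weight entries to vanish, the matchgate identities (Grassmann--Pl\"ucker relations) specialized to a symmetric, even-supported signature collapse to the three-term relations $z_{k-2}z_{k+2}=z_k^2$, and a routine but fiddly case analysis shows this is equivalent to the two-term form ``$\exists\,(r_1,r_2)\neq(0,0)$ with $r_1 z_{k-2}=r_2 z_k$'' (the degenerate cases $r_1=0$ or $r_2=0$ corresponding exactly to signatures supported on a single extreme even index). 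The converse is handled by an explicit planar even-vertex construction realizing any geometric sequence on the even entries. Your flagged risk --- the boundary/degenerate subcases where the ratio is $0$ or $\infty$ --- is indeed where care is needed, and your equivalence between the two-term and three-term forms does require checking that a first nonzero even entry in the interior would violate $z_{k-2}z_{k+2}=z_k^2$. With that verification spelled out, the proposal is a correct reconstruction of the argument the paper delegates to \cite{MGI}.
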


\begin{lemma}
\label{odd-geometric-series-lemma}
A symmetric signature
$[z_0, \ldots, z_m]$ is the standard signature of some
odd matchgate  iff  for all even $i$,
$z_i =0$, and there exist $r_1$ and $r_2$ not both zero,
 such that for every odd $3 \le k \le m$,
\[
r_1 z_{k-2} = r_2 z_k.
\]
\end{lemma}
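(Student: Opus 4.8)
The plan is to derive both directions from the companion statement for even matchgates, Lemma~\ref{even-geometric-series-lemma}, together with the composition calculus for matchgates, keeping the matchgate identities in reserve for the converse. For the forward direction, assume $[z_0,\dots,z_m]$ is the standard signature of an odd matchgate $\Gamma$. The vanishing of the even-weight entries is immediate from parity: the entry of Hamming weight $i$ is a weighted count of perfect matchings of $\Gamma$ with $i$ external vertices deleted, and an odd total vertex count forces $i$ to be odd for such a matching to exist, so $z_i=0$ for even $i$. For the recurrence, compose $\Gamma$ with the single-vertex matchgate realizing the unary signature $[0,1]$, joining it to one external vertex by a unit-weight edge. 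This adds exactly one vertex, so the resulting matchgate $\Gamma''$ is \emph{even}, and a one-line evaluation of the composed signature shows $\Gamma''$ realizes $[z_1,z_2,\dots,z_m]$. Applying Lemma~\ref{even-geometric-series-lemma} to $\Gamma''$ and re-indexing, its conclusion reads $z_j=0$ for even $j\ge 2$ (already known) together with the existence of $r_1,r_2$ not both zero such that $r_1 z_{j-2}=r_2 z_j$ for all odd $j$ with $3\le j\le m$ --- which is exactly the asserted condition.

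For the converse, suppose $[z_0,\dots,z_m]$ has all even entries zero and its odd entries obey $r_1 z_{k-2}=r_2 z_k$. Then the shifted sequence $[z_1,z_2,\dots,z_m]$ satisfies the hypotheses of Lemma~\ref{even-geometric-series-lemma}, hence is realized by an even matchgate, and it remains to promote this to an odd matchgate realizing $[z_0,\dots,z_m]=[0,z_1,\dots,z_m]$. I would do this by explicit construction: in the generic case $r_1,r_2\ne 0$ the odd entries form a true geometric progression $z_{2t+1}=\lambda\rho^{t}$, and a diagonal transformation $\mathrm{diag}(1,\sqrt{\rho})^{\otimes m}$ (which carries matchgate signatures to matchgate signatures, since it is effected by inserting weighted-edge gadgets on the dangling edges) together with an overall scaling reduces the problem to the single template $[0,1,0,1,\dots]$ of arity $m$, realized by a small fixed family of planar odd matchgates (e.g.\ the unit-weight triangle at arity $3$). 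The degenerate progressions ($r_1=0$ or $r_2=0$, i.e.\ $[0,\dots,0,z_m]$ or $[0,z_1,0,\dots,0]$) and the base cases $m\le 2$ are handled by direct inspection. Alternatively --- and this is probably the cleanest uniform route --- one invokes the matchgate-identity characterization of matchgate signatures and checks that, once the even-index entries of a symmetric signature are set to zero, the full system of matchgate identities collapses to precisely the chain of two-term relations $r_1 z_{k-2}=r_2 z_k$; the parity of the resulting matchgate is then forced, since an even matchgate cannot have a nonzero odd-weight entry.

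The main obstacle is the converse. The forward implication is bookkeeping once Lemma~\ref{even-geometric-series-lemma} is available, but producing a \emph{planar, odd} matchgate with exactly the symmetric signature $[0,z_1,\dots,z_m]$ --- adding an odd number of vertices while respecting the outer-face embedding and not disturbing symmetry, and separately dealing with the truncated progressions and the small arities --- is where the genuine effort goes. If one instead argues through the matchgate identities, the difficulty migrates to the purely algebraic verification that, for a symmetric single-parity signature, that whole family of quadratic identities reduces to the single linear two-term recurrence; that computation is elementary but somewhat intricate.
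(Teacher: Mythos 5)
The paper does not prove this lemma at all --- it states it together with Lemma~\ref{even-geometric-series-lemma} as two results imported verbatim from the reference \cite{MGI} --- so there is no in-paper argument to compare against. Judged on its own merits, your forward direction is correct and quite clean: the parity argument gives $z_i=0$ for even $i$, and pinning one dangling edge with a single fresh vertex (the Exact-One gate of arity~1) yields an \emph{even} matchgate realizing $[z_1,\dots,z_m]$, to which Lemma~\ref{even-geometric-series-lemma} applies; re-indexing the resulting recurrence $r_1 w_{j-2}=r_2 w_j$ (over even $j$) gives exactly $r_1 z_{k-2}=r_2 z_k$ over odd $k\ge 3$.

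The converse, however, is not actually proved. Applying the converse of Lemma~\ref{even-geometric-series-lemma} to $[z_1,\dots,z_m]$ produces an even matchgate of arity $m-1$, but the ``promotion'' to an odd matchgate of arity $m$ with signature $[0,z_1,\dots,z_m]$ is the whole content of the statement, and neither of your two proposed routes closes it. In the explicit-construction route you reduce (after a diagonal weighted-edge normalization, which is fine) to realizing $[0,1,0,1,\dots]$ of arity $m$, and you offer the triangle at $m=3$; but already at $m=5$ no small planar graph works. On 5 external vertices with no internal vertices, the entry at Hamming weight~3 requires every pair of external nodes to be adjacent, i.e.\ $K_5$, which is not planar; and the natural odd-cycle and bowtie gadgets one tries instead fail to be symmetric (the 5-cycle gives different values at weight~2 depending on whether the two removed vertices are adjacent). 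So internal vertices and/or weights are genuinely needed, and no such construction is exhibited. In the matchgate-identity route, the key algebraic claim --- that on a symmetric signature with all even-weight entries zero the full system of Grassmann--Pl\"{u}cker identities collapses to the single two-term recurrence --- together with the theorem that the identities exactly characterize realizability, is asserted but not verified; this is precisely the content of the cited result. You are honest that ``this is where the genuine effort goes,'' and I agree: as written, the converse direction is a plan, not a proof.
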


In  \cite{STACS07}, we characterized
 all symmetric signatures realizable by matchgates under
 a given basis. Here we state the theorem for a particular
 basis $\begin{bmatrix}1 & 1 \\ 1 & -1 \end{bmatrix}$,
which will be  used in Theorem \ref{thm:csp}.

\begin{theorem}\label{thm:1}
A symmetric signature $[x_0,x_1,\ldots,x_n]$ is
realizable under the basis $\begin{bmatrix}1 & 1 \\ 1 & -1 \end{bmatrix}$ {\em iff} it
          takes one of the following forms:
\begin{itemize}
\item Form 1: there exist  constants $\lambda,s,t$ and
$\epsilon$ where $\epsilon=\pm 1$, such that for all $i, 0\leq i
\leq n$,
\begin{equation*}
x_i=\lambda[(s+t)^{n-i}(s-t)^i+\epsilon
(s-t)^{n-i}(s+t)^i].
\end{equation*}

\item Form 2: there exist a constant $\lambda$, such
that for all $i, 0\leq i \leq n$,
\begin{equation*}
x_i=\lambda[(n-i)(-1)^i  + i
(-1)^{i-1}].
\end{equation*}

\item Form 3: there exist a constant $\lambda$, such
that for all $i, 0\leq i \leq n$,
\begin{equation*}
x_i=\lambda[(n-2)i ].
\end{equation*}
\end{itemize}

\end{theorem}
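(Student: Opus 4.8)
The plan is to derive Theorem~\ref{thm:1} from the known descriptions of standard symmetric matchgate signatures in Lemmas~\ref{even-geometric-series-lemma} and~\ref{odd-geometric-series-lemma} by transporting those signatures through the basis change $M=\begin{bmatrix}1&1\\1&-1\end{bmatrix}$. The essential simplification is that $M^{2}=2I$, hence $M^{-1}=\tfrac12 M$; since a nonzero scalar never matters, ``$[x_0,\dots,x_n]$ is realizable under $M$'', i.e. $[x_0,\dots,x_n]=M^{\otimes n}G$ for some standard matchgate signature $G$, is equivalent to ``$M^{\otimes n}[x_0,\dots,x_n]$ is a scalar multiple of a standard matchgate signature''. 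Thus both directions of the asserted equivalence collapse to a single computation: evaluate $M^{\otimes n}$ on the list of standard symmetric matchgate signatures and recognize the three forms.

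First I would list the standard symmetric matchgate signatures. Every matchgate has an even or an odd number of vertices, so its signature is supported on inputs of a single Hamming-weight parity; Lemmas~\ref{even-geometric-series-lemma} and~\ref{odd-geometric-series-lemma} then give, up to scalar: the even-supported geometric progressions, which one checks are exactly $[1,a]^{\otimes n}+[1,-a]^{\otimes n}$ (ratio $a^{2}$), the odd-supported geometric progressions $[1,a]^{\otimes n}-[1,-a]^{\otimes n}$, and the ``boundary'' cases in which one of $r_1,r_2$ vanishes, where the signature has a single nonzero coordinate. Tracking which coordinate survives (this depends on the parity of $n$), the latter contribute only the two rank-one signatures $[1,0]^{\otimes n}$ and $[0,1]^{\otimes n}$ together with $[0,1,0,\dots,0]$ (a lone $1$ at Hamming weight $1$) and $[0,\dots,0,1,0]$ (a lone $1$ at Hamming weight $n-1$).

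Next I would push each family through $M^{\otimes n}$, using $M[1,a]^{\top}=[1+a,\,1-a]^{\top}$ so that $M^{\otimes n}\bigl([1,a]^{\otimes n}\pm[1,-a]^{\otimes n}\bigr)=[1+a,\,1-a]^{\otimes n}\pm[1-a,\,1+a]^{\otimes n}$. After the substitution $s+t=1+a$, $s-t=1-a$ and reinstating an overall scalar to cover all $s,t$, this is exactly Form~1 with $\epsilon=\pm1$; the rank-one signatures $[1,0]^{\otimes n}$ and $[0,1]^{\otimes n}$ go to $[1,1]^{\otimes n}$ and $[1,-1]^{\otimes n}$, the $t=0$ and $s=0$ specializations of Form~1. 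A short coordinatewise computation (or the identity $M^{\otimes n}M^{\otimes n}=2^{n}I$) then evaluates $M^{\otimes n}$ on $[0,1,0,\dots,0]$ and on $[0,\dots,0,1,0]$, yielding the signatures whose $i$-th coordinates are $n-2i$ and $(-1)^{i}(n-2i)$ --- i.e. Forms~3 and~2. This proves that every signature realizable under $M$ has one of the three forms. For the converse, apply $M^{\otimes n}$ to an arbitrary signature of Form~1,~2, or~3 and use $M^{\otimes n}M^{\otimes n}=2^{n}I$ to see that the result is again one of the matchgate signatures listed above, so the original signature is realizable under $M$.

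The only delicate part is the case analysis in the second step: correctly following the degenerate branches $r_1=0$ and $r_2=0$ of the two geometric recurrences and their interaction with the parity of $n$, and recognizing that these ``confluent'' matchgate signatures are exactly what produce the non-geometric Forms~2 and~3, while the genuine geometric progressions account for all of Form~1 (including its rank-one specializations). Everything else is routine tensor bookkeeping, and it is light precisely because $M$ is, up to a scalar, its own inverse.
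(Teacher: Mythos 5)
The paper does not actually prove Theorem~\ref{thm:1}; it quotes it from \cite{STACS07}, so there is no in-paper proof to compare against. Your reconstruction is correct and follows the natural route. The crucial simplification --- that $M^2=2I$ makes ``$[x_0,\dots,x_n]$ is realizable under $M$'' equivalent to ``$M^{\otimes n}[x_0,\dots,x_n]$ is, up to a scalar, a standard symmetric matchgate signature'' --- is right, and it does collapse both directions of the iff to a single forward computation of $M^{\otimes n}$ on the list of signatures permitted by Lemmas~\ref{even-geometric-series-lemma} and \ref{odd-geometric-series-lemma}. Your casework on the recurrence $r_1 z_{k-2}=r_2 z_k$ is also complete: if $r_1 r_2\neq 0$ the surviving-parity entries form a genuine geometric progression, which (together with the $r_2=0$ endpoint at index $0$ or $n$) is exactly the family $\lambda\bigl([s,t]^{\otimes n}+\epsilon[s,-t]^{\otimes n}\bigr)$; pushing these through $M^{\otimes n}$ via $M[s,t]^{\mathsf T}=[s+t,\,s-t]^{\mathsf T}$ gives precisely Form~1. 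The only remaining matchgate signatures are the two ``lone middle'' ones with a single nonzero entry at Hamming weight $1$ or $n-1$, and applying $M^{\otimes n}$ to those (for instance by writing $[0,1,0,\dots,0]$ as $\sum_j e_0^{\otimes(j-1)}\otimes e_1\otimes e_0^{\otimes(n-j)}$ and using $Me_0=[1,1]^{\mathsf T}$, $Me_1=[1,-1]^{\mathsf T}$) yields $x_i=n-2i$ and $x_i=(-1)^i(n-2i)$ respectively, which you correctly identify as Forms~3 and~2.

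One discrepancy worth recording: your computation gives Form~3 as $x_i\propto n-2i$, whereas the theorem as printed reads $x_i=\lambda[(n-2)i]$, i.e.\ $\lambda(n-2)\,i$. The printed version cannot be right --- for example $[0,1,2,\dots,n]$ satisfies $M^{\otimes n}[0,1,\dots,n]=2^{n-1}[n,-1,0,\dots,0]$, which has nonzero entries of both parities and so fails the matchgate parity condition --- so the statement must contain a misplaced parenthesis and the intended Form~3 is $x_i=\lambda(n-2i)$, exactly what your derivation produces. Your proof is therefore correct and in fact corrects a typographical error in the stated theorem.
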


\section{Polynomial Interpolation}\label{sec:interpolation}
%

In this section, we discuss the interpolation method  we  will
use in this paper.
Polynomial interpolation is a
powerful tool in the study of counting problems initiated by
Valiant~\cite{Valiant79b} and further developed by Vadhan, Dyer and
Greenhill~\cite{Vadhan01,DyerG00} and others. The method we use here is
essentially the same as Vadhan~\cite{Vadhan01}.

For some set of signatures ${\cal F}$, suppose
we want to show that for all
unary signatures $f=[x,y]$, we have ${\rm Holant}({\cal F}\cup
\{[x,y]\}) \leq_{\tt T}  {\rm Holant}({\cal F})$. Let $\Omega = (G, {\cal
F}\cup \{[x,y]\}, \pi)$.  We want to compute ${\rm Holant}_{\Omega}$
in polynomial time  using an oracle for ${\rm Holant}({\cal F})$.

Let $V_f$ be the subset of vertices in $G$ assigned $f$ in $\Omega$.
Suppose $|V_f|=n$.  We can classify all 0-1 assignments $\sigma$ in
the Holant sum according to how many vertices in $V_f$ whose
incident edge is assigned a 0 or a 1.
 Then the Holant value can be expressed as
\begin{equation}\label{holant-as-sum-cijk}
{\rm Holant}_{\Omega} = \sum_{0 \le i \le n}c_{i}x^i y^{n-i},
\end{equation}
where $c_{i}$ is the  sum over all edge assignments $\sigma$, of
products of evaluations at all $v \in V(G) - V_f$, where $\sigma$ is
such that exactly $i$ vertices in $V_f$ have their incident edges
assigned 0 (and $n-i$ have  their incident edges assigned 1.) If we
can evaluate these $c_{i}$, we can evaluate ${\rm Holant}_\Omega$.

Now suppose $\{G_s\}$ is a sequence of ${\cal F}$-gates, and each
$G_s$  has one dangling edge. Denote the signature of $G_s$ by
 $f_s = [x_s,y_s]$, for $s=0, 1, \ldots$. If we
replace each occurrence of $f$ by $f_s$ in $\Omega$ we get  a new
signature grid $\Omega_s$, which is an instance of ${\rm
Holant}({\cal F})$, with
\begin{equation}\label{interpolation-linear-sys}
{\rm Holant}_{\Omega_s} = \sum_{0 \le i \le n}c_{i}x_s^iy_s^{n-i}.
\end{equation}
One can evaluate ${\rm Holant}_{\Omega_s}$ by oracle access to ${\rm
Holant}({\cal F})$. Note that the same set of values $c_{i}$ occurs.
We can treat $c_{i}$ in (\ref{interpolation-linear-sys}) as a set of
unknowns in a linear system.
The idea of interpolation is to find a suitable sequence $\{f_s \}$
such that the  evaluation of ${\rm Holant}_{\Omega_s}$ gives a
linear system (\ref{interpolation-linear-sys}) of full rank, from
which we can solve all $c_{i}$.

In this paper, the sequence  $\{G_s\}$ will be constructed
recursively using suitable gadgetry. There are two gadgets in a
recursive construction: one gadget has arity 1, giving the initial
signature $g=[x_0, y_0]$; the other has arity 2, giving the
recursive iteration. It is more convenient to use a $2\times 2$
matrix $A$ to denote it.
 So we can
recursively connect them as in Figure \ref{Figure:gA} and get
$\{G_s\}$.

\begin{figure}[htbp]
\begin{center}
\includegraphics[width=3 in]{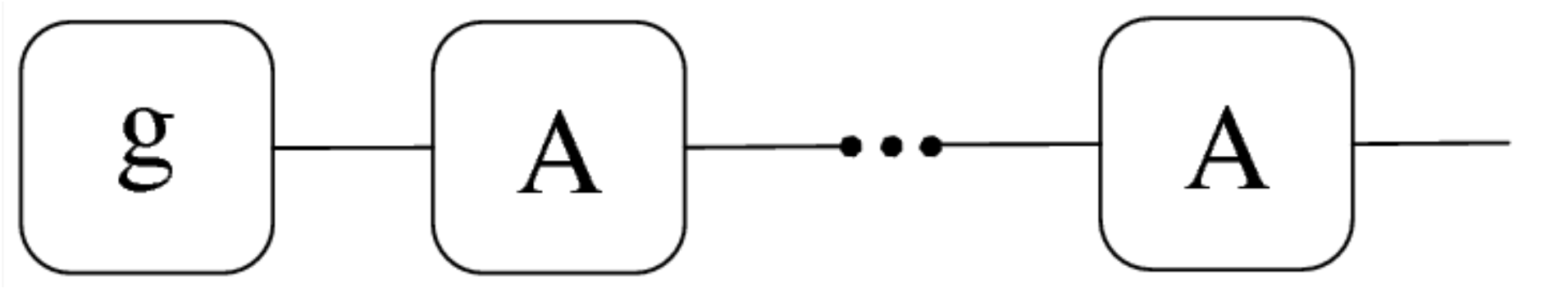}
\caption{Recursive construction.}
\label{Figure:gA}
\end{center}
\end{figure}


The signatures of $\{G_s\}$ have the following relation,
\begin{equation}\label{sig-rec-rel-matrix}
\begin{bmatrix} x_{s}\\ y_{s} \end{bmatrix}=
\begin{bmatrix} a_{11} &  a_{12}    \\
    a_{21}   &  a_{22}   \end{bmatrix}
    \begin{bmatrix} x_{s-1}\\ y_{s-1} \end{bmatrix},
\end{equation}
where $A=\begin{bmatrix} a_{11} &  a_{12}    \\  a_{21}   & a_{22}
\end{bmatrix}$ and $g = \begin{bmatrix} x_0 \\ y_0 \end{bmatrix}$.

We call this gadget pair $(A,g)$ a recursive construction. It
follows from Lemma 6.1 in \cite{Vadhan01} that

\begin{lemma}\label{lemma:interpolation_conditions}
Let $\alpha,\beta$ be the two eigenvalues of $A$. If the following
three conditions are satisfied
\begin{enumerate}
  \item $\det(A)\neq 0$;
  \item $g$ is not a column eigenvector of $A$ (nor the zero
vector);
  \item $\alpha/\beta$ is not a root of unity;
\end{enumerate}
then the recursive construction $(A,g)$ can be
used to interpolate all  unary signatures.
\end{lemma}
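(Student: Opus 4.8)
The plan is to reduce the statement to a Vandermonde-type linear-algebra fact and then check that hypotheses (1)--(3) are exactly what makes that fact applicable. Fix a target unary signature $[x,y]$ and an instance $\Omega=(G,{\cal F}\cup\{[x,y]\},\pi)$ with $n=|V_f|$ vertices labeled $[x,y]$. As in (\ref{holant-as-sum-cijk}), ${\rm Holant}_\Omega=\sum_{i=0}^n c_i x^i y^{n-i}$; replacing each occurrence of $[x,y]$ by the gadget signature $f_s=[x_s,y_s]$ yields an instance $\Omega_s$ of ${\rm Holant}({\cal F})$ with ${\rm Holant}_{\Omega_s}=\sum_{i=0}^n c_i x_s^i y_s^{n-i}$, and by (\ref{sig-rec-rel-matrix}) we have $[x_s,y_s]^{\mathsf T}=A^s g$. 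Since the gadgets $G_s$ for $s\le n$ have size $O(n)$, it suffices to find $n+1$ indices $s_0,\dots,s_n$ for which the $(n+1)\times(n+1)$ matrix $M=\bigl(x_{s_j}^{\,i}\,y_{s_j}^{\,n-i}\bigr)_{0\le j,i\le n}$ is nonsingular: we query the oracle on $\Omega_{s_0},\dots,\Omega_{s_n}$, solve the linear system for $c_0,\dots,c_n$, and output $\sum_i c_i x^i y^{n-i}$.

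\emph{Step 1 (a Vandermonde / rational-normal-curve criterion).} I claim $M$ is invertible as soon as the vectors $v_{s_0},\dots,v_{s_n}$, $v_s=[x_s,y_s]$, are pairwise linearly independent. Indeed a nonzero vector $(c_0,\dots,c_n)$ in the kernel of $M^{\mathsf T}$ would give a nonzero binary form $F(X,Y)=\sum_i c_i X^i Y^{n-i}$ of degree $n$ vanishing at the $n+1$ projectively distinct points $(x_{s_j}:y_{s_j})\in\mathbb P^1$; but a nonzero degree-$n$ binary form has at most $n$ zeros in $\mathbb P^1$, a contradiction. (Concretely, at indices with $y_{s_j}\ne0$ one factors out $\mathrm{diag}(y_{s_j}^{\,n})$ to get an ordinary Vandermonde matrix in $x_{s_j}/y_{s_j}$.)

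\emph{Step 2 (pairwise independence from the hypotheses).} It remains to show that all the $v_s=A^s g$, $s\ge0$, are pairwise non-proportional; then $s_0,\dots,s_n=0,1,\dots,n$ works. Since $1$ is a root of unity, hypothesis (3) forces $\alpha\ne\beta$, so $A$ is diagonalizable; fix an eigenbasis and write $g$ in it as $(u,v)$. Hypothesis (1) gives $\alpha\beta\ne0$ (in particular each $A^s g\ne0$, so every $f_s$ is a genuine signature), and hypothesis (2) — $g$ is neither the zero vector nor an eigenvector — is equivalent, in the eigenbasis, to $u\ne0$ and $v\ne0$. Then $A^s g=(\alpha^s u,\beta^s v)$, whose projective class is determined by the finite nonzero ratio $(\alpha/\beta)^s(u/v)$. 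If $v_s$ and $v_t$ were proportional with $s\ne t$, then $(\alpha/\beta)^{s-t}=1$, contradicting that $\alpha/\beta$ is not a root of unity. Hence the $v_s$ are pairwise non-proportional, completing the proof.

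The only real content is Step 2, which is also where a careless argument would slip: one must use hypothesis (1) to know $g\ne0$ and $\alpha,\beta\ne0$ so that the eigenbasis ratios are well defined, translate "not an eigenvector'' correctly into "both eigen-coordinates nonzero,'' and — if one insists on the plain Vandermonde form in Step 1 rather than the binary-form phrasing — note that the point $(1:0)$ occurs among the $(x_s:y_s)$ for at most one $s$, so it can be avoided. This is essentially Lemma~6.1 of~\cite{Vadhan01}; I expect no obstacle beyond bookkeeping these degenerate cases.
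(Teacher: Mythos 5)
Your proof is correct; the paper itself gives no proof of this lemma, simply attributing it to Lemma~6.1 of Vadhan~\cite{Vadhan01}, so you have supplied a self-contained argument where the paper has none. Your route is the standard one: reduce to invertibility of the generalized Vandermonde matrix, and show the gadget signatures give $n+1$ pairwise non-proportional vectors. The translation in Step~2 of the three hypotheses into ``$\alpha\neq\beta$, $\alpha\beta\neq0$, both eigen-coordinates of $g$ nonzero, hence $(\alpha/\beta)^{s-t}=1$ forced'' is exactly the right bookkeeping, and the binary-form phrasing of Step~1 cleanly absorbs the one possible point at $(1:0)$. Two tiny cosmetic slips, neither of which affects correctness: with the indexing $M=(x_{s_j}^{\,i}y_{s_j}^{\,n-i})_{j,i}$ the relevant object is the kernel of $M$ rather than $M^{\mathsf T}$ (irrelevant since $M$ is square), and it is worth stating explicitly that each $G_s$ with $s\le n$ is constructible in time polynomial in $n$ (it is, being a chain of $s$ copies of the arity-2 gadget), which is what licenses the oracle reduction.
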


A  similar interpolation method also works for signatures with
larger arity but  have two degrees of freedom. For example,
all signatures of form $[0,x,0,y]$. This  is used in
the proof of Lemma \ref{lemma:010x}.

\section{Dichotomy for Planar Holant$^c$ Problems}\label{section:holant}
Before presenting the
 main dichotomy theorem
for planar Holant$^c$ problems, we prove the
following theorem, which plays a crucial role in the proof of the
main theorem.

\begin{theorem}\label{thm:a010b}
Let $a, b \in  {\mathbb R}$.
\begin{itemize}
\item
If  $ab \not =1$ then Pl-Holant$^c([a,0,1,0,b])$ is \#P-hard.
\item
If  $ab =1$ then Pl-Holant$^c([a,0,1,0,b])$ is solvable in P.
\end{itemize}
\end{theorem}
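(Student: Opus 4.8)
\emph{The case $ab=1$ (tractable).} I would show that every signature occurring in a Pl-Holant$^c([a,0,1,0,b])$ instance is a matchgate signature, so that the FKT algorithm applies. Since $ab=1$ forces $a\neq0$, Lemma~\ref{even-geometric-series-lemma} makes $[a,0,1,0,b]$ the standard signature of an even matchgate iff there exist $r_1,r_2$, not both zero, with $r_1a=r_2$ and $r_1=r_2b$; the choice $(r_1,r_2)=(1,a)$ works precisely because $ab=1$. The unary signatures $[1,0]$ and $[0,1]$ are matchgate signatures as well (a single edge; a length-two path with one external endpoint). Hence each vertex signature can be replaced by a planar matchgate, producing one planar graph whose sum of weighted perfect matchings is computed in polynomial time by FKT.

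\emph{The case $ab\neq1$, Step 1: hardness on general graphs.} I would derive \#P-hardness without the planarity restriction from the Holant$^c$ dichotomy, Theorem~\ref{thm-holnat-c}, applied to $\mathcal F=\{[a,0,1,0,b]\}$. This signature has arity $4$ and is non-degenerate, so Class~1 of Theorem~\ref{thm-holant-star} fails; writing $\alpha,\beta$ for the two constants of Class~2, its middle recurrence instance forces $\beta=0$, whence $\alpha\neq0$ and the remaining instances force $a=b=1$; the Class~3 recurrence $z_k+z_{k+2}=0$ forces $a=b=-1$. Checking the explicit list of real symmetric signatures in $\mathcal F_1\cup\mathcal F_2\cup\mathcal F_3$, the only arity-$4$ members vanishing at the two odd positions are, up to a scalar, $[1,0,1,0,1]$ and $[1,0,-1,0,1]$, again $(a,b)\in\{(1,1),(-1,-1)\}$. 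Every exception has $ab=1$, so when $ab\neq1$ neither clause of Theorem~\ref{thm-holnat-c} applies and Holant$^c([a,0,1,0,b])$ is \#P-hard on general graphs.

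\emph{Step 2: planarizing.} It then suffices to realize a \emph{crossover} — the arity-$4$ signature equal to $1$ exactly on assignments with $x_1=x_3$ and $x_2=x_4$ — as a \emph{planar} $\mathcal F$-gate over $\{[a,0,1,0,b],[1,0],[0,1]\}$: replacing each crossing in a plane drawing of a general Holant$^c([a,0,1,0,b])$ instance by it yields Pl-Holant$^c([a,0,1,0,b])\equiv_T$ Holant$^c([a,0,1,0,b])$, which is \#P-hard by Step 1. The building material comes from interpolation (Section~\ref{sec:interpolation}): connecting two edges of $[a,0,1,0,b]$ to the two edges of a binary signature $[p,0,q]$ produces $[ap+q,\,0,\,p+bq]$, i.e.\ one step of the linear recursion with matrix $B=\left(\begin{smallmatrix}a&1\\1&b\end{smallmatrix}\right)$, whose determinant is exactly $ab-1\neq0$ — so the hypothesis $ab\neq1$ is precisely the non-degeneracy this needs. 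Starting from the pinned binary signatures $[a,0,1]$, $[1,0,b]$ (neither an eigenvector of $B$ when $ab\neq1$), together with the disequality $[0,1,0]$ (pin $[a,0,1,0,b]$ at $(\cdot,\cdot,0,1)$) and the ternary family $[0,1,0,x]$ interpolated as in Lemma~\ref{lemma:010x}, one gets a large supply of auxiliary signatures, which together with further planar compositions of $[a,0,1,0,b]$ realize the crossover, the crucial identity being checked algebraically rather than by any perfect-matching count.

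\emph{Where the difficulty lies.} The crux is exactly this planar crossover construction. Every $\mathcal F$-gate over $\{[a,0,1,0,b],[1,0],[0,1]\}$ has its signature supported on assignments of a single Hamming-weight parity — each internal $[a,0,1,0,b]$ contributes an even amount to the total edge-weight and each $[0,1]$ flips the parity — so the only realizable symmetric binary signatures are diagonal $[p,0,q]$ or anti-diagonal $[0,q,0]$; in particular one cannot interpolate all unary signatures and merely quote the planar Holant$^*$ dichotomy (Theorem~\ref{thm-holant-star}). The crossover has even support and so is not blocked by this parity constraint, but producing it needs a construction with no transparent combinatorial meaning whose validity rests on an algebraic identity — this is the hard part. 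One must also separately handle the parameter values where the generic interpolation degenerates despite $ab\neq1$, chiefly $a+b=0$ (where $B$ has eigenvalue ratio $-1$, including the subcase $a=b=0$), via ad hoc gadgets or alternative initial signatures.
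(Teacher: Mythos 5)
Your outline of the case split matches the paper exactly: for $ab=1$ you invoke Lemma~\ref{even-geometric-series-lemma} to show $[a,0,1,0,b]$, $[1,0]$ and $[0,1]$ are all matchgate-realizable and finish with FKT; for $ab\neq1$ you establish \#P-hardness on general graphs (your reading of the constraints from Theorem~\ref{thm-holnat-c} is right: the only exceptions $a=b=\pm1$ both lie in $ab=1$), and then try to realize a planar crossover so as to reduce the general problem to the planar one. You also correctly identify the parity obstruction that kills the easy route through the Holant$^*$ dichotomy, and your observation that contracting two legs of $[a,0,1,0,b]$ against $[p,0,q]$ acts as $\bigl(\begin{smallmatrix}a&1\\1&b\end{smallmatrix}\bigr)$ on $(p,q)^{\tt T}$, with determinant exactly $ab-1$, is a nice closed-form reformulation.

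However, there is a genuine gap at precisely the step you yourself flag as ``the hard part.'' Having a supply of binary signatures $[p,0,q]$, disequality $[0,1,0]$, and ternary signatures $[0,1,0,x]$ does not, by itself, yield the crossover: you need to exhibit a concrete planar gadget and prove algebraically that its signature has the required values $X_{0000}=X_{0101}=X_{1010}=X_{1111}$, all nonzero, and $X_\alpha=0$ on every other $\alpha$. The paper does this by first interpolating all $[x,0,y,0]$ (via Lemma~\ref{lemma:interpolate010x} in the generic case, and an explicit exceptional gadget for the finitely many $\{a,b\}\subset\{-1,0,1\}$, with $a=b=0$ handled separately as Pl-Holant$^c([0,0,1,0,0])$ in Lemma~\ref{lemma:00100}), then scaling the central signature to $[t,0,1,0,c/t]$ with $c=ab$, and finally verifying a nontrivial system of polynomial identities in three free parameters $x,y,t$ whose solvability is shown to hold \emph{exactly when} $c\neq1$. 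Your plan stops at ``further planar compositions realize the crossover''; nothing in what you've built — the linear recursion, the pinned signatures, Lemma~\ref{lemma:010x} — gives a candidate arity-4 gadget, let alone an argument that it equals the cross function, and there is no reason to expect a generic composition to produce a signature whose support avoids $\{0011,0110,1001,1100\}$ while being nonzero on $\{0101,1010\}$. In short, the proposal is a correct scaffolding around a missing core: the specific crossover gadget and the algebraic verification that it works if and only if $ab\neq1$, which is what the theorem actually turns on.
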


We first prove three
 lemmas which will be used in the proof of this theorem.

\begin{lemma} \label{thm:a000b-and-010x}
Let $a, b, x \in {\mathbb R}$,
 $ab\neq 0$ and $x\neq \pm 1$. Then Pl-Holant$^c(\{[a,0,0,0,b],[0,1,0,x]\})$ is \#P-hard.
\end{lemma}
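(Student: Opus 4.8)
The plan is to reduce from a known \#P-hard problem, most naturally from a Holant$^*$ or Holant$^c$ instance that Theorem~\ref{thm-holant-star} or Theorem~\ref{thm-holnat-c} already certifies as hard, and to do so by using the two given signatures $[a,0,0,0,b]$ (a symmetric arity-4 signature that is nonzero only on Hamming weights $0$ and $4$) together with $[0,1,0,x]$ (a symmetric arity-3 signature nonzero only on weights $1$ and $3$), plus the freely available pinning signatures ${\bf 0}=[1,0]$ and ${\bf 1}=[0,1]$ from the Holant$^c$ setting. First I would observe that pinning a dangling edge of $[0,1,0,x]$ to $0$ yields the binary signature $[1,0,x]$, and pinning it to $1$ yields $[0,0,x]=x\cdot[0,0,1]$, i.e.\ up to a scalar the binary signature $[0,0,1]$; similarly pinning edges of $[a,0,0,0,b]$ gives lower-arity signatures such as $[a,0,0,0]\sim[1,0,0,0]$ and $[0,0,0,b]\sim[0,0,0,1]$. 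These pinned signatures are the basic building blocks, and the arity-4 signature $[a,0,0,0,b]$ itself behaves like a ``generalized equality'' $\{=_4\}$ weighted asymmetrically between $0^4$ and $1^4$.

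The key construction I would aim for is to build, out of $[a,0,0,0,b]$ and copies of $[0,1,0,x]$ (and pinnings), an \FF{}-gate whose signature is a degenerate-free symmetric signature of arity $\geq 3$ that falls outside all three tractable classes of Theorem~\ref{thm-holant-star}, or alternatively to directly interpolate a ``bad'' unary or binary signature and then invoke Theorem~\ref{thm-holnat-c}. Concretely: connecting two copies of $[0,1,0,x]$ at one edge each produces a symmetric arity-4 signature, and connecting $[0,1,0,x]$ to $[a,0,0,0,b]$ along one edge produces an arity-5 or arity-3 signature depending on how many edges are merged; iterating with the arity-2 pinned pieces $[1,0,x]$ should give a recursive construction $(A,g)$ in the sense of Section~\ref{sec:interpolation}. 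I would compute the $2\times 2$ iteration matrix $A$ (its entries will be polynomials in $a,b,x$), check that under the hypotheses $ab\neq 0$ and $x\neq\pm1$ the three conditions of Lemma~\ref{lemma:interpolation_conditions} hold — $\det A\neq 0$, $g$ not an eigenvector, and eigenvalue ratio not a root of unity — and conclude we can interpolate all unary signatures. Once all unary signatures are available for free we are in the Holant$^*$ regime, and $[0,1,0,x]$ with $x\neq\pm1$ together with $[a,0,0,0,b]$ with $ab\neq0$ should fail every criterion of Theorem~\ref{thm-holant-star}, giving \#P-hardness; if the pure unary interpolation stalls for special $(a,b,x)$, I would instead interpolate a binary signature and finish via Theorem~\ref{thm-holnat-c}, handling the (finitely many, algebraically constrained) exceptional parameter values by separate ad hoc gadgets.

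The main obstacle I anticipate is the root-of-unity condition in Lemma~\ref{lemma:interpolation_conditions}: for the specific iteration matrix $A$ produced by the natural gadget, the eigenvalue ratio $\alpha/\beta$ is an algebraic function of $a,b,x$, and ruling out that it is a root of unity for \emph{all} real $a,b$ with $ab\neq0$ and all real $x\neq\pm1$ may require either a clever choice of gadget making $A$ upper/lower triangular (so the eigenvalues are read off directly), or a transcendence/growth argument — e.g.\ showing $|\alpha|\neq|\beta|$ so the ratio cannot lie on the unit circle — or a case split isolating the measure-zero set of parameters where the ratio \emph{is} a root of unity and dispatching those by a different reduction. A secondary difficulty is maintaining planarity throughout: every gadget in the recursive construction and every pinning must be drawable in the plane with the dangling edges in the correct cyclic order, which constrains which edges may be merged; I would choose chain-like (``linear'') gadgets precisely so that planarity is automatic. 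Finally, once hardness of the interpolated base problem is reached, I must make sure the target problem it reduces from is \#P-hard \emph{on planar graphs}, which is why anchoring the reduction in the planar-valid part of Theorem~\ref{thm-holant-star} (whose dichotomy explicitly also holds for Pl-Holant$^*$) is the safe route.
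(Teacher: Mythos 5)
Your plan has a fundamental obstruction that you do not notice: both $[a,0,0,0,b]$ and $[0,1,0,x]$, and also the free pinnings $[1,0]$ and $[0,1]$, satisfy the \emph{parity condition} (they belong to ${\cal G}_2$, i.e.\ each is supported either entirely on even Hamming weights or entirely on odd Hamming weights). Any planar ${\cal F}$-gate built from parity-condition signatures again has the parity condition. In particular, every unary signature you can realize or interpolate from these pieces is forced to be of the form $[c,0]$ or $[0,c]$. This is not a ``finitely many exceptional parameter values'' issue that can be dispatched ad hoc — it kills the unary-interpolation strategy for \emph{all} admissible $a,b,x$. The paper itself flags exactly this obstruction later (just before the definitions of ${\cal G}_1,{\cal G}_2,{\cal G}_3$), which is why this lemma cannot be proved the way you propose. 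Your fallback, ``interpolate a binary signature and invoke Theorem~\ref{thm-holnat-c}'', runs into the same wall: binary parity-condition signatures are of the form $[p,0,q]$ or $[0,p,0]$, which never satisfy the $b\neq 0$ and $a+c\neq 0$ conditions needed by Lemma~\ref{lemma:interpolation} to bootstrap full unary interpolation.

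The paper's proof sidesteps parity entirely by working with \emph{bundled pairs} of edges rather than individual edges. Concretely it first manufactures $(=_6)=[1,0,0,0,0,0,1]$ from $[a,0,0,0,b]$ (directly if $a^2=b^2$, by a Vandermonde interpolation over chains $[a^{2i+1},0,0,0,b^{2i+1}]$ to get $(=_4)$ first if $a^2\neq b^2$). Treating each pair of adjacent edges at a degree-$6$ vertex as one ``thick'' variable over the restricted domain $\{(0,0),(1,1)\}$ turns $(=_6)$ into a ternary equality $(=_3)$ on bundles. Two copies of $F=[0,1,0,x]$ joined along one edge form a gadget whose restriction to bundled inputs is the symmetric binary signature $[1,1,x^2]$ on the thick domain; chaining gives $[1,1,x^{2k}]$, and since $x^2\neq 1$ one interpolates $[1,1,0]$. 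This reduces Pl-Holant$([1,0,0,1]\mid[1,1,0])$ — counting vertex covers on planar $3$-regular graphs, known to be \#P-hard — to the target problem. The bundling trick is the key idea you are missing: it lets one effectively escape the parity regime without ever producing a forbidden unary signature.
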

\begin{proof}

Firstly, we show how to realize $(=_6)  = [1,0,0,0,0,0,1]$ by $[a,0,0,0,b]$.
 $[a,0,0,0,b]$ can be attached to a vertex of degree 4.
We can connect 3 pairs of edges
 of two copies of $[a,0,0,0,b]$ to realize the   binary function
$[a^2,0,b^2]$.

If $a^2  = b^2$, then we connect one pair of
edges from two copies of
$[a,0,0,0,b]$ to get
$[a^2,0,0,0,0,0,b^2]$.
This is the same as  $(=_6)  = [1,0,0,0,0,0,1]$
after factoring out the non-zero factor $a^2  = b^2$.

If $a^2 \not = b^2$, then  we connect  $[a,0,0,0,b]$ with a
chain of $[a^2,0,b^2]$ of  length $i$ to get
$[a^{2i+1},0,0,0,b^{2i+1}]$. Because for any $i \neq j$,
$a^{2i+1}/b^{2i+1} \neq a^{2j+1}/b^{2j+1}$, we can realize
$(=_4) = [1,0,0,0,1]$ using polynomial interpolation,
as follows.
Consider any signature grid  on a planar
graph $G$ with $n$ occurrences of $=_4$
together with some other signatures. Let $x_{k,\ell}$ be the
sum, over all 0-1 edge assignments $\sigma$,
 of the products of all other vertex
function values in $G$ except at $n$ vertices with $=_4$,
 where $k,  \ell \ge 0$ and $k+ \ell=n$,
{\it and} in $\sigma$ exactly  $k$ occurrences of $=_4$ have
input 0,
and exactly  $\ell$  occurrences of $=_4$ have input 1.
The Holant value is $\sum_{k + \ell = n} x_{k,\ell}$.
Now substitute each occurrence of $=_4$ by $[a^{2i+1},0,0,0,b^{2i+1}]$.
The new signature grid has Holant value $\sum_{k + \ell = n} x_{k,\ell}
(a^{k} b^{\ell})^{2i+1}$.  This gives a Vandermonde system
from which we solve for $x_{k,\ell}$.
Now we have $=_4$.
Then we connect two copies of
$=_4$ on one pair of edges to get   $=_6$.

Take a vertex of degree 6 in a planar graph attached with
$=_6$, where the 6 incident edges are its variables.
We will bundle two adjacent variables to form 3 bundles
of 2 edges each.
Then if the inputs are restricted to $\{(0,0),(1,1)\}$
on each bundle, then the function takes value 1 on
 $((0,0),(0,0),(0,0))$ and
$((1,1),(1,1),(1,1))$, and takes value 0 elsewhere.
Thus if we restrict the domain to
$\{(0,0),(1,1)\}$, it is the  ternary {\sc Equality} function $=_3$.

Let $F=[0,1,0,x]$ and let $H(x_1,x_2,y_1,y_2)=\sum_{z=0,1}
F(x_1,y_1,z)F(x_2,y_2,z)$. This $H$ is realizable
by connecting one pair of edges of two copies of $F$.
(See Figure \ref{fig-H}.)
\begin{figure}[htbp]
\begin{center}
\includegraphics[width=0.5\textwidth]{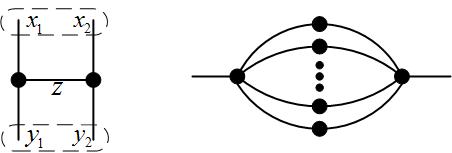}
\caption{ The gadget for function $H$ and $[1,1,x^{2k}]$. }
\label{fig-H}
\end{center}
\end{figure}
We will consider $H$ as a function in
$(x_1,x_2)$ and $(y_1,y_2)$. However we will only
connect $H$ externally by connecting $(x_1,x_2)$
and  $(y_1,y_2)$ to some bundle of two adjacent edges of some  $=_6$.
Since  $=_6$ enforces the values on the bundle to be
either $(0,0)$ or $(1,1)$,
we will only be interested in the restriction of  $H$ to
the domain $\{(0,0),(1,1)\}$.
On this domain, $H$ is a {\it symmetric} function of arity 2,
and can be denoted as $[1,1,x^2]$.
(Note that  $H$ is {\it not} a symmetric function of arity 4 on $\{0,1\}$,
as $H(0,1,0,1) = x$.)


Now we have reduced Pl-Holant$^c(\{[1,0,0,1],[1,1,x^2]\})$ to
Pl-Holant$^c(\{[a,0,0,0,b],[0,1,0,x]\})$.

Using $(=_3) = [1,0,0,1]$, we can realize the {\sc Equality}
function $=_k$ of any arity $k \ge 3$. Then we can realize
$[1,1,x^{2k}]$, for all $k \ge 1$. (See Figure \ref{fig-H}.) If $x
=0$, then we already have $[1,1,0]$.

Suppose $x \not = 0$.
 Because $x^2 \neq
1$ and being a positive real number,
we can realize $[1,1,0]$ by interpolation. Now we
have reduced the problem Pl-Holant$([1,0,0,1] \mid [1,1,0])$
to Pl-Holant$^c(\{[1,0,0,1],[1,1,x^2]\})$.
The bipartite problem
Pl-Holant$([1,0,0,1] \mid [1,1,0])$ is  \#P-hard since it is
counting {\sc Vertex Covers} on planar 3-regular graphs~\cite{XiaZZ07}.
\end{proof}

The following lemma handles a special case of  Theorem~\ref{thm:a010b}.
The proof uses Lemma~\ref{thm:a000b-and-010x}.

\begin{lemma}\label{lemma:00100}
Pl-Holant$^c([0,0,1,0,0])$ is \#P-hard.
\end{lemma}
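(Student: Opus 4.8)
The plan is to reduce from the hard problem established in Lemma~\ref{thm:a000b-and-010x}. We already know that Pl-Holant$^c(\{[a,0,0,0,b],[0,1,0,x]\})$ is \#P-hard whenever $ab\neq 0$ and $x\neq\pm 1$. So it suffices to realize, starting from $[0,0,1,0,0]$ together with the free unary signatures ${\bf 0}=[1,0]$ and ${\bf 1}=[0,1]$, both a signature of the shape $[a,0,0,0,b]$ with $ab\neq 0$ and a signature of the shape $[0,1,0,x]$ with $x\neq\pm 1$ (up to nonzero scalars). Since both target signatures and the source signature are symmetric and live on $\{0,1\}$, all the gadgetry is planar-friendly: we only ever connect edges of copies of $[0,0,1,0,0]$ (a degree-4 vertex) in series/parallel patterns that can be laid out in the plane.

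**First I would** look at what $[0,0,1,0,0]$ does combinatorially: attached to a degree-4 vertex, it evaluates to $1$ exactly when the incident edges carry Hamming weight $2$, and $0$ otherwise. Connecting two copies along one edge (self-loop-free series connection) and summing over the shared edge gives a ternary signature; a short calculation of the weight-$j$ entries should yield something of the form $[0,1,0,\ast]$ or $[0,\ast,0,1]$ — i.e., an odd signature with the right support pattern for the $[0,1,0,x]$ family. More precisely, joining two degree-4 ``weight-2 selector'' vertices on a single edge forces the remaining $3+3$ external edges to have weights summing appropriately; I expect the resulting arity-3 signature to be a scalar multiple of $[0,1,0,c]$ for some explicit constant $c$. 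If $c\neq\pm1$ we are halfway done; if it happens to equal $\pm1$, then chaining more copies (building a sequence $[0,1,0,c^k]$ or similar via interpolation-style gadget composition, exactly as in the proof of Lemma~\ref{thm:a000b-and-010x} where $[1,1,x^{2k}]$ is manufactured) should break out of the bad value, or else one pins down an edge with ${\bf 0}$ or ${\bf 1}$ to adjust parity.

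**Next**, to produce a signature $[a,0,0,0,b]$ with $ab\neq 0$: pin two of the four edges of $[0,0,1,0,0]$ using the free unaries. Setting two edges to ${\bf 0}$ leaves $[1,0,0]$ (weight-2 on the remaining two is impossible, weight-0 gives value $1$)... actually setting both pinned edges to $0$ forces the other two to total weight $2$, i.e. both $1$, giving $[0,0,1]$; setting both to $1$ forces the others to total weight $0$, giving $[1,0,0]$; setting one of each gives $[0,1,0]$. Combining such pieces with the arity-3 gadget above, and using the degree-4 structure to build higher-arity equalities once a suitable equality-type signature is in hand (mirroring the ``$=_3 \Rightarrow =_k$'' step in Lemma~\ref{thm:a000b-and-010x}), I would assemble an arity-4 signature supported on weights $\{0,4\}$ with both entries nonzero — the desired $[a,0,0,0,b]$. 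An alternative and possibly cleaner route: first realize $=_3=[1,0,0,1]$ (if the arity-3 gadget above is not already a disequality-type thing, one can sometimes get equality by a different contraction), then use $=_3$ and $[0,0,1,0,0]$ to build $[1,0,0,0,1]$, which is a legitimate $[a,0,0,0,b]$ with $a=b=1$ — then I still need a $[0,1,0,x]$ with $x\neq\pm1$, which is the genuinely load-bearing step.

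**The hard part will be** getting the parameter $x$ in the $[0,1,0,x]$ target to avoid the excluded values $\pm 1$ using only $[0,0,1,0,0]$ and the two pinning constants. The source signature is ``rigid'' — it is essentially the arity-4 second-elementary-symmetric indicator, with no free continuous parameter — so one cannot simply tune $x$; one must find a gadget whose shared-edge summation happens to land on a value like $2$ or $3$ or $0$. I expect the resolution is that the natural two-copies-joined-on-one-edge gadget already gives $x\in\{0,2,3\}$ (all $\neq\pm1$), or that an interpolation family $[0,1,0,x_s]$ with distinct ratios $x_s$ lets us invoke the same Vandermonde argument used repeatedly in Lemma~\ref{thm:a000b-and-010x} to simulate $[0,1,0,x]$ for a good $x$. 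If even that fails, the fallback is to route through a different intermediate \#P-hard problem (e.g.\ directly counting something like weighted matchings or vertex covers on planar 3-regular graphs, as invoked at the end of Lemma~\ref{thm:a000b-and-010x}), using $[0,0,1,0,0]$ to simulate the needed constraint. In any case the theorem follows by transitivity of $\leq_{\tt T}$ once the two target signatures are realized.
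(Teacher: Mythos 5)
Your high-level plan — reduce from the problem made \#P-hard by Lemma~\ref{thm:a000b-and-010x} — is the same one the paper uses (the paper takes the concrete instance $[1,0,0,0,1]$ together with $[0,1,0,0]$, i.e.\ $a=b=1$, $x=0$), and your observation that pinning one edge of $[0,0,1,0,0]$ with $[0,1]$ yields $[0,1,0,0]$ is exactly the paper's second gadget. But the load-bearing step — producing the $[a,0,0,0,b]$ side — is where the proposal breaks down, and you have not resolved it.

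Concretely, two things go wrong. First, an arity miscount: joining two copies of the arity-4 signature $[0,0,1,0,0]$ along a single shared edge produces a gadget with $3+3=6$ dangling edges, not a ternary signature, so the claim that this ``gives something of the form $[0,1,0,\ast]$'' does not get off the ground. Second, and more fundamentally, even the ``parallel'' gadget $H_2$ obtained by joining two copies along \emph{two} shared edges (four remaining dangling edges) is \emph{not} of the form $[a,0,0,0,b]$: a direct computation gives $H_2(0,0,0,0)=H_2(1,1,1,1)=1$ and $H_2(0,0,1,1)=0$ as desired, but also $H_2(0,1,0,1)=H_2(0,1,1,0)=2\ne 0$. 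So $H_2$ (and likewise the longer chains $H_{2i}$, whose off-diagonal values grow as $2^{2i-1}$) is not even a symmetric arity-4 function, let alone one supported on weights $\{0,4\}$. No simple series/parallel composition of $[0,0,1,0,0]$ with pins will produce a clean $[a,0,0,0,b]$ with $ab\ne 0$, because the weight-$2$ ``pollution'' is structural. The paper's actual proof embraces this: it does \emph{not} realize $[1,0,0,0,1]$ as an $\mathcal F$-gate. Instead it substitutes $H_{2i}$ for each $=_4$-vertex with a fixed \emph{bundling} of the four incident edges into two pairs, classifies edge assignments by how many $=_4$-vertices receive an input in $\{0000,1111\}$ versus $\{0101,0110,1001,1010\}$, and then solves a Vandermonde system in $i$ to isolate the contribution $x_{n,0}$ coming from the $\{0000,1111\}$-only assignments, which equals the Holant of the original instance. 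That two-layer device — bundled substitution plus interpolation across the chain length — is the missing idea in your sketch; without it, the ``fallback to Vandermonde'' you gesture at has no concrete gadget family and no identification of what quantity the interpolation is recovering.
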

\begin{proof}

We construct a reduction from  Pl-Holant$^c([1,0,0,0,1],[0,1,0,0])$,
which is \#P-hard by Lemma~\ref{thm:a000b-and-010x},
 to Pl-Holant$^c([0,0,1,0,0])$ by
polynomial interpolation.

Let $F=[0,0,1,0,0]$. There is a series of planar gadgets (a chain of
 $F$) realizing the following sequence of functions:
\[H_2(x_1,x_2,y_1,y_2)=\sum_{x_3,x_4 =0,1}
F(x_1,x_2,x_3,x_4)F(y_1,y_2,x_3,x_4),\]
and for $i \ge 1$,
\[H_{2i+2}(x_1,x_2,y_1,y_2)=\sum_{x_3,x_4 = 0,1}
H_{2i}(x_1,x_2,x_3,x_4)H_2(y_1,y_2,x_3,x_4).\]
 The gadget for $H_{2i}$ is composed of $2i$
functions $F$. As an example, the gadget for $H_4$ is shown in
Figure \ref{Figure:00100}.
\begin{figure}[htbp]
\begin{center}
\includegraphics[width=3 in]{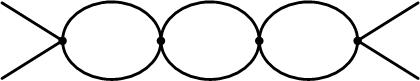}
\caption{ The gadget for $H_4$. } \label{Figure:00100}
\end{center}
\end{figure}

By calculation, $H_{2i}(0,0,0,0)=H_{2i}(1,1,1,1)=1$,
and
$H_{2i}(0,1,0,1)=H_{2i}(0,1,1,0)=H_{2i}(1,0,0,1)=H_{2i}(1,0,1,0)=2^{2i-1}$,
and $H_{2i}$ is zero on other inputs.
Again we will consider the inputs to $H_{2i}$ as bundled into
 $(x_1, x_2)$ and $(y_1, y_2)$.

Given  a planar graph  $G$ as an instance
 of  Pl-Holant$^c([1,0,0,0,1],[0,1,0,0])$,
suppose there are $n$ vertices in $G$ attached with the function
$(=_4) = [1,0,0,0,1]$.
 For $i=1,2,\ldots, n+1$, we construct
an instance $G_i$
of\\
Pl-Holant$^c([0,0,1,0,0])$ as follows:
Replace
each occurrence of $=_4$ by a copy of  $H_{2i}$,
and replace each occurrence of $[0,1,0,0]$ by
$[0,0,1,0,0]$ connected with a $[0,1]$, which exactly realizes
$[0,1,0,0]$.
Note that by replacing  $=_4$ with  $H_{2i}$,
we have bundled  two adjacent edges together (in the planar embedding)
for each vertex attached with  $=_4$.

Let $x_{a,b}$ denote the summation, over all 0-1 edge
assignments $\sigma$, of the products of all other vertex
function values in $G$ except at those
 $n$ vertices with $=_4$,
 where $a, b \ge 0$ and $a+b=n$,
{\it and} in $\sigma$ exactly  $a$ occurrences of $=_4$ have
inputs $\{0000, 1111\}$,
and exactly  $b$  occurrences of $=_4$ have
inputs $\{0101,0110,1001,1010\}$.

Note that
the Holant value on $G_i$ is
\[\sum_{a+b=n} x_{ab} 1^a  (2^{2i-1})^b.\]
On the other hand, the value of
 Pl-Holant$^c([1,0,0,0,1],[0,1,0,0])$ on $G$
is exactly $x_{n,0}$.

When we take $1 \le i \le n+1$,
we get a
 system of linear equations in $x_{ab}$, whose coefficient
matrix is a full ranked  Vandermonde matrix.
Solving this Vandermonde system we obtain the value $x_{n,0}$.

\end{proof}

The following result can be proved by interpolation as well.

\begin{lemma}\label{lemma:interpolate010x}
Let $a \not \in \{-1,0,1\}$ be a real number. Then we can interpolate
all $[x,0,y,0]$ and $[0,y,0,x]$
for $x,y\in \mathbb{C}$ starting from either $[0,1,0,a]$ or  $[a,0,1,0]$.
\end{lemma}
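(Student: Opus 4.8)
The plan is to set up a recursive construction $(A,g)$ as described in Section~\ref{sec:interpolation} and appeal to (the arity-2 analogue of) Lemma~\ref{lemma:interpolation_conditions}, as the paper already hints can be done for ``signatures with larger arity but two degrees of freedom'' such as $[0,x,0,y]$. The target family here is exactly the four-dimensional space of signatures of arity $3$ supported on odd Hamming weight, namely $[0,y,0,x]$ (together with its ``swap'' $[x,0,y,0]$ supported on even weight). First I would fix attention on the case we start from $[0,1,0,a]$; the case $[a,0,1,0]$ is symmetric under flipping $0 \leftrightarrow 1$ on all edges (which interchanges $[0,y,0,x]$ with $[x,0,y,0]$), so it suffices to treat one of them and remark on the symmetry. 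The initial unary-free gadget $g$ will be the arity-$3$ signature $[0,1,0,a]$ itself; the iteration gadget $A$ will be a small planar gadget with two ``input'' dangling edges and two ``output'' dangling edges built from a constant number of copies of $[0,1,0,a]$ (for instance, connecting one edge of one copy to one edge of another, as in the $H$-gadget of Figure~\ref{fig-H}, leaving a $2$-in/$2$-out planar gate), so that composing $A$ with the current signature again produces a signature of the form $[0,\ast,0,\ast]$.

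The key computational step is then to write down the $2\times 2$ matrix $A$ acting on the two-dimensional coordinate vector $(y_s, x_s)$ recording the weight-$1$ and weight-$3$ values of $G_s$, and to verify the three hypotheses of Lemma~\ref{lemma:interpolation_conditions}: (i) $\det A \neq 0$; (ii) $g$ is not an eigenvector of $A$; (iii) the ratio of the two eigenvalues of $A$ is not a root of unity. Because $a \notin \{-1,0,1\}$ is a \emph{real} number, the entries of $A$ will be polynomials in $a$ with the property that, generically, the eigenvalue ratio $\alpha/\beta$ is a real number different from $\pm 1$, hence automatically not a root of unity; one just has to exhibit one concrete gadget for which this holds for all such $a$, or at worst finitely many bad $a$ that can be excluded by a second, alternative gadget. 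Condition (ii) is a genericity condition that again fails only on a measure-zero set of $a$, and the real, non-$\{-1,0,1\}$ hypothesis is tailored precisely to stay off that set. Once all $[0,y,0,x]$ are interpolable, the companion family $[x,0,y,0]$ follows either by the same construction with the roles of $0$ and $1$ reversed, or by attaching a single additional copy of the gadget that performs the bit flip.

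The main obstacle I expect is purely the bookkeeping of choosing the right planar iteration gadget so that (a) the recursion stays inside the two-dimensional space $[0,\ast,0,\ast]$ (this constrains which edges may be connected, since an arbitrary contraction of two odd-support ternary signatures need not land back in that space), and (b) the resulting $2\times 2$ matrix $A$ has eigenvalue ratio provably not a root of unity for \emph{every} admissible real $a$. Ruling out roots of unity uniformly in a parameter is the usual delicate point in interpolation arguments; here the saving grace is the reality of $a$, which should force $\alpha, \beta$ to be real (or a genuine complex-conjugate pair of unequal modulus), so that $\alpha/\beta$ being a root of unity would force $\alpha/\beta = \pm 1$, and these two equations in $a$ have only finitely many roots, all of which can be checked to lie in $\{-1,0,1\}$ or be handled by a backup gadget. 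Everything else — computing $\det A$, checking $g$ is not an eigenvector, assembling the Vandermonde-type system and solving it — is routine linear algebra once the gadget is pinned down.
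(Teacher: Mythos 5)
Your plan coincides with the paper's actual proof: the paper builds a chain of $\mathcal{F}$-gates $N_i$ out of copies of $[0,1,0,a]$ (Figure~\ref{Figure:010x}), each with signature $[0,x_i,0,y_i]$ by parity, computes the $2\times 2$ recursion matrix $A = \begin{bmatrix}3(a^2+1) & a^3+a \\ 3(a^3+a) & a^6+1 \end{bmatrix}$ acting on $(x_i,y_i)^{\tt T}$ with initial vector $g=(1,a)^{\tt T}$, and then verifies the three conditions of Lemma~\ref{lemma:interpolation_conditions} exactly as you anticipate: $\det A = 3(a^4-1)^2 \neq 0$; the characteristic polynomial has positive discriminant so the eigenvalues are real and distinct, and their sum $a^6+3a^2+4>0$ rules out the ratio being $-1$, hence no root of unity; and $g$ being an eigenvector would force $a(a^2-1)(a^4-1)=0$, excluded by $a\notin\{-1,0,1\}$. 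The only thing your write-up leaves to fill in is the concrete gadget and the resulting $A$ — which you correctly flag as the remaining bookkeeping — but your prediction of how the root-of-unity obstruction is dispatched (reality of $a$ forcing $\alpha/\beta\in\mathbb{R}$, reducing to $\alpha/\beta\neq\pm1$, with the finitely many exceptional $a$ all in $\{-1,0,1\}$) is precisely what happens, so this is the same proof modulo writing down the gadget.
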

\begin{proof}
The recursive construction is depicted by Figure \ref{Figure:010x}.
\begin{figure}[htbp]
\begin{center}
\includegraphics[width=4 in]{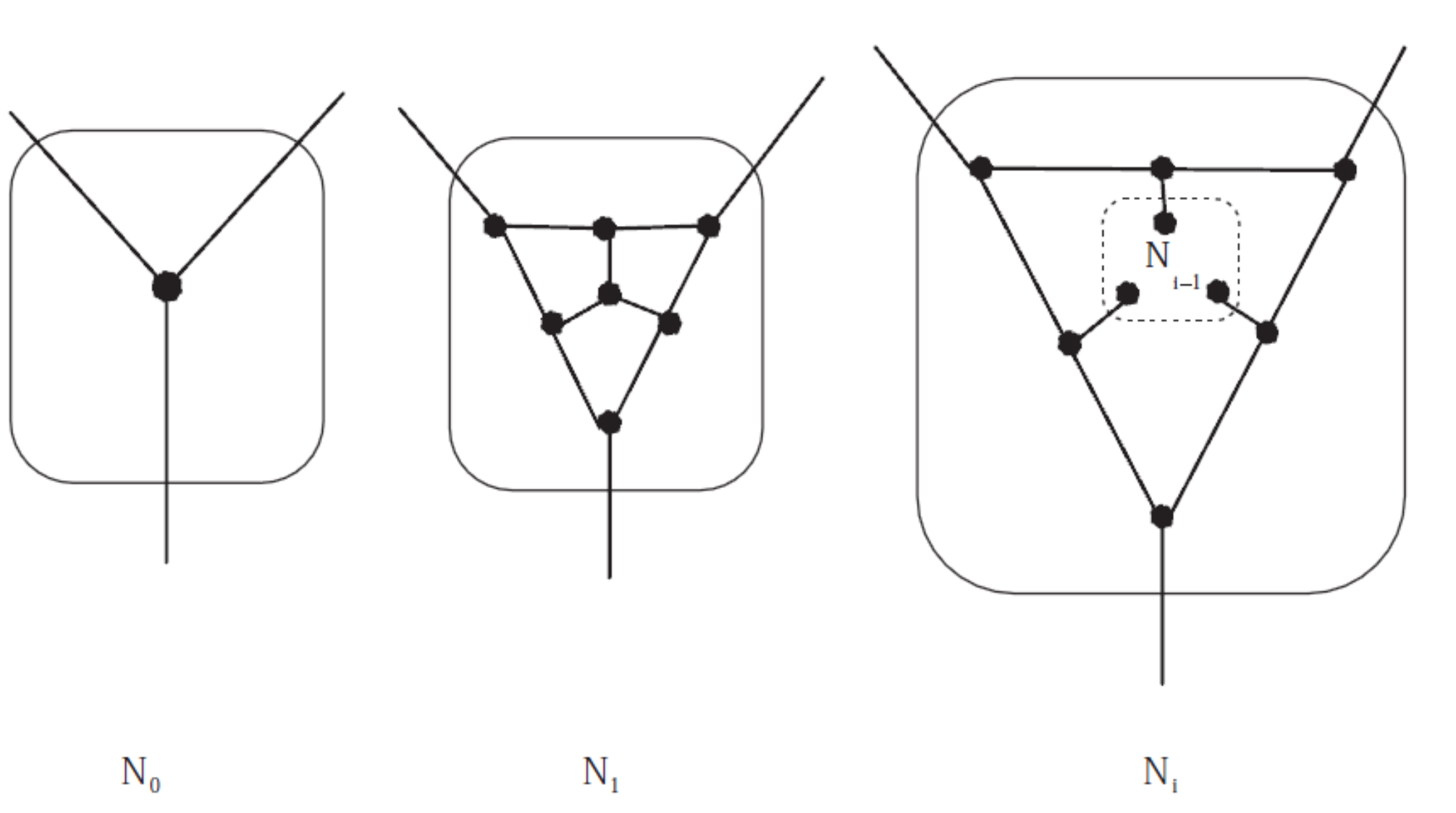}
 \caption{The recursive
construction. The signature of every vertex in the gadget is
$[0,1,0,a]$. } \label{Figure:010x}\end{center}
\end{figure}
By a simple parity argument, every \FF  ~$N_i$ has a signature of
the form $[0,x_i,0,y_i]$. After some calculation, we see  that
they satisfy the following recursive relation:
\[\begin{bmatrix}x_{i+1} \\ y_{i+1} \end{bmatrix}=
\begin{bmatrix}3(a^2+1) & a^3+a \\ 3(a^3+a) & a^6+1 \end{bmatrix}
\begin{bmatrix}x_{i} \\ y_{i} \end{bmatrix}.\]
The signatures we want to interpolate are of arity $3$.
But since all of them take the form $[0,x_i,0,y_i]$ with two
degrees of  freedom, we can use the interpolation method in
Section \ref{sec:interpolation}.
Now we verify that the conditions of that theorem are satisfied.
Let $A=\begin{bmatrix}3(a^2+1) &
(a^3+a) \\ 3(a^3+a) & a^6+1
\end{bmatrix}$, then $(A, [1,a]^{\tt T})$ forms a recursive
construction. Since ${\rm det}(A)=3(a^4-1)^2\neq 0$, the first
condition holds. Its characteristic equation is $X^2-(a^6+3 a^2+4)X
+3(a^4-1)^2=0$. For this quadratic equation, the discriminant
 $\Delta=(a^6-3 a^2
-2)^2+ 12 (a+a^3)^2 >0$. So $A$ has two distinct real eigenvalues.
The sum of the two eigenvalues is ${\rm tr} A
= a^6+3 a^2+4 > 0$.
So they are not opposite to each other. Therefore, the ratio
of these two eigenvalues is not a root of unity and the third
condition holds. Consider the second condition: if the initial
vector $[1,a]^{\tt T}$ is a column eigenvector of $A$, then we have
$A\begin{bmatrix}1
\\ a
\end{bmatrix}= \lambda \begin{bmatrix}1 \\ a
\end{bmatrix} $, where $\lambda$ is an eigenvalue of $A$. From
this, we will conclude that $a(a^2-1)(a^4-1)=0$,
which can not happen given $a \not \in \{-1,0,1\}$. To sum up, this
recursive relation satisfies all three conditions of Lemma
\ref{lemma:interpolation_conditions} and can be used to interpolate
all  signatures of the form $[0,x,0,y]$. This completes the proof.
\end{proof}

\noindent {\bf Proof of Theorem \ref{thm:a010b}}
If $ab=1$, then $[a,0,1,0,b]$ is realizable by some matchgate,
by Lemma~\ref{even-geometric-series-lemma}.
This realizability
also applies to the unary functions $[1,0]$ and $[0,1]$.
Hence the  problem Pl-Holant$^c([a,0,1,0,b])$
 can be solved in polynomial time by matchgate computation via
the FKT method~\cite{TF1961,Kasteleyn1961,Kasteleyn1967}.
In the following we assume that $ab\neq 1$ and prove that
the problem is \#P-hard.
The case $a=b=0$ is proved in Lemma \ref{lemma:00100}. Now we can assume at least one of $a$ and $b$ is non-zero,
and by symmetry we assume $a\neq 0$.

We know from our dichotomy for Holant$^c$ problems \cite{STOC09}
that Holant$^c([a,0,1,0,b])$ for general graphs
is \#P-hard unless $a=b=1$ or $a=b=-1$, in which cases it is tractable.
Both of these tractable  cases are also included in the
tractable cases ($ab=1$) here.
Therefore, if we can realize a
 {\it cross function} $X$ with a planar gadget when $ab\neq 1$,
 we can reduce  Holant$^c([a,0,1,0,b])$
for general graphs to   Pl-Holant$^c([a,0,1,0,b])$ and finish the proof.
Here a cross function $X$ has 4 input bits,
and satisfies  $X_{0000}= X_{0101}=X_{1010} = X_{1111}=1$
and $X_{\alpha}=0$ for all other inputs $\alpha \in \{0, 1\}^4$.

If $\{a,b\}\not \subset \{-1,0,1\}$,
we can use Lemma \ref{lemma:interpolate010x}
to interpolate all  $[x,0,y,0]$, for $x, y \in {\mathbb C}$.
If $\{a,b\}  \subset \{-1,0,1\}$, then
there are only four cases: $[1,0,1,0,-1]$, $[1,0,1,0,0]$, $[-1,0,1,0,1]$
and $[-1,0,1,0,0]$. In all four cases, it is easy to
verify that we can realize a signature with a
form $[c_1,0,c_2,0]$ where $c_1 c_2 \neq 0$ and $c_1 \neq \pm c_2$ using the
gadget in Figure \ref{Figure:1010-1}. ( For $[1,0,1,0,-1]$, we get $[8,0,4,0]$ by using $[1,0]$ in the gadget;
 for $[1,0,1,0,0]$, we get $[8,0,5,0]$ by using $[1,0]$;  for $[-1,0,1,0,1]$, we get $[0,4,0,8]$ by using $[0,1]$;
 and for  $[-1,0,1,0,0]$, we get $[0,1,0,3]$ by using $[0,1]$.)
After factoring out a nonzero factor, we have $[c', 0,1,0]$,
where $c' \in {\mathbb R}$ and  $c' \not \in \{0, \pm 1\}$.
As a result, we can also interpolate all  $[x,0,y,0]$,
where $x, y \in {\mathbb C}$.

\begin{figure}[htbp]
\begin{center}
\includegraphics[width=3 in]{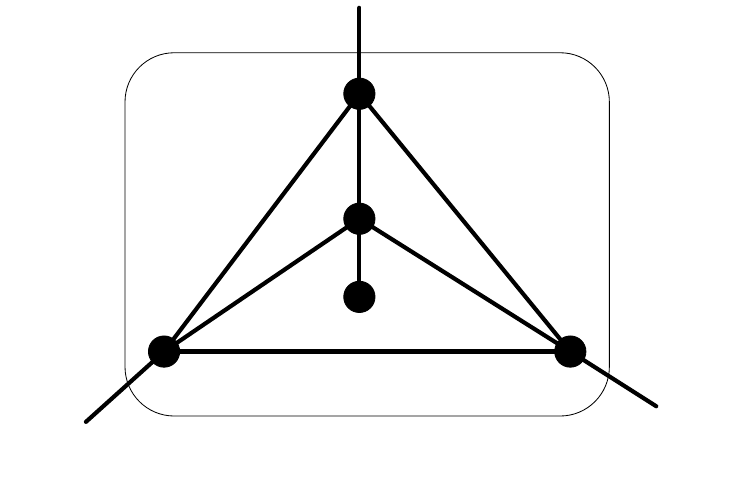}
\caption{ The
signature of the degree $1$ vertex in the gadget is  $[1,0]$ or $[0,1]$. }
\label{Figure:1010-1}
\end{center}
\end{figure}

Now we can use all signatures of
the form $[x,0,y,0]$,
for arbitrary $x, y \in {\mathbb C}$,
 to build new gadgets. We also have all $[x,0,y]$
by connecting  $[x,0,y,0]$  to a $[1,0]$.
By connecting a $[\sqrt[4]{t/a},0,\sqrt[4]{a/t}]$
to each edge of the signature $[a,0,1,0,b]$,
we get  $[t,0,1,0,\frac{c}{t}]$ for all $t\neq 0$, where $c=ab\neq 1$.
Using all these, we will build a
 planar gadget in Figure \ref{fig:cross} to realize the cross function
$X$.
In the equations below
 $x, y, t$ are three variables we can set to any complex numbers,
with $t \not = 0$. The parameter $c$ is given and not equal to 1.

(Of course we presumably could not build a cross function $X$ if $c=1$;
this is {\it exactly} when the problem is in P, and this
is also  {\it exactly} when our construction of $X$ fails.
If a  cross function $X$ were to exist when $c=1$ then
P = \#P would follow.  However, it is still rather mysterious
that algebraically $c=1$ is  {\it exactly} when our construction
fails. This failure condition is by no means obvious from
the equations below.)

\begin{figure}[htbp]
\begin{center}
    \includegraphics[width=3 in]{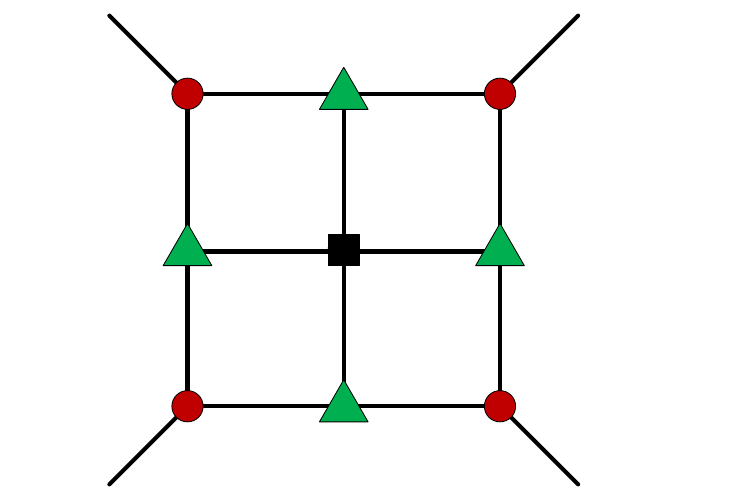}
    \caption{This gadget is to realize the Cross function. The signature for the center vertex (black and square) is $[t, 0,1,0, \frac{c}{t}]$.
    The signature for the vertexes in the four corners (red and circle) is $[x,0,1,0]$. The signature for the vertexes in the middle
    of the boundaries (green and triangle)  is $[y,0,1,0]$. }
    \label{fig:cross}
\end{center}
\end{figure}

We can compute the signature of the gadget
in Fig.~\ref{fig:cross}. If the input has
an odd number of $1$s, the value is $0$. For other inputs, we have
\begin{eqnarray*}
X_{0000} &=& x^4 y^4 t+t+4 x^3 y^2+4 x+4 x^2 y+\frac{2 c x^2}{t}\\
X_{1111} &=& 2 y^2 t+12 y+\frac{2c}{t} \\
X_{0101} = X_{1010} &=& 2 x y^2 t+4 x^2 y^2+4+4 x y+\frac{2 c x}{t}\\
X_{0011}=X_{1001}=X_{1100}=X_{0110} &=&x^2 y^3 t+y t+3 x^2 y^2+3+6 x y+\frac{2 c x}{t}.
\end{eqnarray*}

Here we prove that for any $c \not = 1$, we can assign
suitable complex values to $x,y$ and $t$, where $t \not =0$,
such that $A = B = C \not = 0$, and $D=0$,
where $A$, $B$, $C$ and $D$ denote respectively the
four functions of $x, y$ and $t$ listed in the four lines above.
%

\begin{myclaim}
For any $c \not = 1$,
\[(x-1)^2 = \frac{16}{c-1}\]
has a solution $x  \not \in \{0, +1, -1\} $.
This $x$ satisfies
\begin{equation}\label{lm-1-eqn1}
\left( 2 - \frac{x(x+3)}{x-1} \right)
\left(  \frac{x+3}{x-1} \right)
+ cx + 6 =0 \textrm{.}
\end{equation}
\end{myclaim}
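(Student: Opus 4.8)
The plan is to handle the two assertions of the Claim separately, since they are logically independent: first the existence of a root $x \notin \{0,+1,-1\}$ of the quadratic, then the fact that every such root satisfies (\ref{lm-1-eqn1}).

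\emph{Part 1 (a good root exists).} Since $c \neq 1$, the number $16/(c-1)$ is well-defined and nonzero, so the quadratic $(x-1)^2 = 16/(c-1)$ has two distinct roots, namely $x = 1 \pm 4/\sqrt{c-1}$, and both differ from $1$. I would then pin down exactly when a root can hit a forbidden value: $x = 0$ forces $16/(c-1) = 1$, i.e.\ $c = 17$, and $x = -1$ forces $16/(c-1) = 4$, i.e.\ $c = 5$. Hence for $c \notin \{5,17\}$ neither root lies in $\{0,+1,-1\}$; for $c = 17$ the roots are $0$ and $2$; and for $c = 5$ they are $-1$ and $3$. In each case at least one root avoids $\{0,+1,-1\}$, which is the first assertion.

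\emph{Part 2 (the identity).} Here I would substitute the defining relation in the form $c = 1 + 16/(x-1)^2$ into the left side of (\ref{lm-1-eqn1}) and check it vanishes identically. To keep the algebra light I would set $u = x-1$, so $x = u+1$, $x+3 = u+4$, $\frac{x(x+3)}{x-1} = u + 5 + \frac{4}{u}$, and $\frac{x+3}{x-1} = 1 + \frac{4}{u}$; clearing the denominator $u^2$ turns $\bigl(2 - \frac{x(x+3)}{x-1}\bigr)\bigl(\frac{x+3}{x-1}\bigr) + cx + 6$ into a polynomial in $u$ whose coefficients cancel termwise. Note the identity holds for \emph{both} roots of the quadratic, not merely the distinguished one from Part 1, so there is nothing to reconcile between the two parts.

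\emph{Main obstacle.} There is essentially no obstacle; the whole argument is elementary algebra. The only point demanding a moment's care is the case analysis in Part 1, namely verifying that whenever a root is forced to equal $0$ or $-1$ (i.e.\ $c = 17$ or $c = 5$) its companion root ($2$ or $3$, respectively) is admissible — and the explicit description of the roots as $1 \pm 4/\sqrt{c-1}$ makes this immediate.
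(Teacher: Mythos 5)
Your proposal is correct and takes essentially the same route as the paper: elementary algebra for both parts. In Part 1, the paper only explicitly discusses the $c=17$ case and argues somewhat tersely (implicitly using that the two roots are distinct so at most one can be $-1$); your explicit enumeration of $c=5$ and $c=17$ is slightly cleaner but not a different idea. In Part 2, the paper clears the denominator $(x-1)$ and shows the resulting polynomial factors as $(c-1)x\bigl[(x-1)^2 - 16/(c-1)\bigr]$, whereas you substitute $c = 1 + 16/(x-1)^2$ and change variables to $u = x-1$; both are routine verifications of the same polynomial identity, and your computation does indeed cancel to zero.
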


\begin{proof}
Clearly $x=1$ is not a solution to $(x-1)^2 = \frac{16}{c-1}$.
Also the equation  has two distinct roots.
When $c=17$ there is a solution $x=2 \not \in \{0, +1, -1\} $.
When $c \not = 17$, we can verify $x=0$ is not a solution.
Hence the equation always has a solution other than $0, \pm 1$.

To verify (\ref{lm-1-eqn1})
we have
\begin{eqnarray*}
& & (2x -2 - x^2 -3x)(x+3) + (cx +6)(x^2 -2x +1)\\
&=& -(x^3 + 4x^2 + 5x + 6) + c x^3 + (6-2c) x^2 +(-12 +c) x + 6 \\
&=& (c-1) x^3 - 2 (c-1) x^2 + (c-17) x\\
&=& (c-1) x [ (x-1)^2 -{16}/(c-1) ]\\
&=& 0.
\end{eqnarray*}
\end{proof}

Now we fix $x  \not \in \{0, +1, -1\} $ satisfying (\ref{lm-1-eqn1})
for any given $c \not = 1$.

\begin{myclaim}
For any $c \not = 1$, we can pick $z \not = \pm 1$ such that
\begin{equation}\label{lm-2-eqn1}
\frac{4 z}{(1+z)^2} = \frac{ x(x+3)}{x-1}.
\end{equation}
\end{myclaim}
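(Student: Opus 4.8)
The plan is to clear denominators and reduce (\ref{lm-2-eqn1}) to a quadratic in $z$. Set $w = \frac{x(x+3)}{x-1}$; this is well-defined because the previous claim supplies $x \neq 1$. Cross-multiplying, (\ref{lm-2-eqn1}) is equivalent to $4z = w(1+z)^2$, i.e.\ to $w z^2 + (2w-4)z + w = 0$, and we must exhibit a root $z \neq \pm 1$.

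First I would handle the degenerate case $w = 0$ separately. Since $x \neq 0$, having $w = 0$ forces $x = -3$, and then the equation reads $4z = 0$, so $z = 0$ is a root, and obviously $z \neq \pm 1$. In the remaining case $w \neq 0$ the equation is a genuine quadratic, whose discriminant is $(2w-4)^2 - 4w^2 = 16(1-w)$. The key observation is that $w \neq 1$: the equation $w = 1$ rearranges to $x(x+3) = x-1$, i.e.\ $(x+1)^2 = 0$, i.e.\ $x = -1$, which is excluded. Hence the discriminant is nonzero and there are two distinct complex roots. It then remains only to check that neither of them equals $\pm 1$: substituting $z = -1$ into $w z^2 + (2w-4)z + w$ gives $4 \neq 0$, and substituting $z = 1$ gives $4(w-1) \neq 0$. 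So either root is an admissible choice of $z$.

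I do not expect a real obstacle here; the content is elementary linear algebra on a quadratic. The only point requiring care is that all three exclusions $x \notin \{0,1,-1\}$ furnished by the preceding claim are genuinely used: $x \neq 1$ to define $w$, $x \neq 0$ to pin down the $w = 0$ branch, and $x \neq -1$ to guarantee $w \neq 1$, which both keeps the quadratic nondegenerate and rules out $z = 1$ as a root.
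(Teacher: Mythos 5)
Your proof is correct and follows essentially the same route as the paper's: both split on the degenerate case $x=-3$ (equivalently $w=0$), and otherwise clear denominators to a quadratic in $z$ and verify directly that $z=\pm 1$ cannot be roots, with $x\neq -1$ (your $w\neq 1$) doing the work for the $z=1$ exclusion. Your extra discriminant computation showing the two roots are distinct is harmless but unnecessary, since a single root over $\mathbb{C}$ already suffices.
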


\begin{proof} We are given  $x \not = 0, \pm 1$.
If $x=-3$, we can pick $z =0$.
Now suppose $x \not =-3$.  Consider the quadratic
equation in $z$
\[4 z (x-1) = x (x+3) (1+z)^2.\]
This is quadratic since $x (x+3) \not =0$.
We can check that $z= + 1$ (and $-1$ respectively) is not a solution,
as this would force $x = - 1$ (and $+1$ respectively).
However, any solution where  $z \not = -1$ and $x \not = 1$
is equivalent to (\ref{lm-2-eqn1}).
Hence we have a solution $z \not = \pm 1$ to (\ref{lm-2-eqn1}).

\end{proof}

Now we further fix a  $z \not = \pm 1$
satisfying (\ref{lm-2-eqn1}), and let $y = z/x$ such that $xy \not = \pm 1$,
for any $c \not = 1$.

\begin{myclaim}
For any  $c \not = 1$, there exist $x \not \in \{0, +1, -1\} $
and $y$ such that $xy \not = \pm 1$ satisfying
\begin{equation}\label{lm-3-eqn2}
\frac{2 (1 + x^2 y^2)}{(1+ xy)^2} \cdot \frac{x+3}{x-1}
+ cx + 6 = 0.
\end{equation}
\end{myclaim}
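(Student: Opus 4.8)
The plan is to reduce (\ref{lm-3-eqn2}) directly to the two claims already proved, by making exactly the choices dictated there. First I would fix $x \not\in \{0,+1,-1\}$ to be a root of $(x-1)^2 = \frac{16}{c-1}$ as guaranteed by Claim~1, then fix $z \not= \pm 1$ satisfying (\ref{lm-2-eqn1}) as in Claim~2, and finally set $y = z/x$, so that $xy = z$. Since $x \not= 0$ this substitution is legitimate, and $xy = z \not= \pm 1$ supplies the non-degeneracy $xy \not= \pm 1$ required in the statement; in particular $1+xy \not= 0$, so the left-hand side of (\ref{lm-3-eqn2}) is well defined, as is $\frac{x+3}{x-1}$ since $x \not= 1$.

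The key algebraic observation is the identity
\[
\frac{2(1+w^2)}{(1+w)^2} = 2 - \frac{4w}{(1+w)^2},
\]
which follows from $1+w^2 = (1+w)^2 - 2w$. Applying it with $w = xy = z$ turns the first summand of (\ref{lm-3-eqn2}) into
\[
\frac{2(1+x^2y^2)}{(1+xy)^2}\cdot\frac{x+3}{x-1}
= \left( 2 - \frac{4z}{(1+z)^2} \right)\frac{x+3}{x-1}.
\]
Now I would invoke (\ref{lm-2-eqn1}), i.e. $\frac{4z}{(1+z)^2} = \frac{x(x+3)}{x-1}$, to obtain
\[
\frac{2(1+x^2y^2)}{(1+xy)^2}\cdot\frac{x+3}{x-1}
= \left( 2 - \frac{x(x+3)}{x-1} \right)\frac{x+3}{x-1}.
\]
Consequently the left-hand side of (\ref{lm-3-eqn2}) equals
\[
\left( 2 - \frac{x(x+3)}{x-1} \right)\left(\frac{x+3}{x-1}\right) + cx + 6,
\]
which is precisely the left-hand side of (\ref{lm-1-eqn1}) and therefore vanishes by Claim~1. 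This establishes (\ref{lm-3-eqn2}).

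I do not expect a genuine obstacle here: all the substance has already been absorbed into Claims~1 and~2, and Claim~3 is just their composition through the substitution $xy = z$ together with the one-line identity above. The only points that deserve a moment's care are checking that every denominator stays nonzero under the chosen values ($x \not= 1$, $z \not= -1$, $xy \not= -1$) and that the side conditions $x \not\in \{0,\pm 1\}$ and $xy \not= \pm 1$ demanded in the statement are inherited from Claims~1 and~2; the degenerate possibility $x = -3$ (where Claim~2 takes $z = 0$, hence $y = 0$) is handled by the very same computation.
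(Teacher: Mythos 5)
Your proof is correct and follows essentially the same route as the paper: set $y = z/x$ so that $xy = z$, apply the identity $\frac{2(1+w^2)}{(1+w)^2} = 2 - \frac{4w}{(1+w)^2}$ with $w = z$, substitute (\ref{lm-2-eqn1}), and collapse the result to (\ref{lm-1-eqn1}). The side-condition checks you flag (in particular $xy = z \neq \pm 1$ and the $x=-3$, $z=0$ case) are handled the same way in the paper's argument.
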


\begin{proof}
%
\begin{eqnarray*}
& & \frac{2 (1 + x^2 y^2)}{(1+ xy)^2} \cdot \frac{x+3}{x-1}
+ cx + 6 \\
& = & 2 \left( 1 - \frac{2z}{(1+ z)^2} \right) \cdot \frac{x+3}{x-1}
+ cx + 6 \\
& = & \left( 2 - \frac{x (x+3)}{x-1} \right) \cdot \frac{x+3}{x-1}
+ cx + 6 \\
& = & 0.
\end{eqnarray*}
Here we used (\ref{lm-2-eqn1}) and (\ref{lm-1-eqn1}).

\end{proof}

Now we will set $t = 4/(1+xy)^2$.  Clearly $t \not =0$.
We next verify that $D = 0$.

By (\ref{lm-2-eqn1}) and (\ref{lm-3-eqn2}) we get
\[\frac{8 y (1 + x^2 y^2)}{(1+ xy)^4} + cx + 6 = 0.\]
Then
\[t^2 y (1 + x^2 y^2) + 2 c x +  3t (1+ xy)^2 =0.\]
Thus
\[D = yt (1 + x^2 y^2) + 3  (1+ xy)^2 + \frac{2 c x}{t} = 0.\]

Next we show that $C = \frac{4 (1-xy)^2}{1-x} \not =0$.

By $D=0$, we have
\[
C = 2 x y^2 \frac{4}{(1+ xy)^2} + 4 (1+ xy)^2 - 4 xy +
[- yt (1 + x^2 y^2) - 3 (1+ xy)^2].
\]
Hence
\begin{eqnarray*}
C &=& \frac{8xy^2}{(1+ xy)^2} + (1+ xy)^2 - 4 xy
- y \frac{4 (1 + x^2 y^2)}{(1+ xy)^2}\\
 &=& \frac{4y}{(1+ xy)^2} \left[ 2xy - 1 - x^2 y^2 \right]
+ (1-xy)^2 \\
 &=& \left( \frac{-4y}{(1+ xy)^2} + 1 \right) (1-xy)^2 \\
 &=& \frac{4 (1-xy)^2}{1-x} \not = 0,
\end{eqnarray*}
using (\ref{lm-2-eqn1}).

The next task is to show $B=C$.

We have
\[ C = 4 (1-xy)^2 + x B.\]
Hence
\[ B= \frac{1}{x}
\left[  \frac{4 (1-xy)^2}{1-x} -  4 (1-xy)^2 \right]
= \frac{4 (1-xy)^2}{x} \left[ \frac{1}{1-x} -1 \right]
= \frac{4 (1-xy)^2}{1-x}
=C.\]

Finally we verify $A=C$ as well.

\[A = (x^4 y^4 +1) t + x [ C - 2 x y^2 t ]
= C + (x-1)C - 2 x^2 y^2 t + (x^4 y^4 +1) t
= C - 4 (1-xy)^2 + t (x^2 y^2-1)^2
= C.\]

\qed

Now we come to the main dichotomy theorem for Pl-Holant$^c$
problems.

\begin{theorem}\label{thm:holant-c}
Let $\mathcal{F}$ be a set of real symmetric signatures.
{\rm Pl-Holant}$^c(\mathcal{F})$ is \#P-hard unless $\mathcal{F}$
satisfies one of the following conditions, in which case it is
tractable:
\begin{enumerate}
    \item ${\rm Holant}^c(\mathcal{F})$ is tractable (for which
we have an effective dichotomy~\cite{STOC09}); or
    \item Every signature in $\mathcal{F}$ is realizable by some matchgate
(for which we have a complete characterization \cite{MGI}).
\end{enumerate}
\end{theorem}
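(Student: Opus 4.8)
The two tractability cases are immediate. If Holant$^c(\mathcal{F})$ is already polynomial‑time computable on general graphs, then so is Pl-Holant$^c(\mathcal{F})$, since planar instances form a subclass. If instead every $f\in\mathcal{F}$ is the standard signature of a matchgate, then so are the unaries $[1,0]$ and $[0,1]$ by Lemmas~\ref{even-geometric-series-lemma} and~\ref{odd-geometric-series-lemma}, hence every signature in $\mathcal{F}\cup\{[1,0],[0,1]\}$ is matchgate‑realizable; a planar signature grid then translates directly into a planar weighted perfect‑matching instance, solved in polynomial time by the FKT algorithm.

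For hardness, assume $\mathcal{F}$ satisfies neither condition. By Theorem~\ref{thm-holnat-c}, the failure of the first condition means: (i)~the subset $\mathcal{F}_0\subseteq\mathcal{F}$ obtained by deleting all unary signatures does not lie in any of the three tractable classes of Theorem~\ref{thm-holant-star}, so by the planar half of that theorem Pl-Holant$^*(\mathcal{F}_0)$ is \#P-hard, and by Theorem~\ref{thm-holnat-c} itself Holant$^c(\mathcal{F})$ is \#P-hard on general graphs; and (ii)~$\mathcal{F}\not\subseteq\mathcal{F}_1\cup\mathcal{F}_2\cup\mathcal{F}_3$. The whole proof then reduces to a single target: to build, using the signatures of $\mathcal{F}$ together with $[1,0],[0,1]$ and planar gadgetry and interpolation, a planar realization of a \emph{crossover}. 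Equivalently, by the construction in the proof of Theorem~\ref{thm:a010b} it suffices to realize a signature $[a,0,1,0,b]$ with $ab\neq1$, or a binary signature $[x,0,y,0]$ with $xy\neq0$ and $x\neq\pm y$ (which by Lemma~\ref{lemma:interpolate010x} then yields $[a,0,1,0,b]$ for all $a,b$). Once a crossover is in hand, a planar drawing of an arbitrary Holant$^c(\mathcal{F})$ instance with each edge crossing replaced by a copy of it gives Pl-Holant$^c(\mathcal{F})\geq_{\tt T}$ Holant$^c(\mathcal{F})$, which is \#P-hard by~(i); alternatively one appeals to Theorem~\ref{thm:a010b} directly.

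The core is therefore a case analysis, organized around the combinatorial characterization of matchgate‑realizable symmetric signatures. Fix $f^\ast=[z_0,\dots,z_m]\in\mathcal{F}$ that is not matchgate‑realizable; by Lemmas~\ref{even-geometric-series-lemma} and~\ref{odd-geometric-series-lemma}, either (a)~$f^\ast$ respects parity (all entries of one parity vanish) but the surviving entries violate the geometric‑progression condition, or (b)~$f^\ast$ has nonzero entries of both parities. In case~(a) some window $z_j,z_{j+2},z_{j+4}$ of three consecutive surviving entries satisfies $z_{j+2}^2\neq z_jz_{j+4}$; pinning the other $m-4$ edges of $f^\ast$ to $0$ or $1$ with $[1,0],[0,1]$ isolates the arity‑$4$ signature $[z_j,0,z_{j+2},0,z_{j+4}]$, and feeding a binary $\mathrm{diag}(c,d)$ through each edge rescales it to $[a,0,1,0,b]$ with $ab=z_jz_{j+4}/z_{j+2}^2\neq1$; if the middle entry (or an end entry) vanishes this restriction is instead of the form $[0,0,1,0,0]$ or $[a,0,0,0,b]$, handled by Lemma~\ref{lemma:00100} and Lemma~\ref{thm:a000b-and-010x} respectively, the latter after also realizing some $[0,1,0,x]$ with $x\neq\pm1$. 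In case~(b), pinning extracts a low‑arity signature with support on both parities, from which --- via $[1,0],[0,1]$, self‑loops, and the recursive interpolation of Lemma~\ref{lemma:interpolation_conditions} --- one produces $[x,0,y,0]$ or $[0,y,0,x]$ with $xy\neq0$, $x\neq\pm y$, and then Lemma~\ref{lemma:interpolate010x} and Theorem~\ref{thm:a010b} finish. When no single $f^\ast$ suffices (e.g.\ the sole non‑matchgate signature is degenerate or affine, yet by~(ii) and~(i) the instance is still hard), one leans on the \#P‑hardness of Pl-Holant$^*(\mathcal{F}_0)$, realizing the missing unary signatures by polynomial interpolation from the seeds already available in $\mathcal{F}\cup\{[1,0],[0,1]\}$.

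The main obstacle is exactly this case analysis: it must be laid out so that \emph{every} $\mathcal{F}$ outside the two tractable classes provably yields a crossover, which requires matching the matchgate characterization against all three Holant$^*$ classes and the affine family $\mathcal{F}_1\cup\mathcal{F}_2\cup\mathcal{F}_3$, and, in each subcase, exhibiting an explicit planar gadget and verifying the three interpolation hypotheses of Lemma~\ref{lemma:interpolation_conditions} (nonzero determinant, seed vector not an eigenvector, eigenvalue ratio not a root of unity). Theorem~\ref{thm:a010b} --- whose crossover construction succeeds \emph{precisely} when $ab\neq1$, the exact boundary of tractability --- is the linchpin that turns any realized ``bad'' signature into \#P‑hardness.
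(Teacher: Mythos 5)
Your proposal identifies the right ingredients (Theorem~\ref{thm:a010b}, Lemma~\ref{thm:a000b-and-010x}, Lemma~\ref{lemma:interpolate010x}, Lemma~\ref{lemma:00100}, the interpolation Lemma~\ref{lemma:interpolation_conditions}, and the matchgate characterization) and the two tractability directions are handled correctly. But the hardness argument has a structural misframing and a genuine gap in coverage.

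\textbf{The misframing.} You claim ``the whole proof then reduces to a single target: to build \ldots a crossover,'' and present the route via Pl-Holant$^*$ as a last resort ``when no single $f^\ast$ suffices.'' The paper's proof is organized the other way around. Its \emph{primary} mechanism is Lemma~\ref{lemma:interpolation}: from any realizable binary $[a,b,c]$ with $b\neq0$, $b^2\neq ac$, $a+c\neq0$, interpolate \emph{all} unaries, reducing Pl-Holant$^c(\mathcal{F})$ to Pl-Holant$^*(\mathcal{F})$, whose dichotomy (Theorem~\ref{thm-holant-star}) holds for planar graphs. Most of Lemmas~\ref{lemma:inG1G2G3}--\ref{lemma:G1F3} proceed this way; no crossover is constructed. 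For an example where the crossover route is not obviously available, take $\mathcal{F}=\{[1,2,3,4]\}$: the binary sub-signature $[1,2,3]$ triggers Lemma~\ref{lemma:interpolation} and Pl-Holant$^*$ hardness, but since $[1,2,3,4]$ violates parity, any gadget built from it continues to do so, and there is no clear way to realize a parity-respecting $[a,0,1,0,b]$ with $ab\neq1$. The crossover of Theorem~\ref{thm:a010b} is reserved precisely for the opposite situation --- when $\mathcal{F}$ (minus its boundary cases) lies inside parity-respecting families $\mathcal{G}_2$ and unary interpolation is blocked because every realizable unary is $[x,0]$ or $[0,x]$.

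\textbf{The coverage gap.} Your case analysis (fix a non-matchgate $f^\ast$, split on whether it respects parity, pin to extract arity-4 windows) corresponds roughly to Lemmas~\ref{lemma:inG1G2G3} and~\ref{lemma:inG1MG3}, but it does not account for the boundary families where no single signature alone gives hardness. The paper handles these via a strictly nested sequence: after Lemma~\ref{lemma:inG1MG3} one may assume $\mathcal{F}\subseteq\mathcal{G}_1\cup\mathcal{M}\cup\mathcal{G}_3$, then successively narrow to $\mathcal{G}_1\cup\mathcal{F}_2\cup\mathcal{G}_3$, $\mathcal{G}_1\cup\mathcal{G}_3$, $\mathcal{G}_1\cup\mathcal{F}_3$, and finally settle the residual case $\mathcal{F}\subseteq\mathcal{G}_1\cup\mathcal{F}_3$ (Lemma~\ref{lemma:G1F3}), which requires playing one signature (say a unary $[1,a']$ with $a'\notin\{0,\pm1\}$) off against another (say $[1,0,0,\ldots,\pm1]\in\mathcal{F}_1$) to manufacture a usable binary. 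Your window extraction in case~(a) also silently needs an auxiliary $[0,1,0,x]$ with $x\neq\pm1$ when the window degenerates to $[a,0,0,0,b]$, but your proposal does not say where that comes from; in the paper it is supplied by the analysis of the specific subcases in Lemma~\ref{lemma:inG1MG3}. These are not cosmetic omissions --- the theorem's force is precisely that the analysis closes up with no cases left over, and that closure is the part your sketch leaves open.
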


Before we give the proof, we do some normalization of the signature set
$\mathcal{F}$.
Since any degenerate
signature $[x,y]^{\otimes k}$ can be replaced by the corresponding
unary signature $[x,y]$ without changing the complexity of the
problem, we always assume that all the signatures in ${\cal F}$,
whose arity is greater than $1$, are non-degenerate. Since $[1,0]$ and
$[0,1]$ are freely available, we can construct any sub-signature of
an original signatures as well as any signature realizable by some
\FF .

The main idea of the  proof is to interpolate
all  unary functions. If we can do that, we can reduce the problem
 {\rm Pl-Holant}$^* ({\cal
F})$ to {\rm Pl-Holant}$^c ({\cal F})$ and finish the proof.
We note that our dichotomy in \cite{STOC09} for  ${\rm Holant}^* ({\cal
F})$ also holds for planar graphs. In some cases, we cannot
interpolate all unary functions, then we prove the theorem separately,
mainly using Lemma~\ref{thm:a000b-and-010x}
and Theorem~\ref{thm:a010b}. The following lemma is for interpolation
of unary functions.

\begin{lemma}\label{lemma:interpolation}
If we can construct from ${\cal F}$  a gadget  with signature $[a,b,c]$, where
$b^2\neq a c$, $b\neq 0$ and $a+c \neq 0$, then we can interpolate
all  unary functions. (Hence the conclusions
of
Theorem \ref{thm:holant-c}
hold.)
\end{lemma}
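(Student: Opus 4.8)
The plan is to use the gadget $[a,b,c]$ as the iterating matrix of a recursive construction and the unary $[1,0]$ (or $[0,1]$, both freely available in Holant$^c$) as the starting signature, and then verify the three hypotheses of Lemma~\ref{lemma:interpolation_conditions}. Concretely, I would view the symmetric arity-2 signature $[a,b,c]$ as the $2\times 2$ matrix $A=\begin{bmatrix} a & b \\ b & c \end{bmatrix}$, form the sequence of $\mathcal{F}$-gates $G_s$ by composing $s$ copies of this arity-2 gadget in a chain and capping with the unary starter $g$, and note that the signatures $[x_s,y_s]^{\tt T}$ satisfy $[x_s,y_s]^{\tt T}=A[x_{s-1},y_{s-1}]^{\tt T}$ exactly as in (\ref{sig-rec-rel-matrix}). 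Then, once interpolation of all unary signatures succeeds, we get {\rm Pl-Holant}$^*(\mathcal{F}) \leq_T$ {\rm Pl-Holant}$^c(\mathcal{F})$; since the Holant$^*$ dichotomy (Theorem~\ref{thm-holant-star}) holds verbatim for planar graphs, either {\rm Pl-Holant}$^*(\mathcal{F})$ is \#P-hard—hence so is {\rm Pl-Holant}$^c(\mathcal{F})$, and the ``unless'' clause of Theorem~\ref{thm:holant-c} is vacuously consistent—or {\rm Pl-Holant}$^*(\mathcal{F})$ is tractable, which forces $\mathcal{F}$ (after removing unaries) into one of the three classes, and one checks those classes land in case (1) or (2) of Theorem~\ref{thm:holant-c}. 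Either way the conclusion of Theorem~\ref{thm:holant-c} follows, which is the parenthetical assertion of the lemma.

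\textbf{Verifying the three conditions.} First, $\det(A) = ac - b^2 \neq 0$ by the hypothesis $b^2 \neq ac$, so condition~1 holds. Second, since $A$ is a real symmetric matrix it has two real eigenvalues $\alpha,\beta$ with $\alpha\beta = ac-b^2 \neq 0$ and $\alpha+\beta = a+c \neq 0$; in particular $\alpha \neq -\beta$, so $\alpha/\beta$ is a nonzero real number different from $-1$, and the only real roots of unity are $\pm 1$—$\alpha/\beta = 1$ would mean $A$ is a (nonzero) scalar multiple of the identity, forcing $b = 0$, contradiction. Hence $\alpha/\beta$ is not a root of unity and condition~3 holds. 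Third, I need the starter $g$ not to be an eigenvector of $A$. Here one uses the freedom to pick $g \in \{[1,0]^{\tt T}, [0,1]^{\tt T}\}$: if $[1,0]^{\tt T}$ were an eigenvector then $b = 0$ (the $(2,1)$ entry would be zero), contradiction; likewise $[0,1]^{\tt T}$ an eigenvector forces $b = 0$. So at least one—in fact either—of the two freely available unary signatures is not an eigenvector, and condition~2 holds with that choice of $g$. All three conditions of Lemma~\ref{lemma:interpolation_conditions} are met, so $(A,g)$ interpolates all unary signatures.

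\textbf{Expected main obstacle.} The computations above are routine; the only genuine point requiring care is the reduction step after interpolation—namely confirming that once all unary signatures are available for free, the planar Holant$^*$ dichotomy of Theorem~\ref{thm-holant-star} applies and that each of its three tractable classes, together with whatever unary signatures $\mathcal{F}$ originally contained, is already covered by case~(1) (tractability of {\rm Holant}$^c(\mathcal{F})$ via \cite{STOC09}) or case~(2) (every signature realizable by a matchgate) of Theorem~\ref{thm:holant-c}. This is essentially a bookkeeping comparison between the tractable classes of the $\mathrm{Holant}^*$ dichotomy and those of the $\mathrm{Holant}^c$ dichotomy, and is the part most deserving of an explicit argument rather than the eigenvalue bookkeeping, which is immediate from $b \neq 0$, $b^2 \neq ac$, and $a + c \neq 0$.
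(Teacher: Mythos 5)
Your proof is correct and follows essentially the same route as the paper: take $A=\begin{bmatrix} a & b \\ b & c \end{bmatrix}$ with starter $[1,0]$ or $[0,1]$, check nondegeneracy from $b^2\neq ac$, eigenvector-avoidance from $b\neq 0$, and non-root-of-unity ratio from $a+c\neq 0$ plus real symmetry, then invoke Lemma~\ref{lemma:interpolation_conditions} and reduce ${\rm Pl\text{-}Holant}^*(\mathcal{F})$ to ${\rm Pl\text{-}Holant}^c(\mathcal{F})$. Your observation that $b\neq 0$ already rules out \emph{both} starters as eigenvectors is a small tightening of the paper's "at least one of the two works," but the substance is identical.
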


\begin{proof}
we use the interpolation method as described in Section
\ref{sec:interpolation}. We consider two recursive constructions
$(\begin{bmatrix} a & b \\ b & c \end{bmatrix},
\begin{bmatrix}1 \\ 0 \end{bmatrix} )$ and $(\begin{bmatrix} a & b \\ b & c \end{bmatrix},
\begin{bmatrix}0 \\ 1 \end{bmatrix} )$, and argue that at least one
of them will succeed given the conditions on $a,b,c$.
We use $A$ to denote $\begin{bmatrix} a & b \\ b & c
\end{bmatrix}$. Since $b^2 \neq ac$, $A$ is non-degenerate,
the first condition of Lemma \ref{lemma:interpolation_conditions} is
satisfied for both recursive constructions. If both
$[1,0]$ and $[0,1]$ are column eigenvectors of $A$, then $b=0$, a
 contradiction. So at least for one of the two recursive constructions,
 the second condition of Lemma \ref{lemma:interpolation_conditions} is satisfied.
 Since $A$ is a real symmetric matrix,  both its eigenvalues are real.
 If the ratio of two real numbers is a root of unity,
 they must be the same or opposite to each other.
 If the two  eigenvalues are the same, we have $b=0$ and $a=c$,
 a contradiction. If the two eigenvalues are opposite to each other,
 then we have $a+c=0$, also a contradiction.
Therefore, the third condition of Lemma
\ref{lemma:interpolation_conditions} is also satisfied for both
recursive constructions. To sum up, at least one of the two
recursive constructions satisfies all the conditions of Lemma
\ref{lemma:interpolation_conditions}.
  This completes the proof.
\end{proof}



If we can construct from ${\cal F}$
 a gadget with a binary symmetric signature
$[a,b,c]$, which satisfies all the conditions in Lemma
\ref{lemma:interpolation}, then we are done. For most cases, we
prove the theorem by interpolating all  unary signatures.
However, in some more delicate
 cases, we are not able to do that. For example, if
all  signatures from ${\cal F}$
 have the parity condition,
which includes a proper superset of matchgate signatures, then all  unary
signatures we can realize have form $[a,0]$ or $[0,a]$, so we can
not interpolate all  unary signatures. For these cases, our starting
point is  Theorem \ref{thm:a010b}.

We define some families of symmetric signatures, which will be used
in our proof.
\begin{eqnarray*}
{\cal G}_1&=&\{[a,0,0,\cdots,0,b] ~\mid~ab \neq 0 \}\\
{\cal G}_2&=&\{[x_0,x_1,\cdots,x_k] ~\mid~ \forall i \mbox{ is even}, x_i=0 \mbox{ or } \forall i \mbox{ is odd}, x_i=0  \}\\
{\cal G}_3&=&\{[x_0,x_1,\cdots,x_k] ~\mid~ \forall i, x_i+x_{i+2}=0  \}\\
{\cal M} &=&\{~f ~\mid~ f \mbox{ is realizable by some matchgate } \}.
\end{eqnarray*}
We note that ${\cal G}_1$, ${\cal G}_2$ and ${\cal G}_3$ are
supersets of ${\cal F}_1$, ${\cal F}_2$ and ${\cal F}_3$
respectively. Furthermore (the real part of)
 ${\cal F}_2 \subseteq {\cal M} \subseteq {\cal G}_2$.
The conditions in ${\cal G}_2$ are called
parity conditions.  The following several lemmas all have the form ``If
${\cal F} \not \subseteq {\cal A}$, then the conclusions
of Theorem \ref{thm:holant-c}
hold." After proving each lemma, in subsequent  lemmas, we only need to
consider the case that ${\cal F} \subseteq {\cal A}$.


\begin{lemma}\label{lemma:inG1G2G3}
If ${\cal F} \not \subseteq {\cal G}_1 \cup {\cal G}_2\cup {\cal
G}_3$, then the conclusions
of
Theorem \ref{thm:holant-c}
hold.
\end{lemma}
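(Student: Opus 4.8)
The plan is to fix a signature $f\in\mathcal{F}$ with $f\notin\mathcal{G}_1\cup\mathcal{G}_2\cup\mathcal{G}_3$ --- non-degenerate, by the normalization preceding Theorem~\ref{thm:holant-c} --- write $f=[x_0,x_1,\dots,x_n]$, and realize from $f$ together with the free signatures $[1,0],[0,1]$ either a binary signature meeting the hypotheses of Lemma~\ref{lemma:interpolation} or one of the hard signatures of Theorem~\ref{thm:a010b}/Lemma~\ref{thm:a000b-and-010x}. Every unary or nullary signature lies in $\mathcal{G}_1\cup\mathcal{G}_2$, so $n\ge2$, and for $0\le i\le j\le n$ every ``window'' $[x_i,\dots,x_j]$ of $f$ is realizable by pinning the first $i$ and last $n-j$ dangling edges of $f$. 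The core reduction is Lemma~\ref{lemma:interpolation}: realizing a binary $[a,b,c]$ with $b\ne0$, $b^{2}\ne ac$, $a+c\ne0$ interpolates all unary signatures, so {\rm Pl-Holant}$^{*}(\mathcal{F})$ reduces to {\rm Pl-Holant}$^{c}(\mathcal{F})$, and the planar Holant$^{*}$ dichotomy (Theorem~\ref{thm-holant-star}) then concludes --- if that problem is tractable, Holant$^{*}(\mathcal{F})$ and hence Holant$^{c}(\mathcal{F})$ is tractable, i.e.\ case~(1) of Theorem~\ref{thm:holant-c}. For $n=2$ this is immediate: $f$ is non-degenerate so $x_1^{2}\ne x_0x_2$, while $f\notin\mathcal{G}_2$ and $f\notin\mathcal{G}_3$ give $x_1\ne0$ and $x_0+x_2\ne0$. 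So assume $n\ge3$; if some binary window $[x_i,x_{i+1},x_{i+2}]$ satisfies all three inequalities we are done, so assume henceforth that every binary window of $f$ fails at least one of them.

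Under this assumption I would first prove a structural dichotomy: either (b) every $x_i\ne0$, or (c) no two consecutive $x_i$ are nonzero. Otherwise there is a zero entry and an adjacent nonzero pair, hence a maximal block $[x_a,\dots,x_b]$ of consecutive nonzeros with $b\ge a+1$ which (some entry being zero) fails to reach at least one endpoint of the vector; then whichever of $[x_{b-1},x_b,0]$ or $[0,x_a,x_{a+1}]$ is the window bounded by that zero has nonzero middle and one nonzero end, so it meets Lemma~\ref{lemma:interpolation} --- contradiction.

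In case (b), every window fails by ``degeneracy'' ($x_{i+1}^{2}=x_ix_{i+2}$) or by ``antisymmetry'' ($x_i+x_{i+2}=0$), and over $\mathbb{R}$ these are mutually exclusive since antisymmetry would force $x_{i+1}^{2}=-x_i^{2}<0$. Not all windows can be degenerate (then $f$ is geometric, hence degenerate as a signature), and not all can be antisymmetric (then $x_i+x_{i+2}=0$ for all $i$, i.e.\ $f\in\mathcal{G}_3$), so there is an index $\ell\le n-3$ at which consecutive windows change type. The plan is to connect one unary window of $f$ --- $[x_\ell,x_{\ell+1}]$ if window $\ell$ is degenerate and window $\ell+1$ antisymmetric, $[x_{\ell+2},x_{\ell+3}]$ in the reverse situation --- onto one dangling edge of $f$, producing a signature $g$ of arity $n-1$ with $g_w=x_\ell x_w+x_{\ell+1}x_{w+1}$ (resp.\ $x_{\ell+2}x_w+x_{\ell+3}x_{w+1}$). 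A short computation then shows, using the degeneracy relation, that $g_{\ell+2}=0$ (resp.\ $g_\ell=0$), while antisymmetry and positivity of sums of squares keep the two neighbouring entries of $g$ nonzero; hence $g$ has a binary window $[\ast,\ast,0]$ or $[0,\ast,\ast]$ meeting Lemma~\ref{lemma:interpolation}.

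In case (c), all gaps between consecutive nonzero positions are $\ge2$, and not all can equal $2$ (then all nonzero positions share one parity, i.e.\ $f\in\mathcal{G}_2$); so for some pair of consecutive nonzero positions $p<q$ with $q-p\ge3$, the window $W=[x_p,0,\dots,0,x_q]$ is supported only on the all-$0$ and all-$1$ inputs, and self-composing two copies of $W$ along all but two of their edges yields $[x_p^{2},0,0,0,x_q^{2}]=[a,0,0,0,b]$ with $ab\ne0$. Separately, using $f\notin\mathcal{G}_1$ (so the nonzeros are not confined to the two endpoints) together with, when necessary, the global $0\leftrightarrow1$ reversal symmetry of the problem, I would extract the window $[0,1,0,0]$ (an Exact-One$_3$, which is a signature $[0,1,0,x]$ with $x=0\ne\pm1$); then Lemma~\ref{thm:a000b-and-010x} gives \#P-hardness of {\rm Pl-Holant}$^{c}(\mathcal{F})$. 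The part I expect to be most delicate is exactly this position bookkeeping in case (c): confirming that every zero/nonzero pattern that avoids the $\mathcal{G}_1,\mathcal{G}_2,\mathcal{G}_3$ forms simultaneously supplies a gap of length $\ge3$ to feed the $(=_4)$-weighted construction and an Exact-One$_3$ window, with the boundary configurations ($p=0$ or $q=n$) handled by the reversal symmetry --- and, in those few patterns where it is cleaner, producing instead a window $[c_1,0,c_2,0,c_3]$ and invoking Theorem~\ref{thm:a010b}. The case-(b) computation, the structural dichotomy, and the $n\le2$ case are all routine once the framework is in place.
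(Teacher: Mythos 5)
Your proposal is structurally parallel to the paper's proof — same opening (pin with $[1,0],[0,1]$ to realize windows, invoke Lemma~\ref{lemma:interpolation} whenever a window satisfies its three conditions), and the same split into ``no zero entries'' vs.\ ``no two consecutive nonzeros'' (your structural dichotomy is exactly the paper's observation that a window $[0,a,b]$ or $[a,b,0]$ with $ab\neq0$ would end the proof) — but the endgame in each branch is genuinely different. In the no-zeros branch, the paper locates a length-4 subsignature $[a,b,c,d]$ with $a+c=0$, $c^2=bd$, $b+d\neq0$ and splices \emph{two copies} of it along two edges to get $[2(b^2+c^2),\,c(b+d),\,(b+d)^2]$; you instead attach a \emph{unary} subsignature $[x_\ell,x_{\ell+1}]$ (or $[x_{\ell+2},x_{\ell+3}]$) onto one leg of $f$ and verify by direct computation that the resulting arity-$(n-1)$ signature acquires a $[\ast,\ast,0]$ or $[0,\ast,\ast]$ window with nonzero entries. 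Both gadgets are correct; yours is arguably simpler to analyze. In the sparse branch, the paper is noticeably cleaner: from $f\notin\mathcal{G}_2$ it extracts a subsignature $[0,x,0,\ldots,0,y]$ of even arity $\ge4$ and pairs up internal dangling edges to realize directly the binary signature $[0,x,y]$ with $xy\ne0$, closing via Lemma~\ref{lemma:interpolation} — the same hammer as the other branch. You instead assemble $[a,0,0,0,b]$ (by doubling a gap-$\ge3$ window) \emph{and} an Exact-One window $[0,1,0,0]$, then invoke the separate hardness Lemma~\ref{thm:a000b-and-010x}. This does work — every non-degenerate $f\notin\mathcal{G}_1\cup\mathcal{G}_2\cup\mathcal{G}_3$ with no two adjacent nonzeros really does supply both ingredients, up to the global reversal symmetry you mention — but the bookkeeping you flag as delicate is precisely what the paper's $[0,x,y]$ gadget sidesteps. (Two minor notes: your parenthetical should say ``not all gaps can be even,'' not ``equal $2$,'' though since gaps are $\ge2$ your conclusion that some gap is $\ge3$ still follows; and once Lemma~\ref{lemma:interpolation} yields all unaries, the finish via the planar Holant$^*$ dichotomy is exactly the paper's route.)
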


\begin{proof}
Since ${\cal F} \not \subseteq {\cal G}_1 \cup {\cal G}_2\cup {\cal
G}_3$, there exists an $f \in {\cal F}$ and  $f \not \in {\cal G}_1
\cup {\cal G}_2\cup {\cal G}_3$. Since all  unary signatures are
in ${\cal G}_3$, the arity of $f$ is greater than $1$ and $f$ is
non-degenerate. There are two cases according to whether $f$ has a
zero entry or not.

(1) $f$ has some zero entries. If there exists a sub-signature of
$f$ of the form $[0,a,b]$ or $[a,b,0]$, where $ab\neq 0$, then we
are done by Lemma \ref{lemma:interpolation}. Otherwise, we can
conclude that there are no two successive non-zero entries. So the
signature $f$ has this form $[0^{i_0}x_1 0^{i_1} x_2 0^{i_2}\cdots
x_k 0^{i_k}] $, where $k \ge 1$,
$x_j \neq 0$ and for all $1\leq j \leq k-1$,
$i_j\geq 1$. If for all $1\leq j \leq k-1$, $i_j $ is odd, (including $k=1$),
 then
$f\in {\cal G}_2$, a contradiction. Otherwise there exists a sub-signature
of form $[x,0,0,\cdots, 0 ,y]$, where $xy\neq 0$ and there
are a positive even number of $0$s between $x$ and $y$. If this is the
entire
$f$, then $f\in {\cal G}_1$, a contradiction. So there is one $0$
before $x$ or after $y$. By symmetry, we assume there is a $0$
before $x$, so we have a sub-signature $[0,x,0,0,\cdots, 0 ,y]$,
whose arity is even and at least 4. We label its dangling edges
$1,2, \cdots, 2k$. Then for every $i=1,2,\cdots, k-1$, we connect
dangling edges $2i+1$ and $2i+2$ together to form a regular edge. After
that, we have an \FF ~with arity 2, and its signature is $[0,x,y]$.
Then we are done by Lemma \ref{lemma:interpolation}.

(2) $f$ has no zero entry. We only need to prove that we can
construct a function $[a',b',c']$ satisfying the three conditions in
Lemma \ref{lemma:interpolation}. Suppose all sub-signatures of $f$
with arity 2 do not satisfy all the three conditions. For each
sub-signature $[a',b',c']$, either $a'+c'=0$, or $b'^2=a'c'$. If all
of them satisfy $a'+c'=0$, then $f \in \mathcal{G}_3$. A
contradiction. If all of them satisfy  $b'^2=a'c'$, then $f$ is
degenerate. A contradiction. W.l.o.g., we can assume there is a
sub-signature $[a,b,c,d]$ of $f$, such that $a+c=0$, $b+d \neq 0$,
and $c^2=bd$. We get this sub-signature $[a,b,c,d]$
by $[1,0]$ and $[0,1]$.  Combining two $[a,b,c,d]$, we can get a function
$[a',b',c']=[a^2+2b^2+c^2,ab+2bc+cd,b^2+2c^2+d^2]
=[2(b^2 + c^2), c (b+d), (b+d)^2]$.
Then $b'=c(b+d) \neq
0$. $a'+c'>0$.
And
$a'c'-b'^2= (b+d)^2 (2b^2 + c^2) >0$. We are
done by Lemma \ref{lemma:interpolation}.
\end{proof}

The following lemma uses Theorem~\ref{thm:a010b}
in an essential way, which in turns depends on the crossover.

\begin{lemma}\label{lemma:inG1MG3}
If ${\cal F} \not \subseteq {\cal G}_1 \cup {\cal M} \cup {\cal
G}_3$,  then the conclusions
of
Theorem \ref{thm:holant-c}
hold.
\end{lemma}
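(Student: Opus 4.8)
The plan is to exploit the already-proven Lemma~\ref{lemma:inG1G2G3}: since $\mathcal{M} \subseteq \mathcal{G}_2$, any $\mathcal{F}$ that fails $\mathcal{F}\subseteq \mathcal{G}_1\cup\mathcal{M}\cup\mathcal{G}_3$ but satisfies $\mathcal{F}\subseteq \mathcal{G}_1\cup\mathcal{G}_2\cup\mathcal{G}_3$ (otherwise we are done by the previous lemma) must contain some signature $f$ that lies in $\mathcal{G}_2\setminus(\mathcal{G}_1\cup\mathcal{M}\cup\mathcal{G}_3)$. So $f$ satisfies a parity condition (all odd entries zero, or all even entries zero), is \emph{not} a matchgate signature, is not of the all-zero-ends type $[a,0,\ldots,0,b]$, and is not in $\mathcal{G}_3$. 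First I would normalize: using the free $[1,0]$ and $[0,1]$ we may take sub-signatures, and by the characterization of matchgate signatures (Lemmas~\ref{even-geometric-series-lemma} and~\ref{odd-geometric-series-lemma}) the failure of the geometric-progression condition $r_1 z_{k-2}=r_2 z_k$ over the nonzero entries means we can extract a short sub-signature on the nonzero support that already violates the matchgate recurrence. The goal is to pin down, after taking sub-signatures and self-gluing pairs of dangling edges (which preserves the parity structure), a signature of arity $4$ of the shape $[a,0,1,0,b]$ with $ab\neq 1$ — then Theorem~\ref{thm:a010b} gives \#P-hardness of Pl-Holant$^c$, and since $[1,0],[0,1]$ are free, we are done.

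The key steps, in order, would be: (1) Reduce to a single offending $f\in\mathcal{G}_2\setminus(\mathcal{G}_1\cup\mathcal{M}\cup\mathcal{G}_3)$; WLOG $f$ has all odd-index entries zero (the odd-matchgate case is symmetric, handled by connecting a free unary to shift parity, or by an analogous argument). (2) Look at the even-indexed entries of $f$, say $z_0,z_2,z_4,\ldots$. Because $f\notin\mathcal{M}$, these do \emph{not} form a geometric progression, so there are three consecutive even positions $k-2,k,k+2$ with $z_{k-2}z_{k+2}\neq z_k^2$ (if some entry vanishes in between, special-case it: a zero followed by nonzeros collapses toward $\mathcal{G}_1$-type pieces, already excluded, or yields a $[a,0,0,0,b]$-type piece to be combined with the forthcoming $\mathcal{G}_3$-violation). (3) By self-connecting dangling-edge pairs to remove all variables outside a window of six consecutive edges, produce an arity-$4$ sub-gadget with symmetric signature $[z_{k-2},0,z_k,0,z_{k+2}]$, i.e. after normalizing the middle entry to $1$, a signature $[a,0,1,0,b]$ with $ab\neq 1$. (4) Invoke Theorem~\ref{thm:a010b}: Pl-Holant$^c([a,0,1,0,b])$ is \#P-hard, hence so is Pl-Holant$^c(\mathcal{F})$, establishing the conclusions of Theorem~\ref{thm:holant-c}.

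The main obstacle I expect is the bookkeeping in step (3): guaranteeing that the window of six (or more) edges can be isolated by a \emph{planar} self-gluing so that the resulting gadget is genuinely $[a,0,1,0,b]$ and not something with extra zero entries or a degenerate middle ($z_k=0$). The parity condition helps — every gluing of a pair of same-parity-shifted edges keeps the sum-of-weights pattern symmetric and preserves the ``odd entries zero'' structure — but one must argue the normalization $z_k\neq 0$ can always be arranged (if the entire even support is $z_0,0,\ldots,0,z_m$ we are in $\mathcal{G}_1$, excluded; so there is an interior nonzero even entry to serve as the middle), and handle the degenerate-$f$ caveat (already removed by the standing normalization). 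A secondary subtlety is the odd-matchgate parity case, where the natural normalization window is $[0,x,0,y,0]$-type of arity $4$; here one connects a free $[1,0]$ to one dangling edge to drop parity by one and land back in the even case, or re-runs the same extraction using the odd-matchgate recurrence of Lemma~\ref{odd-geometric-series-lemma} together with Lemma~\ref{lemma:interpolate010x}-style interpolation when the entries are not $\pm1$. Once the reduction target is correctly identified, the hardness is immediate from Theorem~\ref{thm:a010b}, so essentially all the work is in the case analysis that produces that target.
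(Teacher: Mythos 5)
Your main line of attack matches the paper's for the case where $f$ has a nonzero \emph{interior} entry of the correct parity: since $f \notin \mathcal{M}$, the matchgate recurrence fails, and you can restrict (using $[1,0]$, $[0,1]$ and self-gluing) to a sub-signature $[z_{k-2},0,z_k,0,z_{k+2}]$ with $z_k\neq 0$ and $z_{k-2}z_{k+2}\neq z_k^2$, then invoke Theorem~\ref{thm:a010b}. That part is fine, modulo noting (as the paper does) that any $\mathcal{G}_2$-signature of arity $\le 3$ is already in $\mathcal{M}$, so arity $\ge 4$ is automatic.

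The gap is in your step (2). The failure of the Lemma~\ref{even-geometric-series-lemma}/\ref{odd-geometric-series-lemma} recurrence is \emph{not} equivalent to the existence of three consecutive same-parity indices $k-2,k,k+2$ with $z_{k-2}z_{k+2}\neq z_k^2$. The matchgate condition fails even when every such triple satisfies $z_{k-2}z_{k+2}=z_k^2$, namely when all intermediate entries are zero but both ends are nonzero with at least two zeros between them. A concrete witness: $f=[0,1,0,0,0,0,0,1]$ lies in $\mathcal{G}_2\setminus(\mathcal{G}_1\cup\mathcal{M}\cup\mathcal{G}_3)$, yet every consecutive triple of odd-indexed entries $(1,0,0)$ and $(0,0,1)$ satisfies $z_{k-2}z_{k+2}=0=z_k^2$. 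In this regime there is no window $[a,0,1,0,b]$ with nonzero middle to extract, so Theorem~\ref{thm:a010b} never fires; your parenthetical resolution (a ``$[a,0,0,0,b]$-type piece to be combined with the forthcoming $\mathcal{G}_3$-violation'') is not a coherent fallback, since there is no $\mathcal{G}_3$ role in this case. The paper treats this as a separate case: when all $x_i=0$ for $i\in\{2,\ldots,n-2\}$, it classifies the possible shapes (parity plus $f\notin\mathcal{G}_1\cup\mathcal{M}$ pins $f$ down to essentially $[0,1,0,\ldots,0,x]$ with $x\neq 0$), takes the two sub-signatures $[0,1,0,0]$ and $[1,0,0,0,x]$, and then invokes Lemma~\ref{thm:a000b-and-010x} — a different hardness lemma — rather than Theorem~\ref{thm:a010b}. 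You need to add this branch explicitly for the argument to be complete.
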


\begin{proof}
If ${\cal F} \not \subseteq {\cal G}_1 \cup {\cal G}_2\cup {\cal
G}_3$, then by Lemma \ref{lemma:inG1G2G3}, we are done. Otherwise,
there exists a signature $f \in {\cal F} \subseteq {\cal G}_1 \cup
{\cal G}_2\cup {\cal G}_3 $ and $f \not \in {\cal G}_1 \cup {\cal
M}\cup {\cal G}_3$. Then it must be the case that $f \in {\cal
G}_2$. Note that every signature with arity at most $3$ in ${\cal
G}_2$ (this is called the parity condition)
is also contained in ${\cal M}$,
so $f$ is
of arity greater  than $3$.  Let $f=[x_0, x_1, \cdots, x_n]$, for some $n\geq 4$.
Suppose
 there exists some $i\in [2,3,\cdots, n-2]$ such that $x_i\neq 0$.
If $x_{i-2} x_{i+2} \neq x_i^2$, then we can get $[x_{i-2}, 0, x_i, 0, x_{i+2}]$
by $[1,0]$ and $[0,1]$ which restrict
the signature to a sub-signature.
Then  the problem is \#P-hard
by Theorem \ref{thm:a010b} and we are done. Otherwise, we  have
$x_{i-2} x_{i+2} = x_i^2 \neq 0$. Then starting from $x_{i-2}\neq 0$ and if
$i-2\in [2,3,\cdots, n-2]$, we can get $x_{i-4} x_{i} = x_{i-2}^2 \neq 0$.
Similarly we can start with $x_{i+2}$.
A signature satisfying the parity condition and is a geometric
series on the alternate entries is realizable by a matchgate~\cite{CaiC06,MGI},
a contradiction.

Now we may assume  $x_i=0$ for all $i\in [2,3,\cdots, n-2]$.
Since $f\in {\cal G}_2 -({\cal M} \cup {G_1}) $, we know that
there are only three possible subcases:
(1) $n$ is odd, $n \ge 5$,
 $x_0 x_{n-1} \neq 0$ and $x_1=x_{n}=0$;
(2) $n$ is odd,   $n \ge 5$,
$x_1 x_n \neq 0$ and $x_0=x_{n-1}=0$;
(3) $n \geq 6$ is even, $x_1 x_{n-1} \neq 0$
and $x_0=x_n=0$.
This uses the theory of matchgate realizability~\cite{CaiC06,MGI}.
Crucially, if $n$ is even and $n <6$, then $n=4$
and the case $x_1 x_{n-1} \neq 0$, $x_0=x_n=0$
belongs to ${\cal M}$.
The subcases (1) and (2) are reversals
 of each other and (3) contains a signature in form (1) and (2).
So after normalizing (and  connecting pairs
of edges together if $n> 5$),  we will get a signature
$[0,1,0,0,0,x]$ where $x\neq 0$. So we have both sub-signature $[0,1,0,0]$ and $[1,0,0,0,x]$.  As we proved in Lemma \ref{thm:a000b-and-010x}, the problem is
\#P-hard and we are done. This finishes the proof.
\end{proof}

\begin{lemma}\label{lemma:010x}
If $[0,1,0,x] \in {\cal F}$ (or $[1,0,x,0] \in {\cal F}$)
where  $x \in {\mathbb R}$, $x\neq \pm 1$, then the conclusions
of
Theorem \ref{thm:holant-c}
hold.
\end{lemma}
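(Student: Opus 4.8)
The plan is to combine the hypothesis with the failure of the tractability criteria of Theorem~\ref{thm:holant-c}. The ``$[1,0,x,0]$'' case reduces to the ``$[0,1,0,x]$'' case by reversing every signature, so I would assume $[0,1,0,x]\in\mathcal{F}$. If $\mathcal{F}$ meets condition (1) or (2) of Theorem~\ref{thm:holant-c} there is nothing to prove, so assume it does not; in particular the failure of (2) produces a signature $g\in\mathcal{F}$ not realizable by any matchgate, and by Lemma~\ref{lemma:inG1MG3} I may assume $\mathcal{F}\subseteq\mathcal{G}_1\cup\mathcal{M}\cup\mathcal{G}_3$, so that $g\in(\mathcal{G}_1\cup\mathcal{G}_3)\setminus\mathcal{M}$ (note $\mathcal{U}\subseteq\mathcal{G}_3$). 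If one has already reduced to $\mathcal{F}\subseteq\mathcal{A}$, then $g$ may be taken to be $[1,0,\dots,0,\pm1]$, $[1,1,-1,-1,\dots]$, or $[1,-1,-1,1,\dots]$. Throughout I use the free unaries $[1,0],[0,1]$ to pass to sub-signatures, and the fact that joining two cyclically adjacent dangling edges of a symmetric signature (a planar self-loop) keeps it symmetric and lowers its arity by two, sending $[z_0,\dots,z_n]$ to $[z_0+z_2,\dots,z_{n-2}+z_n]$.

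In Case~A, $g\in\mathcal{G}_1\setminus\mathcal{M}$, so $g=[a,0,\dots,0,b]$ with $ab\neq0$ and arity $\ge3$. Repeated self-loops bring $g$ down to $[a,0,0,0,b]$ when the arity is even and to $[a,0,0,b]$ when it is odd; in the odd case, joining two copies of $[a,0,0,b]$ along one edge gives the arity-$4$ symmetric signature $[a^2,0,0,0,b^2]$. Either way I obtain $[a',0,0,0,b']$ with $a'b'\neq0$, and since $[0,1,0,x]\in\mathcal{F}$ with $x\neq\pm1$, Lemma~\ref{thm:a000b-and-010x} shows Pl-Holant$^c(\mathcal{F})$ is \#P-hard.

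In Case~B, $g\in\mathcal{G}_3\setminus\mathcal{M}$. The parity and geometric-series conditions for matchgates force $g$, up to scalar, to be $[x_0,x_1,-x_0,-x_1,\dots]$ with $x_0x_1\neq0$; restricting to arity $2$ gives $[x_0,x_1,-x_0]$ (or $g$ is already a nonzero unary). Lemma~\ref{lemma:interpolation} does not apply to $[x_0,x_1,-x_0]$ directly, since there $a+c=0$ (its recursion matrix has opposite eigenvalues), so I route through $[0,1,0,x]$: connecting the two ``internal'' inputs of the signature $[0,1,0,x]$ through the path $[x_0,x_1,-x_0]$ realizes the unary $[u_0,u_1]=[2x_1,\ x_0(1-x)]$, with $u_0u_1\neq0$ because $x\neq1$ (if $g$ is already a unary, take $[u_0,u_1]=g$). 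Attaching $[u_0,u_1]$ to one input of $[0,1,0,x]$ realizes the binary symmetric signature $[u_1,u_0,u_1x]$, whose parameters satisfy $b=u_0\neq0$ and $a+c=u_1(1+x)\neq0$ since $x\neq-1$; if also $u_0^2\neq u_1^2x$, Lemma~\ref{lemma:interpolation} finishes. In the coincidence $u_0^2=u_1^2x$, attaching $[u_0,u_1]$ to two inputs of $[0,1,0,x]$ realizes $[2u_0u_1,\ u_0^2+xu_1^2]=2u_0\,[u_1,u_0]$, and then attaching $[u_1,u_0]$ to one input of $[0,1,0,x]$ realizes $[u_0,u_1,u_0x]$, for which $b^2-ac=u_1^2-u_0^2x=u_1^2(1-x^2)\neq0$ because $x\neq\pm1$; again Lemma~\ref{lemma:interpolation} applies. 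When $x=0$ the construction lands directly on $[u_1,u_0,0]$, with $b^2-ac=u_0^2\neq0$, so no coincidence case arises. Interpolating all unary signatures reduces Pl-Holant$^*(\{[0,1,0,x]\})$ to Pl-Holant$^c(\mathcal{F})$; and Pl-Holant$^*(\{[0,1,0,x]\})$ is \#P-hard by Theorem~\ref{thm-holant-star}, since $[0,1,0,x]$ has arity $3$ and, for $x\neq\pm1$, lies in none of the three tractable classes. Hence Pl-Holant$^c(\mathcal{F})$ is \#P-hard.

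The bookkeeping in Case~A (self-loops, doubling) and the parity-condition analysis pinning down the shape of $g$ are routine. The delicate point is Case~B: because $[x_0,x_1,-x_0]$ has $a+c=0$ one is forced to interact it with $[0,1,0,x]$, and the main thing to verify is that $x\neq\pm1$ is precisely what makes either $[u_1,u_0,u_1x]$ or its degenerate-case replacement $[u_0,u_1,u_0x]$ meet the hypotheses of Lemma~\ref{lemma:interpolation}; one must also check that all the auxiliary constructions are available as genuinely \emph{planar} gadgets. As a fallback, if at any stage a restriction of $g$ directly exhibits a sub-signature $[p,0,q,0,r]$ violating $q^2=pr$, one may instead invoke Theorem~\ref{thm:a010b}.
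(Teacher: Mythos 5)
Your proof is correct, and it shares the paper's overall architecture: reduce to $\mathcal{F}\subseteq\mathcal{G}_1\cup\mathcal{M}\cup\mathcal{G}_3$ via Lemma~\ref{lemma:inG1MG3}, then split on $g\in\mathcal{G}_1\setminus\mathcal{M}$ versus $g\in\mathcal{G}_3\setminus\mathcal{M}$. The one genuine difference is at the very first step. The paper begins by invoking Lemma~\ref{lemma:interpolate010x} to interpolate $[0,1,0,0]$ whenever $x\neq 0$, and then works exclusively with $[0,1,0,0]$ thereafter; you never do this and instead carry $[0,1,0,x]$ through the whole argument. That choice costs you in the $\mathcal{G}_3$ branch: where the paper's argument is a one-liner (connect the unary sub-signature $[x_0,x_1]$ to $[0,1,0,0]$ to get $[x_1,x_0,0]$, which immediately satisfies Lemma~\ref{lemma:interpolation}), your version produces the binary $[u_1,u_0,u_1 x]$ and must pivot through an extra coincidence case $u_0^2=xu_1^2$ before finally landing inside Lemma~\ref{lemma:interpolation}. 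In the $\mathcal{G}_1$ branch the paper splits on parity (odd arity: self-loop down to a unary $[1,y]$ and again use $[0,1,0,0]$; even arity: $[1,0,0,0,y]$ plus Lemma~\ref{thm:a000b-and-010x}), whereas you unify both parities by doubling $[a,0,0,b]$ into $[a^2,0,0,0,b^2]$ and invoking Lemma~\ref{thm:a000b-and-010x} with the original $[0,1,0,x]$, which is cleaner. Your route is slightly more self-contained (no dependence on Lemma~\ref{lemma:interpolate010x}), the paper's is shorter overall. One small exposition nit: in Case~A you assert arity $\ge 3$ for $g\in\mathcal{G}_1\setminus\mathcal{M}$ without justification; this holds because arity~$2$ signatures $[a,0,b]$ are always in $\mathcal{M}$ and all arity-$1$ signatures lie in $\mathcal{G}_3$ and are therefore covered by Case~B — you hint at this with the remark $\mathcal{U}\subseteq\mathcal{G}_3$, but it should be said.
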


\begin{proof}
If $x\neq 0$, we can use Lemma \ref{lemma:interpolate010x} to interpolate $[0,1,0,0]$.
So we assume we have $[0,1,0,0]$ from ${\cal F}$.
If ${\cal F} \not \subseteq {\cal G}_1 \cup {\cal M}\cup {\cal G}_3$,  then by Lemma \ref{lemma:inG1MG3}, we are done.
If ${\cal F} \subseteq
 {\cal M}$, then the problem is tractable and we are done. Otherwise, there exists a signature $f \in {\cal F} \subseteq {\cal G}_1 \cup
{\cal M}\cup {\cal G}_3 $ and $f \not \in {\cal M}$. That is $f \in ({\cal G}_1 \cup {\cal G}_3 -{\cal M})$.

If $f$ has  arity $\ge 1$ and of
the form $[x_0, x_1, -x_0, -x_1, x_0 \cdots]  \in {\cal G}_3 - {\cal M}$, then we will have $x_0 x_1 \neq 0$. Otherwise we would have $f \in {\cal M}$, a contradiction.
Connecting one unary signature $[x_0,x_1]$ to $[0,1,0,0]$, we get $[x_1,x_0,0]$ which satisfies all the conditions in Lemma \ref{lemma:interpolation}, and we are done.

Now we consider $f=[1,0,0,\cdots,0,y] \in {\cal G}_1 - {\cal M}$, where $y\neq 0$. Since $f \not \in {\cal M} $, its arity $n$ is greater than $2$.
If $n$ is odd, we can connect its edges except one to get a unary signature $[1,y]$. Then we can use a similar argument as above and we are done.
If $n$ is even, then it is at least $4$,
since $f \not \in {\cal M}$. After connecting its edges except four, we can get $[1,0,0,0,y]$. Together with $[0,1,0,0]$, we know the problem is
\#P-hard by Lemma \ref{thm:a000b-and-010x}. This completes the proof.
 \end{proof}


\begin{lemma}\label{lemma:inG1F2G3}
If ${\cal F} \not \subseteq {\cal G}_1 \cup {\cal F}_2\cup {\cal
G}_3$, then the conclusions
of
Theorem \ref{thm:holant-c} hold.
\end{lemma}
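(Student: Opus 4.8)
The plan is to proceed exactly as in the preceding lemmas: reduce to the case $\mathcal{F}\subseteq\mathcal{G}_1\cup\mathcal{M}\cup\mathcal{G}_3$ by Lemma~\ref{lemma:inG1MG3}, and then exploit the gap between $\mathcal{M}$ and $\mathcal{F}_2$ to build a gadget that triggers either Lemma~\ref{lemma:interpolation}, Lemma~\ref{lemma:010x}, or Theorem~\ref{thm:a010b}. Concretely, if $\mathcal{F}\not\subseteq\mathcal{G}_1\cup\mathcal{M}\cup\mathcal{G}_3$ we are done by Lemma~\ref{lemma:inG1MG3}, so assume $\mathcal{F}\subseteq\mathcal{G}_1\cup\mathcal{M}\cup\mathcal{G}_3$. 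By the hypothesis $\mathcal{F}\not\subseteq\mathcal{G}_1\cup\mathcal{F}_2\cup\mathcal{G}_3$ there is a signature $f\in\mathcal{F}$ with $f\notin\mathcal{G}_1\cup\mathcal{F}_2\cup\mathcal{G}_3$; since $f\in\mathcal{G}_1\cup\mathcal{M}\cup\mathcal{G}_3$ it must be that $f\in\mathcal{M}\setminus(\mathcal{G}_1\cup\mathcal{F}_2\cup\mathcal{G}_3)$. If $f\notin\mathcal{G}_2$ then, being a matchgate signature, it would have to be odd-type but with a nonzero even entry or vice versa — impossible — so $f\in\mathcal{G}_2$, i.e.\ $f$ satisfies the parity condition and (by Lemmas~\ref{even-geometric-series-lemma} and~\ref{odd-geometric-series-lemma}) its nonzero entries form a geometric progression $z_{k-2}/z_k=r_2/r_1$ along the alternating positions.

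The heart of the argument is therefore: a symmetric matchgate signature $f\in\mathcal{G}_2$ that is \emph{not} in $\mathcal{F}_2$ must have a ``geometric ratio'' $r=r_2/r_1$ that is not one of the special values allowed in $\mathcal{F}_2$ (which, reading off the explicit list after Theorem~\ref{thm-holnat-c}, corresponds to $|r|=1$, specifically $r\in\{1,-1\}$ for the real symmetric members of $\mathcal{F}_2$, i.e. $[1,0,1,0,\ldots]$-type and $[0,1,0,1,\ldots]$-type signatures up to scalar). So $f$ has the form $[\ldots]$ with consecutive nonzero alternating entries in ratio $r$ with $r\notin\{0,\pm1\}$. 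I would then use $[1,0]$ and $[0,1]$ (freely available in Holant$^c$) to take a sub-signature of $f$ of arity $3$: either $[1,0,r,0]$ or $[0,1,0,r]$ (after scaling), with $r$ real and $r\neq\pm1$. This is exactly the hypothesis of Lemma~\ref{lemma:010x}, so we are done. If $r=0$ cannot occur because the progression has at least two consecutive nonzero alternating entries by the choice of $f$; one must be slightly careful when $f$ has only two nonzero entries (then $f\in\mathcal{G}_1$, contradiction) or when those two entries are at the extreme ends, in which case $f\in\mathcal{G}_1$ again — so the genuine non-matchgate, non-$\mathcal{F}_2$, non-$\mathcal{G}_1$, non-$\mathcal{G}_3$ case always leaves room to extract an arity-$3$ sub-signature of the required shape.

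I expect the main obstacle to be the careful case analysis pinning down precisely which matchgate signatures lie in $\mathcal{M}\setminus(\mathcal{G}_1\cup\mathcal{F}_2\cup\mathcal{G}_3)$ and confirming that each yields either $[1,0,r,0]$ or $[0,1,0,r]$ with $r\notin\{0,\pm1\}$ after restriction by $[1,0]/[0,1]$ and connecting pairs of dangling edges. In particular I need to rule out the boundary situations where the geometric ratio is $\pm1$ (giving an $\mathcal{F}_2$-type signature, excluded) or where truncation collapses the signature into $\mathcal{G}_1$ or $\mathcal{G}_3$; the parity structure of $\mathcal{G}_2$ and the matchgate characterization (Lemmas~\ref{even-geometric-series-lemma}, \ref{odd-geometric-series-lemma}) are what make this bookkeeping go through. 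Once the sub-signature of the correct form is in hand, Lemma~\ref{lemma:010x} (which itself invokes Lemma~\ref{lemma:interpolate010x}, Theorem~\ref{thm:a010b}, and Lemma~\ref{thm:a000b-and-010x}) delivers the conclusion of Theorem~\ref{thm:holant-c}.
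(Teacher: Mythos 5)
Your proposal takes a genuinely different route from the paper's, and it is essentially correct. The paper's proof starts from Lemma~\ref{lemma:inG1G2G3}, so it must handle the full gap $\mathcal{G}_2\setminus(\mathcal{G}_1\cup\mathcal{F}_2\cup\mathcal{G}_3)$, where the alternating nonzero entries can be arbitrary; this forces a split into ``some window has ratio $x\neq\pm1$ (apply Lemma~\ref{lemma:010x})'' versus ``all windows have ratio $\pm1$ but both signs occur, yielding a sub-signature $[1,0,1,0,-1]$ (apply Theorem~\ref{thm:a010b} directly).'' You instead start from the stronger Lemma~\ref{lemma:inG1MG3}, shrinking the gap to $\mathcal{M}\setminus(\mathcal{G}_1\cup\mathcal{F}_2\cup\mathcal{G}_3)$. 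Since matchgate signatures of arity $\ge 3$ necessarily have their in-parity entries forming a geometric progression (or a single interior nonzero entry), you always land on a ratio $r\neq\pm1$ and Lemma~\ref{lemma:010x} alone finishes. This is a tidier reduction: the extra structural constraint from $\mathcal{M}$ eliminates the ``alternating $\pm1$'' subcase and the direct invocation of Theorem~\ref{thm:a010b}. Both lemmas you invoke are proved earlier in the paper and do not depend on the statement being proved, so there is no circularity.

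Two small inaccuracies, neither fatal. First, your claim that $r=0$ ``cannot occur'' is false: the signatures with exactly one nonzero entry at an interior in-parity position, e.g.\ $[0,1,0,0]$ or $[0,0,1,0]$ (and their higher-arity analogues $[0,1,0,\ldots,0]$, $[0,\ldots,0,1,0]$), are non-degenerate matchgate signatures, lie in $\mathcal{M}\setminus(\mathcal{G}_1\cup\mathcal{F}_2\cup\mathcal{G}_3)$, and correspond precisely to the degenerate case $r_1=0$ or $r_2=0$ of Lemmas~\ref{even-geometric-series-lemma}--\ref{odd-geometric-series-lemma}. Your assertion that these would land in $\mathcal{G}_1$ is wrong ($\mathcal{G}_1$ requires both end entries nonzero). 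Fortunately this does not damage the proof: such $f$ yield a sub-signature $[0,1,0,0]$ (or its reversal $[0,0,1,0]$, handled by the reversal symmetry of Holant$^c$), i.e.\ $[0,1,0,x]$ with $x=0$, and Lemma~\ref{lemma:010x} only requires $x\neq\pm1$, so $x=0$ is fine. Second, you attribute $r=-1$ to $\mathcal{F}_2$; in fact $\mathcal{F}_2$ corresponds to $r=1$ while $r=-1$ gives signatures in $\mathcal{G}_3$ (specifically $\mathcal{F}_3$). Your exclusion of $r=\pm1$ is nonetheless correct because the hypothesis rules out both $\mathcal{F}_2$ and $\mathcal{G}_3$; you just need to credit each value of $r$ to the right excluded class.
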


\begin{proof}
If ${\cal F} \not \subseteq {\cal G}_1 \cup {\cal G}_2\cup {\cal
G}_3$, then by Lemma \ref{lemma:inG1G2G3}, we are done. Otherwise,
there exists a signature $f \in {\cal F} \subseteq {\cal G}_1 \cup
{\cal G}_2\cup {\cal G}_3 $ and $f \not \in {\cal G}_1 \cup {\cal
F}_2\cup {\cal G}_3$. Then it must be the case that $f \in {\cal
G}_2$. Note that every signature with arity less than $3$ in ${\cal
G}_2$ is also contained in ${\cal G}_1 \cup  {\cal G}_3$, so $f$ is
of arity greater than $2$. Since $f \not \in {\cal G}_1$, there is
some non-zero in the middle of the signature $f$, after
normalization, we can assume there is a sub-signature of form
$[0,1,0,x]$ (or $[x,0,1,0]$). If $x\neq \pm 1$, then by
Lemma~\ref{lemma:010x}, we are done.
Otherwise, for every such pattern, we have $x=\pm 1$. Since $f \not
\in {\cal F}_2$, then there is some sub-signature $[0,1,0,-1]$ and
because $f \not \in {\cal G}_3$, there is some sub-signature
$[0,1,0,1]$. Therefore, there is a sub-signature $[1,0,1,0,-1]$ of
$f$. Then by Theorem~\ref{thm:a010b}, we
know that the problem is \#P-hard and we are done. This completes
the proof.
\end{proof}

\begin{lemma}\label{lemma:inG1G3}
If ${\cal F} \not \subseteq {\cal G}_1 \cup  {\cal G}_3$, then the conclusions
of
Theorem \ref{thm:holant-c} hold.
\end{lemma}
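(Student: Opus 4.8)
The plan is to funnel every remaining case into one of two previously-proved hardness sources: Theorem~\ref{thm:a010b} (the crossover), or --- through interpolating all unary signatures --- the planar-valid Holant$^*$ dichotomy of Theorem~\ref{thm-holant-star}. First I would prune down to the three parity families. If ${\cal F}\not\subseteq{\cal G}_1\cup{\cal F}_2\cup{\cal G}_3$ we are done by Lemma~\ref{lemma:inG1F2G3}, so assume ${\cal F}\subseteq{\cal G}_1\cup{\cal F}_2\cup{\cal G}_3$; since ${\cal F}\not\subseteq{\cal G}_1\cup{\cal G}_3$ there is a signature $f\in{\cal F}$ with $f\in{\cal F}_2$ but $f\notin{\cal G}_1\cup{\cal G}_3$, which up to a nonzero scalar is $[1,0,1,0,1,\ldots]$ of even arity $\ge 4$ or $[0,1,0,1,\ldots]$ of arity $\ge 3$. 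Pinning $f$ with the free unaries $[1,0],[0,1]$ and taking self-loops yields the disequality $[0,1,0]$, the binary equality $[1,0,1]$, and (when $f$ has arity $\ge 4$) the signature $[1,0,1,0,1]$; these will be our workhorses. Two tractable possibilities can be discharged at once: if ${\cal F}\subseteq{\cal M}$ then Pl-Holant$^c({\cal F})$ is tractable by FKT, and if ${\cal F}\subseteq{\cal F}_1\cup{\cal F}_2\cup{\cal F}_3$ then Holant$^c({\cal F})$ is already tractable by Theorem~\ref{thm-holnat-c} --- in either case the conclusion holds. So from now on ${\cal F}\not\subseteq{\cal M}$ and ${\cal F}\not\subseteq{\cal F}_1\cup{\cal F}_2\cup{\cal F}_3$.

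Now I would split on the parity condition. Suppose first that ${\cal F}$ contains a signature not respecting parity; the simplest instances are a $g\in{\cal G}_1$ of odd arity $\ge 3$ (so $g=[a,0,\ldots,0,b]$ with $ab\neq 0$) or a $g\in{\cal G}_3$ with its first two entries nonzero, but more generally one can first combine a member of ${\cal F}\setminus({\cal F}_1\cup{\cal F}_2\cup{\cal F}_3)$ with equality signatures available in ${\cal F}$ to produce an \emph{imbalanced} generalized equality $[a,0,\ldots,0,b]$ with $a^2\neq b^2$. From such a $g$, connecting all its edges into $f$ (and pinning) gives a unary $[p,q]$ with $pq\neq 0$ and $q\neq\pm p$; attaching this unary to an edge of $f$ and pinning produces a binary symmetric signature $[p,q,p]$ (or $[q,p,q]$) that satisfies all three hypotheses of Lemma~\ref{lemma:interpolation}. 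Hence we can interpolate every unary signature, reducing Pl-Holant$^c({\cal F})$ to Pl-Holant$^*({\cal F})$, which is tractable or \#P-hard by Theorem~\ref{thm-holant-star}; either way the conclusion of Theorem~\ref{thm:holant-c} holds. Suppose instead that every signature of ${\cal F}$ respects parity. Since ${\cal F}\not\subseteq{\cal M}$ and ${\cal F}\not\subseteq{\cal F}_1\cup{\cal F}_2\cup{\cal F}_3$, ${\cal F}$ still gives access (after collapsing pairs of dangling edges) to a parity-respecting generalized equality $[a,0,0,0,b]$ with $ab\neq0$ and $a^2\neq b^2$. Here no odd unary can be interpolated --- this is exactly the regime of Theorem~\ref{thm:a010b} --- so I would instead build a planar ${\cal F}$-gate with a core copy of $f$, supplying the nonzero weight-$2$ entry, surrounded by copies of $[a,0,0,0,b]$ (and of $[1,0,1,0,1]$, $[0,1,0]$), whose signature has the form $[a',0,1,0,b']$ with $a'b'\neq1$, and then invoke Theorem~\ref{thm:a010b}. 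In the low-arity/degenerate corners I would fall back to interpolating (as in the proof of Lemma~\ref{thm:a000b-and-010x}, or via Lemma~\ref{lemma:interpolate010x}) to reach $\{[a,0,0,0,b],[0,1,0,x]\}$ with $x\neq\pm1$, or a lone $[0,1,0,x]$ with $x\neq\pm1$, or $[0,0,1,0,0]$, which are \#P-hard by Lemmas~\ref{thm:a000b-and-010x}, \ref{lemma:010x} and \ref{lemma:00100}.

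The main obstacle is the gadget of the all-parity case. Every signature at our disposal there --- $f$, $[a,0,0,0,b]$, $[1,0,1,0,1]$, $[0,1,0]$ --- obeys the parity condition, and within the parity-respecting world all the obvious ways of wiring these pieces together collapse to degenerate or matchgate-realizable (hence tractable) signatures; the only escape is a genuinely crossing-free gadget whose signature borrows its nonzero middle entry from the Hadamard-type $f$ while its outer entries absorb the imbalance $a/b$ of the generalized equality, and checking that $a'$ and $b'$ can be pushed to any prescribed pair with $a'b'\neq1$ demands an explicit computation over ${\mathbb C}$ that, exactly as with the crossover construction behind Theorem~\ref{thm:a010b}, resists combinatorial interpretation. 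A secondary but real chore is the case bookkeeping: identifying, in each configuration of ${\cal F}$ that is neither matchgate-realizable nor contained in ${\cal F}_1\cup{\cal F}_2\cup{\cal F}_3$, the right parity-breaking or imbalanced ingredient, and verifying that the reduction target genuinely escapes all tractable classes so that \#P-hardness is actually obtained.
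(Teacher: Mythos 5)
Your opening moves match the paper's: reduce to ${\cal F}\subseteq{\cal G}_1\cup{\cal F}_2\cup{\cal G}_3$ via Lemma~\ref{lemma:inG1F2G3}, isolate some $f\in{\cal F}_2\setminus({\cal G}_1\cup{\cal G}_3)$ of arity at least $3$ giving a $[1,0,1,0]$ or $[0,1,0,1]$ sub-signature, and then extract a ``witness'' $g\in{\cal F}\setminus({\cal F}_1\cup{\cal F}_2\cup{\cal F}_3)$, which must lie in $({\cal G}_1\setminus{\cal F}_1)\cup({\cal G}_3\setminus{\cal F}_3)$. Your parity-breaking case is essentially the paper's: whenever $g$ yields a unary $[1,b]$ with $b\notin\{0,\pm 1\}$ (which happens for $g\in{\cal G}_3\setminus{\cal F}_3$, or $g\in{\cal G}_1\setminus{\cal F}_1$ of odd arity), plugging it into $[1,0,1,0]$ produces a binary $[1,b,1]$ meeting all three hypotheses of Lemma~\ref{lemma:interpolation}, so all unaries can be interpolated.

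The genuine gap is in your all-parity case. Once you have a parity-respecting $g\in{\cal G}_1\setminus{\cal F}_1$ (necessarily of even arity), you obtain a binary $[1,0,b]$ with $b\notin\{0,\pm 1\}$, and at that point there is no need to design a new $[a',0,1,0,b']$-producing gadget and re-invoke Theorem~\ref{thm:a010b} directly: simply attaching $[1,0,b]$ to the dangling edges of $[1,0,1,0]$ yields a \emph{symmetric} ternary of the form $[1,0,c,0]$ with $c\in\mathbb{R}$, $c\notin\{0,\pm 1\}$ (for instance $c=b^2$ if one copy of $[1,0,b]$ is placed on each of the three edges), and Lemma~\ref{lemma:010x} finishes. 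You mention this escape only as a ``low-arity/degenerate corner'' fallback, but it is in fact the entire case: the parity-respecting witness always reduces to $[1,0,b]$, and Lemma~\ref{lemma:010x} (which already packages the crossover machinery of Theorem~\ref{thm:a010b} and Lemma~\ref{thm:a000b-and-010x}) handles it uniformly. The gadget you propose --- a central $f$ ringed by $[a,0,0,0,b]$, $[1,0,1,0,1]$ and $[0,1,0]$ whose signature is some $[a',0,1,0,b']$ with $a'b'\ne 1$ --- is unspecified, unverified, and in fact looks unlikely to escape $a'b'=1$ with the parity-respecting ingredients you list (e.g.\ dressing $[1,0,1,0,1]$ with $[1,0,b]$'s symmetrically gives $[1,0,b^2,0,b^4]$, whose normalized outer product is $1$). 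Since a proof needs this gadget or an alternative, and the alternative (Lemma~\ref{lemma:010x}) is what actually closes the case, you should replace the speculative gadget step by the direct reduction to Lemma~\ref{lemma:010x}. A secondary wording issue: ``combine with equality signatures available in ${\cal F}$'' presupposes equalities in ${\cal F}$ that are not freely given in the Holant$^c$ setting; the salvage is that the witness $g$ itself already provides the imbalanced $[1,b]$ or $[1,0,b]$.
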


\begin{proof}
If ${\cal F} \not \subseteq {\cal G}_1 \cup {\cal F}_2\cup {\cal
G}_3$, then by Lemma \ref{lemma:inG1F2G3}, we are done. Otherwise,
there exists a signature $f \in {\cal F} \subseteq {\cal G}_1 \cup
{\cal F}_2\cup {\cal G}_3 $ and $f \not \in {\cal G}_1 \cup {\cal
G}_3$. Then it must be the case that $f \in {\cal F}_2$. Note that
every signature with arity less than $3$ in ${\cal F}_2$ is also
contained in ${\cal G}_1 \cup  {\cal G}_3$, so $f$ is of arity
at least 3. Then $f$ has a sub-signature $[1,0,1,0]$ or
$[0,1,0,1]$. By symmetry, we assume it is $[1,0,1,0]$. If ${\cal F}
\subseteq {\cal F}_1 \cup {\cal F}_2 \cup {\cal F}_3 $, then Theorem
\ref{thm:holant-c} trivially holds and there is nothing to prove. If
not, there exists a signature $g \in {\cal F}
- {\cal F}_1 \cup {\cal F}_2 \cup {\cal F}_3 $.
By ${\cal F}  \subseteq {\cal G}_1 \cup {\cal F}_2\cup {\cal
G}_3$,
 either $g \in {\cal G}_1-{\cal F}_1 \cup
{\cal F}_2 \cup {\cal F}_3 ~(\subseteq{\cal G}_1-{\cal F}_1)$ or
$g \in {\cal G}_3-{\cal F}_1 \cup {\cal F}_2 \cup {\cal F}_3
~(\subseteq{\cal G}_3-{\cal F}_3)$.

For the first case,  $g \in ({\cal G}_1-{\cal F}_1)$, after a
scale, $g$ is of form $[1,0,0,\cdots, b]$, where $b \not \in
\{-1,0,1\}$. If the arity of $g$ is odd, we can realize $[1,b]$. (We
connect  every two adjacent dangling edges into one edge and leave one
dangling edge.) Then connecting this unary signature to one dangling
edge of $[1,0,1,0]$,
 we can realize a binary
signature $[1,b,1]$.
Then by Lemma \ref{lemma:interpolation}, Theorem \ref{thm:holant-c}
holds. If the arity of $g$ is even, we can realize $[1,0,b]$ (leave
two dangling edges). By connecting  one of its dangling edge to one
dangling edge of $[1,0,1,0]$, we can have a new ternary signature
$[1,0,b,0]$.
By Lemma \ref{lemma:010x}, we are done.

For the second case $g \in ({\cal G}_3-{\cal F}_3)$, $g$ has a
sub-signature of form $[1,b]$, where $b \not \in \{-1,0,1\}$. By the
same argument as above, Theorem \ref{thm:holant-c} holds. This
completes the proof.
\end{proof}

\begin{lemma}\label{lemma:inG1F3}
If ${\cal F} \not \subseteq {\cal G}_1 \cup  {\cal F}_3$, then the conclusions
of
Theorem \ref{thm:holant-c} hold.
\end{lemma}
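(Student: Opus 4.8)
The plan is to prolong the chain of lemmas in exactly the same spirit. First I would invoke Lemma~\ref{lemma:inG1G3}: if ${\cal F}\not\subseteq{\cal G}_1\cup{\cal G}_3$ we are done, so I may assume ${\cal F}\subseteq{\cal G}_1\cup{\cal G}_3$. Combined with the hypothesis this produces a signature $f\in{\cal F}$ with $f\in{\cal G}_3\setminus({\cal G}_1\cup{\cal F}_3)$. Up to a scalar, such an $f$ has value $\pm x_0$ on inputs of even Hamming weight and $\pm x_1$ on inputs of odd Hamming weight, following the period‑$4$ pattern $x_0,x_1,-x_0,-x_1,x_0,\dots$. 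I would first dispose of the degenerate possibilities: if $x_0=0$, $x_1=0$, or $x_0=\pm x_1$, then $f$ is proportional to one of $[0,1,0,-1,\dots]$, $[1,0,-1,0,\dots]$, $[1,1,-1,-1,\dots]$, $[1,-1,-1,1,\dots]$, i.e.\ $f\in{\cal F}_3$, a contradiction; and if $f$ had arity $\le 1$, or arity $2$ with $x_1=0$, it would lie in ${\cal G}_1$, again a contradiction. Hence $f$ has arity $n\ge 2$, $x_0x_1\ne 0$, and $x_0\ne\pm x_1$. In particular $f$ has nonzero entries of both parities, so it is not realizable by any matchgate (Lemmas~\ref{even-geometric-series-lemma}--\ref{odd-geometric-series-lemma}); thus condition~(2) of Theorem~\ref{thm:holant-c} fails for ${\cal F}$, and it remains to show that either Pl-Holant$^c({\cal F})$ is \#P-hard or Holant$^c({\cal F})$ is tractable. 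Using the free ${\bf 0}=[1,0]$, pinning $n-2$ dangling edges of $f$ to $0$ yields the binary restriction $M=[x_0,x_1,-x_0]$, whose matrix is $\begin{bmatrix}x_0&x_1\\ x_1&-x_0\end{bmatrix}$.

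Second, I would peel off the tractable sub‑cases. If every non‑unary signature of ${\cal F}$ has arity $\le 2$, then ${\cal F}$ with its unary signatures deleted falls in Class~1 of Theorem~\ref{thm-holant-star}, so Holant$^c({\cal F})$ is tractable. If every non‑unary signature of ${\cal F}$ lies in ${\cal G}_3\cup\{[p,0,p]:p\ne 0\}$, then (those signatures satisfying the Class~3 condition of Theorem~\ref{thm-holant-star}) ${\cal F}$ minus its unaries falls in Class~3 and Holant$^c({\cal F})$ is again tractable. In either case the conclusions of Theorem~\ref{thm:holant-c} hold. Otherwise ${\cal F}$ contains a non‑unary $h=[p,0,\dots,0,q]\in{\cal G}_1$ of some arity $m\ge 2$, with $pq\ne 0$ and $h$ not proportional to $(=_2)$, and ${\cal F}$ also contains some non‑unary signature of arity $\ge 3$.

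Third, the core step: from $h$ and $M$ build a binary signature $[a,b,c]$ meeting the three requirements of Lemma~\ref{lemma:interpolation} ($b\ne 0$, $a+c\ne 0$, $b^2\ne ac$), so that all unary functions can be interpolated. If $q^2\ne p^2$, form $[p^2,0,q^2]$ (this is $h$ itself when $m=2$, and the signature obtained by identifying two copies of $h$ along $m-1$ of their edges when $m\ge 3$) and take the ``sandwich'' $\mathrm{diag}(p^2,q^2)\,M\,\mathrm{diag}(p^2,q^2)=[\,p^4x_0,\ p^2q^2x_1,\ -q^4x_0\,]$: here $b=p^2q^2x_1\ne 0$, $a+c=x_0(p^4-q^4)\ne 0$, and $b^2-ac=p^4q^4(x_0^2+x_1^2)\ne 0$. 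If $q=\epsilon p$ with $\epsilon=\pm1$ (so necessarily $m\ge 3$, $h$ proportional to $[1,0,\dots,0,\epsilon]$), decorate all $m$ dangling edges of $h$ with copies of $M$ and pin all but two of the resulting external edges to $0$; with $t=x_1/x_0$ the resulting binary signature is, up to a scalar, $[\,1+\epsilon t^{m},\ t(1-\epsilon t^{m-2}),\ t^2+\epsilon t^{m-2}\,]$, for which $b=t(1-\epsilon t^{m-2})\ne 0$ and $a+c=(1+t^2)(1+\epsilon t^{m-2})\ne 0$ because $t$ is real with $t\ne 0,\pm1$ (hence $t^{m-2}\ne\pm1$), while $b^2-ac=-\epsilon\,t^{m-2}(t^2+1)^2\ne 0$. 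In all cases Lemma~\ref{lemma:interpolation} gives Pl-Holant$^*({\cal F})\leq_T$ Pl-Holant$^c({\cal F})$. Finally ${\cal F}$ meets none of the three tractable Class conditions of Theorem~\ref{thm-holant-star}: an arity‑$\ge 3$ signature rules out Class~1; $h$ (with $q\ne p$) fails the Class~3 condition, hence Class~3 is out; and Class~2 is out because an arity‑$\ge 3$ signature of ${\cal F}$ that lies in ${\cal G}_1$ forces the Class~2 constants to be $(a,b)=(0,1)$, under which $f$ fails (its entry $x_1\ne 0$), while an arity‑$\ge 3$ signature in ${\cal G}_3$ already forces the Class~2 constants to vanish. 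Therefore, by the planar version of Theorem~\ref{thm-holant-star}, Pl-Holant$^*({\cal F})$ — and hence Pl-Holant$^c({\cal F})$ — is \#P-hard.

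The step I expect to be the main obstacle is the gadget construction in the case $q=\pm p$, i.e.\ when the only ``non‑${\cal G}_3$'' signature available is a (signed) generalized equality $[1,0,\dots,0,\pm1]$ of arity $\ge 3$: ``squaring'' such an $h$ merely produces the trivial $(=_2)$, and a careless decoration can make $b^2-ac$ vanish for isolated real $t$, so one must decorate \emph{every} edge of $h$ by the specific restriction $M$ of $f$ and verify the clean identity $b^2-ac=\mp\,t^{m-2}(t^2+1)^2$, manifestly nonzero since $t=x_1/x_0$ is a nonzero real. A secondary, bookkeeping‑heavy obstacle is confirming in the remaining sub‑cases that ${\cal F}$ with its unaries removed genuinely lies in a single Holant$^*$ tractable class, and, dually, that in the hard sub‑case the presence of $f$ together with an arity‑$\ge 3$ signature blocks membership in every Holant$^*$ class.
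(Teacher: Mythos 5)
Your proof is correct and follows essentially the same plan as the paper: first invoke Lemma~\ref{lemma:inG1G3} to reduce to ${\cal F}\subseteq{\cal G}_1\cup{\cal G}_3$, extract a binary sub-signature $[x_0,x_1,-x_0]$ (with $x_0x_1\ne 0$, $x_0\ne\pm x_1$) from an $f\in{\cal G}_3\setminus{\cal F}_3$, extract an $h\in{\cal G}_1$ not proportional to $(=_2)$ after peeling the Class~3 tractable case, and then feed a binary gadget into Lemma~\ref{lemma:interpolation}. What differs is the final gadget. The paper cases on the arity of $h$: when $\deg h=2$ it sandwiches $[1,a,-1]$ between two copies of $[1,0,b]$ to get $[1,ab,-b^2]$; when $\deg h>2$ it first shrinks $h$ to $[1,0,0,b]$ by dangling $[1,a]$-pins and then plugs that into a small ${\cal F}$-gate (Figure~\ref{Figure:[1,a^2b,b^2]}) yielding $[1,a^2b,b^2]$. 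You instead case on whether $p^2\neq q^2$ or $q=\pm p$, and in the harder $q=\pm p$ (arity $\ge 3$) case you decorate every edge of $h$ with $M$ and pin all but two, which produces the clean identity $b^2-ac=-\epsilon\,t^{m-2}(t^2+1)^2$. Both case splits exhaust the possibilities and both verifications check out, so the proofs are interchangeable; yours trades the paper's two-stage reduction ($[1,0,0,b]$ then F-gate) for a single direct decoration gadget, arguably with a more transparent algebraic verification. Two minor remarks: (i) your Class-1 peel and the closing paragraph ruling out Classes~1--3 are harmless but unnecessary, since the parenthetical in Lemma~\ref{lemma:interpolation} already yields the full conclusion of Theorem~\ref{thm:holant-c} via the planar version of Theorem~\ref{thm-holant-star}; (ii) the parenthetical ``this is $h$ itself when $m=2$'' is slightly loose ($h=[p,0,q]$ rather than $[p^2,0,q^2]$), but the sandwich calculation passes the Lemma~\ref{lemma:interpolation} test either way.
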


\begin{proof}
If ${\cal F} \not \subseteq {\cal G}_1 \cup {\cal G}_3$, then by
Lemma \ref{lemma:inG1G3}, we are done. Otherwise, there exists a
signature $f \in {\cal F} \subseteq {\cal G}_1 \cup {\cal G}_3 $ and
$f \not \in {\cal G}_1 \cup {\cal F}_3$. Then it must be the case
that $f \in {\cal G}_3$, and $f$ has a sub-signature of form
$[1,a,-1]$, where $a \not \in \{-1,0,1\}$.

If ${\cal F}  \subseteq \{[1,0,1] \} \cup  {\cal G}_3$, then ${\rm
Holant}^* ({\cal F})$  is polynomial time computable by Theorem
\ref{thm-holant-star} and as a result Theorem \ref{thm:holant-c} trivially
holds and we are done.
%

If not, there exists a signature $g \in {\cal F}  \subseteq {\cal
G}_1 \cup  {\cal G}_3$ and $g \not \in \{[1,0,1] \} \cup  {\cal
G}_3$. Then it must be the case that $g \in {\cal G}_1$. The arity
of $g$ is greater than $1$,
as $g \not \in {\cal G}_3$.


If the arity of $g$ is $2$, then $g$ is of form $[1,0,b]$, where $b
\not \in \{-1,0,1\}$. Connecting two signatures $[1,0,b]$ to
both sides of one binary signature $[1,a,-1]$, we can get a new
binary signature $[1,ab, -b^2]$. It satisfies all the conditions of
Lemma \ref{lemma:interpolation}, and we are done. If the arity of
$g$ is greater than $2$, then we can always realize a signature
$[1,0,0,b]$, where $b\neq 0$. (We connect the unary signature
$[1,a]$ to all its dangling edges except the three ones.)
 Then we can use an \FF ~in Figure \ref{Figure:[1,a^2b,b^2]}.
\begin{figure}[htbp]
\begin{center}
\includegraphics[width=2 in]{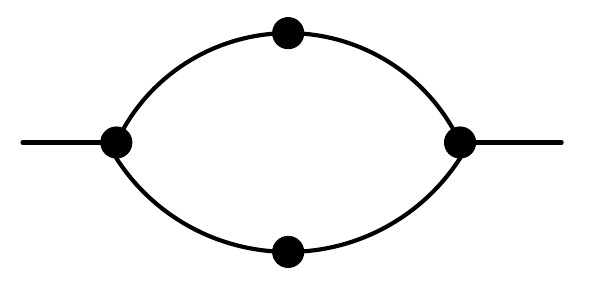} \caption{The
function on degree 2 nodes is $[1,a,-1]$, and the function on degree
3 nodes is $[1,0,0,b]$. } \label{Figure:[1,a^2b,b^2]}
\end{center}
\end{figure}
Its signature is $[1,a^2b, b^2]$, and by Lemma
\ref{lemma:interpolation}, we are done. This completes the proof.
\end{proof}

By the above lemmas, the only case left we have to handle is that ${\cal
F} \subseteq {\cal G}_1 \cup  {\cal F}_3$. This is done by the
following lemma, which completes the proof of Theorem
\ref{thm:holant-c}.

\begin{lemma}\label{lemma:G1F3}
If ${\cal F}  \subseteq {\cal G}_1 \cup  {\cal F}_3$, then the conclusions
of
Theorem
\ref{thm:holant-c} hold.
\end{lemma}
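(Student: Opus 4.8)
The plan is to show that when $\mathcal{F}\subseteq\mathcal{G}_1\cup\mathcal{F}_3$, either $\mathcal{F}$ lies in one of the known tractable classes (so Theorem~\ref{thm:holant-c} holds trivially), or we can construct a gadget that violates the structural constraints forcing tractability, and then invoke one of the earlier hardness tools (Lemma~\ref{lemma:interpolation}, Lemma~\ref{thm:a000b-and-010x}, Lemma~\ref{lemma:010x}, or Theorem~\ref{thm:a010b}). First I would observe that $\mathcal{F}_3$ is already a tractable family, so the only source of potential hardness is the interaction between signatures in $\mathcal{G}_1\setminus\mathcal{M}$ (the ``generalized equalities'' $[1,0,\ldots,0,b]$ that are not matchgate-realizable, i.e.\ with $b\neq\pm1$ and arity $\geq 2$, or with odd arity and $b\neq 0$) and signatures in $\mathcal{F}_3$. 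If $\mathcal{F}\subseteq\mathcal{M}$ we are done (category 2); if $\mathcal{F}\subseteq\mathcal{F}_1\cup\mathcal{F}_2\cup\mathcal{F}_3$ we are done (Theorem~\ref{thm-holnat-c}, and $\mathrm{Holant}^c$ tractability gives category 1). So assume neither holds: there is some $g\in(\mathcal{G}_1\setminus\mathcal{M})$ in $\mathcal{F}$ — this is the signature we exploit.

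Next I would extract from $g=[1,0,\ldots,0,b]$ a low-arity gadget. If $\mathrm{arity}(g)$ is odd, bundling adjacent dangling edges down to one leaves a unary $[1,b]$; if even, bundling down to two leaves $[1,0,b]$, and if down to three (when arity $\geq 4$ and even — but $\mathcal{G}_1$ even arity $\geq 2$ not in $\mathcal{M}$ forces arity $\geq 4$) leaves $[1,0,0,b]$ with $b\neq 0$. The remaining work splits into cases according to what else lives in $\mathcal{F}$: if $\mathcal{F}\subseteq\mathcal{F}_3$ up to this one $g$, I would try to combine $[1,b]$ or $[1,0,b]$ with a sub-signature $[1,a,-1]$ of some element of $\mathcal{F}_3\setminus\mathcal{M}$ (with $a\notin\{-1,0,1\}$; if no such element exists then $\mathcal{F}$ itself is essentially inside $\{[1,0,1]\}\cup$ a matchgate family and tractable). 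Connecting the unary/binary pieces as in Figure~\ref{Figure:[1,a^2b,b^2]} yields a binary signature such as $[1,a^2b,b^2]$ or $[1,ab,-b^2]$; one then checks $b'^2\neq a'c'$, $b'\neq 0$, $a'+c'\neq 0$ and applies Lemma~\ref{lemma:interpolation}. In the degenerate-looking corner cases (e.g.\ $\mathcal{F}_3$ contributes only $[1,0,1]$-like pieces) I would instead produce $[1,0,b,0]$ with $b\notin\{0,\pm1\}$ and appeal to Lemma~\ref{lemma:010x}, or produce $[1,0,0,0,b]$ together with a $[0,1,0,0]$-type signature and appeal to Lemma~\ref{thm:a000b-and-010x}, or produce $[1,0,1,0,-1]$-type and appeal to Theorem~\ref{thm:a010b}.

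The main obstacle I anticipate is the bookkeeping of which combinations of $\mathcal{G}_1$-type and $\mathcal{F}_3$-type signatures are \emph{forced} to land in $\mathcal{M}$ or in $\mathcal{F}_1\cup\mathcal{F}_2\cup\mathcal{F}_3$ — i.e.\ pinning down exactly when no hard gadget can be built so that the tractability claim is not vacuously contradicted. Concretely: when both $g\in\mathcal{G}_1$ and every $\mathcal{F}_3$-element reduce, after bundling, only to signatures compatible with a single common matchgate structure or a single holographic basis, the lemma must hold by tractability rather than by a reduction, and verifying this exhaustively (using the matchgate realizability characterization, Lemmas~\ref{even-geometric-series-lemma}–\ref{odd-geometric-series-lemma}) is the delicate part. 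Once the case analysis is organized so that ``not tractable for a structural reason'' always yields one of the four named gadget shapes above, the hardness conclusions follow immediately from the already-proved results, and Theorem~\ref{thm:holant-c} is complete.
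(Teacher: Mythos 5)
Your plan has the right overall shape — rule out the tractable cases, extract a non-tractable witness, build a low-arity gadget, invoke one of the earlier hardness lemmas — but it omits a tractability branch and makes two concrete errors that together break the case split.

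First, the parenthetical characterization of $\mathcal{G}_1\setminus\mathcal{M}$ is wrong at arity~2: by Lemma~\ref{even-geometric-series-lemma}, $[1,0,b]$ is realizable by an even matchgate for \emph{every} $b$ (take $r_1=b$, $r_2=1$), so $[1,0,b]\in\mathcal{M}$ regardless of whether $b=\pm1$. A nondegenerate element of $\mathcal{G}_1$ fails to be in $\mathcal{M}$ precisely when its arity is at least~3. This matters downstream because several of your gadgets output a binary $[1,0,b]$, which you then treat as a non-matchgate witness.

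Second, and more seriously, after ruling out $\mathcal{F}\subseteq\mathcal{M}$ and $\mathcal{F}\subseteq\mathcal{F}_1\cup\mathcal{F}_2\cup\mathcal{F}_3$ you assert that some $g\in\mathcal{F}\cap(\mathcal{G}_1\setminus\mathcal{M})$ exists. That does not follow. Consider $\mathcal{F}=\{[1,0,b],[1,1,-1]\}$ with $b\notin\{0,\pm1\}$: it lies in $\mathcal{G}_1\cup\mathcal{F}_3$, is not contained in $\mathcal{M}$ (since $[1,1,-1]$ violates parity) nor in $\mathcal{F}_1\cup\mathcal{F}_2\cup\mathcal{F}_3$ (since $[1,0,b]$ with $b\neq\pm1$ is excluded), yet $\mathcal{F}\cap\mathcal{G}_1=\{[1,0,b]\}\subseteq\mathcal{M}$. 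In fact this $\mathcal{F}$ \emph{is} tractable: every signature has arity at most~2, so ${\rm Holant}^*(\mathcal{F})$ is in P by Theorem~\ref{thm-holant-star} case~1, hence ${\rm Holant}^c(\mathcal{F})$ is in P by Theorem~\ref{thm-holnat-c} part~1, hence condition~1 of Theorem~\ref{thm:holant-c} applies. Your plan would instead go hunting for a hardness gadget and find none — the argument stalls. This is exactly the tractability route your proposal does not account for: Theorem~\ref{thm-holnat-c} part~1, which certifies tractability whenever $\mathcal{F}$ \emph{with unary signatures removed} falls in a Theorem~\ref{thm-holant-star} class. The paper's proof accordingly inserts two more tractability checks before any hardness construction is attempted — $\mathcal{F}\subseteq\mathcal{U}\cup\mathcal{F}_3\cup\{[1,0,1]\}$, and later $\mathcal{F}\subseteq\mathcal{G}_1\cup\{[0,1,0]\}\cup\mathcal{U}$ — and only after these fail does it extract a witness $f\in\mathcal{F}\cap\mathcal{G}_1-(\mathcal{F}_1\cup\mathcal{F}_3)$. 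Even then the paper must treat separately the case where every such witness is unary (your sketch skips this entirely; the paper combines a unary $[1,a']\in\mathcal{G}_1-\mathcal{F}_1$ with a high-arity $f_1\in\mathcal{F}_1\cap\mathcal{G}_1$ to manufacture a binary $[1,0,a]$). Without the ${\rm Holant}^*$ escape-hatches and the unary sub-case, the ``bookkeeping'' you identify as the delicate part is not merely delicate — the case split is not exhaustive and the hardness step is unjustified.
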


\begin{proof}
If ${\cal F} \subseteq {\cal F}_1 \cup {\cal F}_3$,
then by  Theorem~\ref{thm-holnat-c} part (2), ${\rm
Holant}^c({\cal F})$ is computable in polynomial time.
Similarly, if  ${\cal F} \subseteq
 {\cal U} \cup {\cal F}_3 \cup \{[1,0,1]\}$, then
 by  Theorem~\ref{thm-holnat-c} part (1), and then by
Theorem~\ref{thm-holant-star} part (3),
${\rm Holant}^c({\cal F})$ is computable in polynomial time.
Hence in these two cases, Theorem~\ref{thm:holant-c} holds.
Now suppose ${\cal F}  \not \subseteq {\cal F}_1 \cup {\cal F}_3$
and ${\cal F} \not \subseteq
{\cal U} \cup {\cal F}_3 \cup \{[1,0,1]\}$.

There exists $f \in {\cal F} - {\cal F}_1 \cup {\cal F}_3$.
Since ${\cal F}  \subseteq {\cal G}_1 \cup  {\cal F}_3$,
such an $f \in {\cal G}_1$.

Now there are two cases. The first case is that
we have such an $f \not \in {\cal U}$,
and so, $f \in {\cal F} \cap {\cal G}_1 -  ({\cal F}_1 \cup {\cal F}_3
\cup {\cal U})$.
The arity of $f$ is greater than $1$.
By connecting its dangling edges together except two or three depends
on the parity of the arity of $f$,
we can assume $f$ has form
$[1,0,a]$ or $[1,0,0,a]$, where $a \not \in \{-1,0,1\}$.

The second case is  every $f \in
{\cal F} \cap {\cal G}_1 -  ({\cal F}_1 \cup {\cal F}_3)$ is also in ${\cal U}$.
By $ {\cal F} \not \subseteq {\cal U} \cup {\cal F}_3 \cup \{[1,0,1]\}$,
there exists $f_1 \in {\cal F} - ({\cal U} \cup {\cal F}_3 \cup \{[1,0,1]\})$.
Since ${\cal F} \subseteq {\cal G}_1 \cup  {\cal F}_3$,
and $f_1 \not \in {\cal F}_3$, we get $f_1 \in {\cal G}_1$.
If $f_1  \not \in {\cal F}_1$, we could use this $f_1$ as the $f$ above,
namely $f_1 \in {\cal F} \cap {\cal G}_1 -  ({\cal F}_1 \cup {\cal F}_3
\cup {\cal U})$.
A contradiction.
Thus $f_1  \in {\cal F}_1$.
Also we have some $f_2 \in {\cal F} - ({\cal F}_1 \cup {\cal F}_3)$.
So $f_2 \in {\cal G}_1$, since ${\cal F} \subseteq {\cal G}_1 \cup
{\cal F}_3$.
Also since we are in this second case, certainly $f_2 \in {\cal U}$.


So we have
$f_1, f_2\in {\cal F} \cap {\cal G}_1$ such that $f_1 \in {\cal F}_1$ but
$f_1 \not \in{\cal U} \cup {\cal F}_3 \cup \{[1,0,1]\}$, and $f_2
\in {\cal U}$ but $f_2 \not \in {\cal F}_1$.
The arity of $f_1$ is at least 2. We claim it is greater than 2.
 Otherwise,  $f_1$ being from ${\cal F}_1$ and not $[1,0,1]$,
it would be $f_1 = [1,0,-1] \in {\cal F}_3$, a contradiction.
So $f_1$ has form $[1,0,0,\ldots,\pm 1]$ of arity at least 3.  $f_2$
is of form $[1,a']$, where $a' \not \in \{-1,0,1\}$;
this follows from $f_2 \in {\cal U} \cap {\cal G}_1 - {\cal F}_1$.
By connecting
all the dangling edges of $f_1$ except two with $f_2$, we can
construct an \FF ~with signature of form $[1,0,a]$, where $a \not \in
\{-1,0,1\}$. This is one of the above two forms after the first case.
To sum up, in both cases, we have some
$f$ of the form
$[1,0,a]$ or $[1,0,0,a]$, where $a \not \in \{-1,0,1\}$.

If ${\cal F}  \subseteq {\cal G}_1 \cup  \{[0,1,0]\} \cup {\cal U}$,
then by Theorem~\ref{thm-holnat-c} part (1),
and then by Theorem~\ref{thm-holant-star} part (2) (with $a=0$ and $b=1$),
 ${\rm Holant}^c
({\cal F})$ is computable in polynomial time and Theorem
\ref{thm:holant-c} holds. Otherwise, there exists $g\in {\cal F}
\subseteq {\cal G}_1 \cup  {\cal F}_3 $, and $g \not \in {\cal G}_1
\cup  \{[0,1,0]\} \cup {\cal U}$. Then $g$ must be in ${\cal F}_3$, and
 have one of the following
sub-signatures: $[1,1,-1],[1,-1,-1],[1,0,-1,0],[0,1,0,-1]$;
this follows from a careful examination of the forms of ${\cal F}_3$. By
symmetry (taking the reversal of both $f$ and $g$),
we only need to consider the cases $f = [1,0,a]$ or $[1,0,0,a]$, where $a
\not \in \{-1, 0, 1\}$,
and $g = [1,1,-1]$ or
$[1,0,-1,0]$.

According to $f$ and $g$, we have four cases. If $f=[1,0,a]$ and
$g=[1,1,-1]$, then connecting them together into a chain $fgf$, we
can realize $[1,a,-a^2]$. By Lemma \ref{lemma:interpolation}, we are
done. If $f=[1,0,a]$ and $g=[1,0,-1,0]$, for each dangling edge
of $g$, we extend it by one copy of $f$. Then we can realize
$[1,0,-a^2,0]$. So by Lemma \ref{lemma:010x}, we are done. If $f=[1,0,0, a]$ and $g=[1,1,-1]$, we can connect a unary
signature $[1,1]$ (sub-signature of $g$) to one dangling edge of
$f$, and realize a binary signature $f=[1,0,a]$. This reduces it to the
first case, which has been proved. If $f=[1,0,0,a]$ and
$g=[1,0,-1,0]$, we can realize a unary signature $[1,a]$ from $f$
by connecting two of its dangling edges together,
and then connect this unary signature to one dangling edge of $g$ to
realize $[1,-a,-1]$. Note that $[1,-a,-1]\not \in {\cal G}_1 \cup
{\cal F}_3$, by Lemma \ref{lemma:inG1F3}, we are done.
\end{proof}

\section{Dichotomy for Planar Weighted \#CSP}\label{section:csp}
In this section, we prove a dichotomy for planar real weighted \#CSP.
Compared to the dichotomy for general real weighted \#CSP, the new tractable
cases for planar structures
 are precisely
 those which can be computed by holographic algorithms with
matchgates. Since all the equality functions are assumed to be available, the only possible
basis used in holographic algorithms is  $\begin{bmatrix}1 & 1 \\ 1 & -1 \end{bmatrix}$ (this can be computed by
the characterization in \cite{STOC07}).
Now we present the dichotomy theorem for   planar weighted \#CSP.

\begin{theorem}\label{thm:csp}
Let $\mathcal{F}$ be a set of real symmetric functions.
{\rm Pl}-\#{\rm CSP}$(\mathcal{F})$ is \#P-hard unless $\mathcal{F}$
satisfies one of the following conditions, in which case it is
tractable:
\begin{enumerate}
    \item ${\rm \#CSP}(\mathcal{F})$ is tractable (for which
we have an effective dichotomy~\cite{STOC09}); or
    \item Every function in $\mathcal{F}$ is realizable by some matchgate under basis $\begin{bmatrix}1 & 1 \\ 1 & -1 \end{bmatrix}$
(for which we have a complete characterization \cite{MGI}).
\end{enumerate}
\end{theorem}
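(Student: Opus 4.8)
The plan is to reduce the planar \#CSP dichotomy to the planar Holant$^c$ dichotomy (Theorem~\ref{thm:holant-c}) via a holographic transformation. Since all equality signatures $=_k$ are available for free in \#CSP, the relevant transformations are exactly those that preserve the class of \emph{all} equalities; by the characterization in \cite{STOC07} the only candidate transformation that can create new tractable matchgate cases is $Z = \begin{bmatrix}1 & 1 \\ 1 & -1\end{bmatrix}$. The idea is: apply $Z$ to the whole instance. Since $Z^{\otimes 2}(=_2) = 2\,[1,0,1]$ stays an equality (up to scalar), while $Z$ turns $=_1 = [1,1]$ into $[1,0]$ (up to scalar) and more generally $Z^{\otimes k}(=_k)$ becomes a signature of the form $[1,0,\ldots,0,1]$ only for $k\le 2$ — so we do \emph{not} simply get a \#CSP again. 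Instead, the natural bipartite formulation is: Pl-\#CSP$(\mathcal{F}) = $ Pl-Holant$(=_k \mid \mathcal{F})$, and after the holographic transformation by $Z$ this becomes Pl-Holant$(\{Z^{\otimes k}(=_k)\}_k \mid \{ f Z^{\otimes \mathrm{arity}(f)} \mid f\in\mathcal{F}\})$. One then observes that $\{Z^{\otimes k}(=_k)\}$ includes $[1,0]$ and $[0,1]$ (from $=_1$) and $[1,0,1]$ (from $=_2$), so the transformed problem contains all the free signatures that define a Holant$^c$ problem, plus it contains the transformed equalities $Z^{\otimes k}(=_k)$ for $k\ge 3$ on the left.

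First I would set up this bipartite holographic reduction carefully and argue equivalence in both directions, so that Pl-\#CSP$(\mathcal{F})$ is polynomial-time equivalent to Pl-Holant$^c(Z\mathcal{F} \cup \{Z^{\otimes k}(=_k) : k \ge 3\})$, where $Z\mathcal{F} = \{fZ^{\otimes \mathrm{arity}(f)}: f \in \mathcal{F}\}$ and I absorb the scalar $2$'s harmlessly. Then I would analyze the signatures $Z^{\otimes k}(=_k)$: these are (up to scalar) $[1,0,1,0,\ldots]$-type signatures, i.e. the ``$\mathcal{F}_2$''-style signatures $\frac{1}{2}([1,1]^{\otimes k} + [1,-1]^{\otimes k})$ pulled back, which after the $Z$-transform are exactly $=_k$ again — wait, more precisely $Z^{\otimes k}[1,0,\ldots,0,1]$ has entries given by binomial-type alternating sums; these are the symmetric signatures that are $0$ on odd Hamming weight (when $k$ even) and form a geometric-ish pattern. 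The key point I want is that each such transformed equality is \emph{realizable by a matchgate} — indeed by Lemma~\ref{even-geometric-series-lemma}/Lemma~\ref{odd-geometric-series-lemma} the signature $Z^{\otimes k}(=_k)$ satisfies the parity and geometric-series conditions, so it is a standard matchgate signature. Hence the ``extra'' left-hand signatures never obstruct tractability: they are always matchgate-realizable and always in the Holant$^c$-tractable classes as needed.

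With that reduction in hand, I would invoke Theorem~\ref{thm:holant-c} applied to the signature set $\mathcal{G} := Z\mathcal{F} \cup \{Z^{\otimes k}(=_k):k\ge 3\}$. By Theorem~\ref{thm:holant-c}, Pl-Holant$^c(\mathcal{G})$ is tractable iff (1) Holant$^c(\mathcal{G})$ is tractable on general graphs, or (2) every signature in $\mathcal{G}$ is matchgate-realizable; otherwise it is \#P-hard. I would then translate these two conditions back across the holographic transformation: condition (2) for $\mathcal{G}$ says every $fZ^{\otimes \mathrm{arity}(f)}$ is matchgate-realizable (the transformed equalities automatically are), which is exactly ``every $f\in\mathcal{F}$ is realizable by a matchgate under basis $Z$'' — this is case (2) of Theorem~\ref{thm:csp}. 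Condition (1) for $\mathcal{G}$, namely Holant$^c(\mathcal{G})$ tractable on \emph{general} graphs, I would show corresponds precisely to \#CSP$(\mathcal{F})$ tractable on general graphs: pulling $Z$ back, Holant$^c(\mathcal{G})$ on general graphs is holographically equivalent to Holant$(\{=_k:k\ge 1\}\cup\{[1,0,1]\}\cup\mathcal{F})$-type problems, i.e. \#CSP$(\mathcal{F})$ augmented by signatures that are already free; using Theorem~\ref{thm-holnat-c} (and Theorem~\ref{thm-holant-star}) to identify the general-graph tractable classes $\mathcal{F}_1\cup\mathcal{F}_2\cup\mathcal{F}_3$ and the $\mathcal{A}/\mathcal{P}$ criterion of Theorem~\ref{thm:dichotomy}, I would check the two lists of tractable families match up under $Z$. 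This last matching — verifying that the general-graph-tractable classes on the two sides of the transformation coincide, and that no new planar-tractable case is introduced beyond matchgates-under-$Z$ — is the main obstacle and requires a somewhat delicate case analysis of how $Z$ acts on the known tractable families; but it is essentially bookkeeping once the holographic reduction and the matchgate-realizability of the transformed equalities are established.

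I expect the hardest part to be the bidirectional equivalence of the \#P-hardness side: showing that whenever $\mathcal{F}$ falls outside both case (1) and case (2), the transformed set $\mathcal{G}$ falls outside both conditions of Theorem~\ref{thm:holant-c}, so Pl-Holant$^c(\mathcal{G})$ — hence Pl-\#CSP$(\mathcal{F})$ — is \#P-hard. Here one must rule out the possibility that the transformed equalities $Z^{\otimes k}(=_k)$ for $k\ge 3$ somehow ``help'' tractability in a way that has no counterpart in the original \#CSP, i.e. that $\mathcal{G}$ could satisfy condition (1) or (2) of Theorem~\ref{thm:holant-c} even when $\mathcal{F}$ is hard. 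Since those transformed equalities are matchgate signatures lying in the $\mathcal{F}_2$/parity family, they are ``compatible'' with all the tractable Holant$^c$ classes, so adding them to $Z\mathcal{F}$ does not change which of the Theorem~\ref{thm:holant-c} conditions $Z\mathcal{F}$ satisfies — I would prove this compatibility lemma explicitly. Once that is done, the dichotomy for Pl-\#CSP$(\mathcal{F})$ drops out of Theorem~\ref{thm:holant-c} by transport of structure.
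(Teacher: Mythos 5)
There is a genuine gap at the very foundation of your reduction: you assert that $\{Z^{\otimes k}(=_k)\}$ ``includes $[1,0]$ and $[0,1]$ (from $=_1$).'' That is false. With $Z=\begin{bmatrix}1&1\\1&-1\end{bmatrix}$ we have $Z[1,1]^{\tt T}=[2,0]^{\tt T}$, which is $[1,0]$ up to scalar; the transformed $=_1$ does \emph{not} give you $[0,1]$. Consequently the transformed problem is Pl-Holant$\bigl(Z\mathcal{F}\cup\{[1,0],[1,0,1],[1,0,1,0],\ldots\}\bigr)$, \emph{not} a Pl-Holant$^c$ problem, and the claimed equivalence Pl-\#CSP$(\mathcal{F})\equiv$ Pl-Holant$^c(\mathcal{G})$ is not established. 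This is precisely the point at which the paper's proof has to do almost all of its work: one must show that whenever $Z\mathcal{F}\not\subseteq\mathcal{F}_1\cup\mathcal{F}_2\cup\mathcal{F}_3$ one can \emph{realize or interpolate} $[0,1]$ or $[0,0,1]$ from $Z\mathcal{F}$ together with the transformed equalities, which the paper does via a long case analysis (the ``$f_0=0$ versus $f_0=1$, $f_1\in\{0,\pm1\}$'' induction). Moreover, accepting $[0,0,1]$ in place of $[0,1]$ is itself nontrivial on planar instances: one must pair occurrences of $[0,1]$ into copies of $[0,0,1]$ while preserving planarity, which the paper handles with a separate lemma involving a spanning tree of the dual graph and a squaring trick (running on $G\cup G$). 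Your proposal skips all of this, treating $[0,1]$ as free.

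The parts of your outline that come after that false step are directionally correct and do match the paper: the transformed equalities $Z^{\otimes k}(=_k)=[1,0,1,0,\ldots]$ are indeed matchgate-realizable (parity plus a trivial geometric series), and the two branches of Theorem~\ref{thm:holant-c} do translate back under $Z$ to the two cases of Theorem~\ref{thm:csp}. But those observations only kick in once the reduction to a genuine Holant$^c$ problem is in place, and that reduction is the substance of the proof that is missing here. To repair the argument you would need to supply (i) the pairing/planarity lemma allowing $[0,0,1]$ to stand in for two $[0,1]$'s, and (ii) the interpolation lemmas (like Lemmas~\ref{lem:get [0,1] interpol} and~\ref{lem:get [0,0,1] interpol}) together with the case analysis showing that $[0,1]$ or $[0,0,1]$ is obtainable whenever $\mathcal{F}$ is outside the stated tractable classes.
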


The main proof idea is to reduce  Pl-Holant$^c$ problems to Pl-\#CSP
problems.
Pl-\#CSP$(\mathcal{F})$ is  exactly the same  as
planar Holant  with all the {\sc Equality} functions, i.e.,
Pl-Holant$(\mathcal{F}\cup \{[1,1], [1,0,1], [1,0,0,1],[1,0,0,0,1], \ldots\})$.
We can use a holographic reduction under the basis  $H=\begin{bmatrix}1 & 1 \\ 1 & -1 \end{bmatrix}$.
Under this transformation,
the problem is transformed to, and hence has
 the same complexity as Pl-Holant$(H \mathcal{F}\cup \{[1,0], [1,0,1], [1,0,1,0],[1,0,1,0,1], \ldots\})$.
Since this holographic reduction gives us $[1,0]$ (from $[1,1]$),
if we can further realize (or interpolate) $[0,1]$,
we can view the problem as a Pl-Holant$^c$ problem and
apply Theorem~\ref{thm:holant-c} to $H \mathcal{F}\cup \{[1,0,1], [1,0,1,0],
[1,0,1,0,1], \ldots\}$
to get  a  proof of Theorem~\ref{thm:csp}.
 In the following, we show how to  realize (or interpolate) $[0,1]$.
Once we have  $[0,1]$,
the translation of the criterion of  Theorem~\ref{thm:holant-c}
to  Theorem~\ref{thm:csp} is straightforward.

It turns out that to  realize (or interpolate) $[0,1]$
in some cases is difficult.
The following lemma says that it is also sufficient if we
can realize (or interpolate) $[0,0,1]$.
$[0,0,1]$ can be viewed as two copies of $[0,1]$,
as $[0,0,1]=[0,1] \otimes [0,1]$.
Intuitively, we will use one copy of  $[0,0,1]$
to replace two occurrences of $[0,1]$.
 However, there are two technical
difficulties. One is that there may be an odd number of occurrences
 of $[0,1]$ used in the input instance; the second
difficulty, which is more subtle, is that we have
to pair up two copies of $[0,1]$ while maintaining
planarity of the instance.

\begin{lemma}
Pl-Holant$(\mathcal{F}\cup \{[1,0],[0,0,1], [1,0,1], [1,0,1,0],[1,0,1,0,1],
 \ldots\})$ is \#P-hard (or in P) if and only if
Pl-Holant$^c(\mathcal{F}\cup \{[0,0,1], [1,0,1], [1,0,1,0],[1,0,1,0,1],
 \ldots\})$  is \#P-hard (or in P).
\end{lemma}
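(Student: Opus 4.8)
The plan is to establish two polynomial-time (Turing) reductions, one in each direction; together they make the two counting problems complexity-equivalent, so one is \#P-hard (respectively, in P) iff the other is. The easy direction is
Pl-Holant$(\mathcal{F}\cup\{[1,0],[0,0,1],[1,0,1],[1,0,1,0],\ldots\})\leq_T$ Pl-Holant$^c(\mathcal{F}\cup\{[0,0,1],[1,0,1],[1,0,1,0],\ldots\})$: since $[0,0,1](y,y')=[0,1](y)\cdot[0,1](y')$, in any instance one may replace each degree-two vertex labeled $[0,0,1]$ (on edges $e,e'$) by two degree-one vertices, one on $e$ and one on $e'$, each labeled $[0,1]$. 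This preserves planarity and the Holant value and turns an instance of the left-hand problem into an instance of the right-hand problem, and introduces no new signature type. So the whole content is the other direction: simulating the unary signature $[0,1]$ using only $[1,0]$, $[0,0,1]$, the even-weight indicators $[1,0,1],[1,0,1,0],[1,0,1,0,1],\ldots$, and $\mathcal{F}$, while staying planar.

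For the hard direction, let $\Omega$ be a planar instance of Pl-Holant$^c$ and let $v_1,\ldots,v_k$ be its degree-one vertices labeled $[0,1]$, joined by edges $e_1,\ldots,e_k$ to vertices $u_1,\ldots,u_k$. The key observation is that a path $a_0 - [0,0,1] - a_1 - [0,0,1] - \cdots - [0,0,1] - a_m$ forces $a_0=a_1=\cdots=a_m=1$, so its two ends behave exactly as two independent copies of $[0,1]$; hence if we can route such a ``forced-to-one wire'' in the plane from $u_i$ to $u_j$, we may delete $v_i,v_j$ and splice in the wire without changing the Holant value. First I would dispose of parity. If $k$ is odd and every signature of $\mathcal{F}$ (and hence every signature of $\Omega$ other than the $[0,1]$'s) is supported on inputs of even Hamming weight, a parity count on the exponents shows ${\rm Holant}_\Omega=0$, and we output $0$. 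Otherwise some $f\in\mathcal{F}$ is nonzero on an odd-weight input; pinning the other inputs to $0$ with $[1,0]$ and taking self-loops we realize a unary signature $[\alpha,\beta]$ with $\beta\neq 0$, and we adjoin to $\Omega$ a disjoint edge joining a fresh $[0,1]$ vertex to a fresh $[\alpha,\beta]$ vertex: this multiplies ${\rm Holant}_\Omega$ by $\beta$ and makes the number of $[0,1]$'s even, and we divide by $\beta$ at the end. So we may assume $k$ is even.

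Now fix a planar embedding of $\Omega$, choose a non-crossing perfect matching of the points $v_1,\ldots,v_k$, and for each matched pair route a chain of $[0,0,1]$ vertices between the corresponding attachment vertices. Such a chain will in general have to cross edges of $\Omega$, so I would build a \emph{drill gadget} from the arity-four even-weight indicator $[1,0,1,0,1]$ and two copies of $[0,0,1]$: take a central vertex $z$ with signature $[1,0,1,0,1]$ and put its four edges in the cyclic order $a,\,w_1,\,c,\,w_2$, attach at $w_1$ a $[0,0,1]$ vertex with a further dangling edge $d$, and at $w_2$ a $[0,0,1]$ vertex with a further dangling edge $d'$. Forcing the two $[0,0,1]$'s and evaluating the corresponding slice of $[1,0,1,0,1]$ shows this gadget realizes $\chi_{[a=c]}\,\chi_{[d=1]}\,\chi_{[d'=1]}$, and tracing the outer boundary of the (tree-shaped) gadget shows its four external edges appear in the cyclic order $a,d,c,d'$. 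Replacing a crossed edge $e=(p,q)$ of $\Omega$ by this gadget, with $a$ going to $p$ and $c$ to $q$, leaves the behaviour of $e$ (identity on its endpoints) intact while letting the chain pass from one side of $e$ to the other: the chain segments on the two sides of $e$ terminate at $d$ and at $d'$, each correctly forced to $1$. Since the planar dual of $\Omega$ is connected, any pair of attachment vertices can be joined this way; the matching being non-crossing lets all chains be routed without meeting one another, and if several chains cross the same edge we place several drill gadgets in series along it. The result is a planar instance using only $\mathcal{F},[1,0],[0,0,1]$ and the even-weight indicators, with the same Holant value up to the factor $\beta$. This gives Pl-Holant$^c(\cdots)\leq_T$ Pl-Holant$(\cdots)$, and together with the easy direction proves the claimed equivalence.

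I expect the only genuine difficulty to be the planar bookkeeping: verifying that the drill gadget realizes the stated function \emph{and} presents its four external edges in the stated cyclic order, and then checking that the matching together with the routing can always be carried out so that distinct $[0,0,1]$-chains are pairwise non-crossing and avoid the vertices of $\Omega$. The ``forced wire'' idea and the parity step are routine once the crossing gadget is in hand.
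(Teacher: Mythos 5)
Your proof is correct and takes a genuinely different, more self-contained route than the paper's. The paper handles the hard direction (Pl-Holant$^c(\cdots)$ hard $\Rightarrow$ Pl-Holant$(\cdots)$ hard) indirectly: it appeals to the observation that every \#P-hardness proof in the paper and in \cite{STOC09} ultimately reduces from one of three canonical problems whose Holant values are non-negative integers, reruns the whole reduction chain on the doubled instance $G\cup G$ so the intermediate instances $G_i\cup G_i$ have an even count of $[0,1]$'s and a recoverable square Holant, and then pairs the $[0,1]$'s face by face along a spanning tree of the planar dual, shifting one $[0,1]$ into the parent face with a $[1,0,1,0,1]$-plus-two-$[0,1]$'s gadget whenever a face count is odd. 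You instead give a direct planar Turing reduction from the Holant$^c$ problem: you fix parity locally --- either conclude ${\rm Holant}_\Omega=0$ outright (when every non-$[0,1]$ signature present is even-supported and $k$ is odd, by the double-counting argument on edges assigned $1$), or adjoin a disjoint $[0,1]$--$[\alpha,\beta]$ edge with $\beta\neq 0$ built from an odd-supported $f$ --- and then route forced-to-one chains of $[0,0,1]$'s between a non-crossing matching of the $[0,1]$-sites, crossing obstructing edges with your drill gadget (a $[1,0,1,0,1]$ center flanked by two $[0,0,1]$'s, which does realize $\chi_{[a=c]}\chi_{[d=1]}\chi_{[d'=1]}$ in the correct cyclic order $a,d,c,d'$). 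Amusingly both constructions hinge on $[1,0,1,0,1]$ as the parity router, but you use it to thread a $[0,0,1]$-chain through an edge while the paper uses it to push an extra pair of $[0,1]$'s across a face boundary. Two steps you leave implicit but that do go through: pinning with $[1,0]$ plus self-loops on a symmetric $f$ having a nonzero odd-weight entry always produces some $[\alpha,\beta]$ with $\beta\neq 0$, because the candidate values $\beta(b)=\sum_{l\le b}\binom{b}{l}f_{1+2l}$ for $b=0,\ldots,\lfloor (n-1)/2\rfloor$ are the image of $(f_1,f_3,\ldots)$ under an invertible lower-triangular map (or, more simply, pin pairs of inputs to $1$ with $[0,0,1]$, which is also available); and a mutually non-crossing family of chain curves avoiding the vertices and meeting edges transversally always exists, by general position. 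The payoff of your route is that it is a genuine reduction between the two Holant problems and does not depend on a meta-property of all the other hardness proofs, so it is more robust; one small inefficiency is your ``easy direction'' via $[0,0,1]=[0,1]\otimes[0,1]$, which is unnecessary since the Holant$^c$ signature set strictly contains the other and the reduction is immediate by inclusion.
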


\begin{proof}
There is one more function $[0,1]$ in the second signature set
 than the first, so obviously the first one can be reduced to the second one.
Hence if the second problem is in P, so is the first.
We have already  proved a dichotomy theorem for Pl-Holant$^c$ problems.
So now we may assume the second problem is \#P-hard,
and show that the first problem is also \#P-hard.

We observe that all the proofs in this paper and \cite{STOC09}, when the second problem
for any signature set  is proved to be \#P-hard, one of the following three problems:
 (a) Pl-Holant$([1,0,0,1]|[1,1,0])$, (b) Pl-Holant$([1,1,0,0])$, or (c) Holant$[0,1,0,0]$
(respectively counting {\sc Vertex Cover},  {\sc Matching}
for planar 3-regular  graphs,
or {\sc Perfect Matching} for general 3-regular graphs)
 is reduced to it by a chain of reductions.
There are only three reduction methods in this reduction chain, direct gadget construction, polynomial interpolation, and holographic reduction.

Given an instance $G$ of Pl-Holant$([1,0,0,1]|[1,1,0])$,
 Pl-Holant$([1,1,0,0])$, or Holant$[0,1,0,0]$, we consider the
graph $G\cup G$, which denotes the disjoint union of two
copies of  $G$.

Notice that
 the value of Pl-Holant$([1,0,0,1]|[1,1,0])$,
 Pl-Holant$([1,1,0,0])$, or Holant$[0,1,0,0]$ on the instance $G$ is  a non-negative integer, and the
value on  $G\cup G$ is its square.
 So we can compute the value on $G$
 uniquely  from its square.
Suppose the reduction chain on the instance $G$ produced
instances $G_1,G_2,\ldots, G_m$ of the second problem.
The same reduction applied to   $G \cup G$ produces
 instances of the form $G_1 \cup G_1, G_2 \cup G_2, \ldots, G_{m'}
\cup G_{m'}$. (We note that the reduction on  $G \cup G$
may  produce polynomially more instances than on  $G$
because of polynomial interpolation.)


Now we only need to show how to transform
  instances $G_1 \cup G_1,G_2 \cup G_2,\ldots, G_{m'} \cup G_{m'}$
in the second problem,
to instances of the first problem with the same values
(replacing all occurrences of the signature  $[0,1]$
 by some  $[0,0,1]$).
 $G_i \cup G_i$ is a planar graph with zero or more vertices of degree one
attached with the function $[0,1]$. We want to use one copy of $[0,0,1]$
to replace
one  pair of  $[0,1]$, while maintaining  planarity.

Take a spanning tree of the dual graph of $G_i$. Let the outer face be
 the root. Choose an arbitrary leaf of this tree, which corresponds to a face $C$ of $G_i$. Suppose $C'$ is the face corresponding to the parent
 of $C$ in the tree. If there are an even number of vertices of
degree one attached with  $[0,1]$ in face $C$,
we can perfectly match them and realize them using $[0,0,1]$ while
maintaining planarity in this face.
This can be done by matching these dangling vertices of degree one
in a clockwise fashion on this face $C$.
 If there are an odd number of  $[0,1]$ in face $C$,
we choose one edge  $e$ between $C$ and $C'$, and
add a new vertex $v_e$ on $e$, and connect two new vertices
of degree one to  $v_e$.
The two new vertices are attached $[0,1]$, and
 $v_e$ has degree 4 and is attached $[1,0,1,0,1]$.
The effect of  $[1,0,1,0,1]$ connected by two $[0,1]$ is the same
as the function $[1,0,1]$, which is exactly the same as
the  edge  $e$ itself.
 We put one new vertex with $[0,1]$ in face $C$, and the other
 one in face $C'$. Now, there are an even number of $[0,1]$  in
face $C$, and we can replace them by  $[0,0,1]$ in $C$,
as before.
We may repeat this process, until we reach the root in
the dual graph of $G_i$. If we do the same
for the two $G_i$ in $G_i \cup G_i$, we will
have an even   number of $[0,1]$
in the common outer face and can at last perfectly match the $[0,1]$ vertices
and realize them by $[0,0,1]$. In the end
we get an instance of the first problem, which has the same value.
\end{proof}

To sum up the above discussion, and apply Theorem~\ref{thm:holant-c},
 we have the following lemma, which is the starting point of our proof
of Theorem~\ref{thm:csp}.

\begin{lemma} \label{lemma001}
If we can realize (or interpolate) $[0,1]$ or $[0,0,1]$ from  $H \mathcal{F}\cup \{[1,0], [1,0,1], [1,0,1,0],[1,0,1,0,1], \ldots\}$,
then the conclusion of Theorem \ref{thm:csp} holds.
\end{lemma}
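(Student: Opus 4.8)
The plan is to run the chain of reductions sketched just before the statement and then hand the resulting problem to Theorem~\ref{thm:holant-c}. First I would record the starting equivalence: Pl-\#CSP$(\mathcal{F})$ is the same problem as Pl-Holant$(\mathcal{F}\cup\{=_k \mid k\ge 1\})$, and performing the holographic reduction with basis $H=\begin{bmatrix}1&1\\1&-1\end{bmatrix}$ turns it, with no change in complexity (Theorem~\ref{thm:holant}), into Pl-Holant$(H\mathcal{F}\cup\{[1,0],[1,0,1],[1,0,1,0],[1,0,1,0,1],\ldots\})$, since $[1,0]$ and the alternating signatures $[1,0,1,0,\ldots]$ of each arity are the images of $=_1$ and the $=_k$ under this transformation (up to nonzero scalars). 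This is exactly the signature set appearing in the hypothesis.

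Second I would reduce to the case that $[0,0,1]$ is freely available. If $[0,1]$ can be realized or interpolated then so can $[0,0,1]=[0,1]^{\otimes 2}$, so in either case of the hypothesis we may adjoin $[0,0,1]$ at no cost. Applying the lemma stated immediately before this one (with its ``$\mathcal{F}$'' instantiated as $H\mathcal{F}$), the problem now has the same complexity --- \#P-hard, respectively in P --- as the Holant$^c$ problem Pl-Holant$^c(\mathcal{G})$, where $\mathcal{G}=H\mathcal{F}\cup\{[0,0,1],[1,0,1],[1,0,1,0],[1,0,1,0,1],\ldots\}$, a set of real symmetric signatures. It then suffices to classify Pl-Holant$^c(\mathcal{G})$, and Theorem~\ref{thm:holant-c} does exactly that: Pl-Holant$^c(\mathcal{G})$ is \#P-hard unless (i) Holant$^c(\mathcal{G})$ is tractable, or (ii) every signature in $\mathcal{G}$ is realizable by a matchgate.

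Third I would translate (i) and (ii) back into statements about $\mathcal{F}$. For (ii): the auxiliary signatures $[0,0,1]=[0,1]^{\otimes2}$ and $[1,0,1],[1,0,1,0],[1,0,1,0,1],\ldots$ (as well as $[1,0]$ and $[0,1]$ themselves) all satisfy the parity-plus-geometric-progression conditions of Lemmas~\ref{even-geometric-series-lemma} and~\ref{odd-geometric-series-lemma}, hence are matchgate-realizable; so (ii) is equivalent to ``every signature of $H\mathcal{F}$ is matchgate-realizable'', which is by definition (and made explicit by Theorem~\ref{thm:1}) the statement that every signature of $\mathcal{F}$ is realizable by a matchgate under basis $H$ --- condition (2) of Theorem~\ref{thm:csp}. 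For (i): undoing the holographic transformation (now over general graphs) by $H^{-1}$ carries $H\mathcal{F}$ back to $\mathcal{F}$, carries $[1,0,1],[1,0,1,0],\ldots$ back to $=_2,=_3,\ldots$, and carries the free unaries $[1,0],[0,1]$ (and $[0,0,1]$) to scalar multiples of $(=_1)=[1,1]$, $[1,-1]$ (and $[1,-1]^{\otimes2}$); hence Holant$^c(\mathcal{G})$ has the same complexity as \#CSP$(\mathcal{F}\cup\{[1,-1]\})$. Since $[1,-1](x)=i^{2x}$ lies in $\mathcal{A}$ and, being unary, also in $\mathcal{P}$, Theorem~\ref{thm:dichotomy} shows that \#CSP$(\mathcal{F}\cup\{[1,-1]\})$ is tractable iff \#CSP$(\mathcal{F})$ is --- condition (1) of Theorem~\ref{thm:csp}. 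Combining the three steps yields precisely the dichotomy asserted in Theorem~\ref{thm:csp}.

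As for difficulty: this lemma is essentially bookkeeping once Theorem~\ref{thm:holant-c} is in hand. The only points needing care are verifying that the alternating-equality signatures $[1,0,1,0,\ldots]$ and $[0,0,1]$ are matchgate-realizable (so that (ii) descends cleanly from $\mathcal{G}$ to $H\mathcal{F}$ and then to $\mathcal{F}$) and that the stray unary $[1,-1]$ produced by inverting the transform is harmless for \#CSP-tractability (so that (i) descends cleanly). The genuine obstacle is not in this lemma at all: it is the hypothesis, namely actually realizing or interpolating $[0,1]$ or $[0,0,1]$ from $H\mathcal{F}\cup\{[1,0],[1,0,1],[1,0,1,0],\ldots\}$ in every relevant configuration of $\mathcal{F}$, which is the work of the subsequent lemmas.
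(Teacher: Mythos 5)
Your proposal is correct and follows essentially the same path the paper sketches in the discussion preceding the lemma: holographically transform Pl-\#CSP$(\mathcal{F})$ under $H$, adjoin $[0,0,1]$ (obtainable from $[0,1]$ in either case of the hypothesis), invoke the preceding unnamed lemma to pass to a Pl-Holant$^c$ problem, and apply Theorem~\ref{thm:holant-c}. The paper treats the final translation of the two tractability conditions back to statements about $\mathcal{F}$ as ``straightforward'' and leaves it implicit; you make it explicit (noting that the auxiliary alternating and tensor-power signatures are matchgate-realizable, and that the extra unary $[1,-1]$ lies in $\mathcal{A}\cap\mathcal{P}$ and hence is harmless for the \#CSP dichotomy), which is a correct and welcome filling-in of detail rather than a different route.
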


Next we give two lemmas which give a general condition to
 realize or interpolate  $[0,1]$ or $[0,0,1]$.

\begin{lemma}\label{lem:get [0,1] interpol}
Let $a \in \mathbb{R}$.
If $a \not \in \{0,1,-1\}$, then
we can interpolate $[0,1]$
from $(\{[1,a], [1,0], [1,0,1], [1,0,1,0],
\ldots\})$.
\end{lemma}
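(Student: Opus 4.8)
The plan is to use the polynomial interpolation machinery set up in Section~\ref{sec:interpolation}, in particular Lemma~\ref{lemma:interpolation_conditions}. The signature set we are given contains the binary function $[1,0,1]$ (the $=_2$ gadget), all the equalities $[1,0,1,0,\ldots]$, the unary $[1,0]$, and the new unary $[1,a]$. We want to build a recursive construction $(A, g)$ whose iterates give unary signatures spanning all directions, and in particular let us interpolate $[0,1]$. First I would pick a simple arity-2 gadget built only from the available signatures to serve as the iterating matrix $A$; the natural candidate is to take a vertex with $[1,0,1,0]$ (arity 3, the $=_3$-type signature, which is present) or $[1,0,1]$ together with a copy of $[1,a]$ hanging off an edge, producing a $2\times 2$ matrix acting on a unary signature. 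Concretely, attaching $[1,a]$ as a decoration produces a diagonal-type transfer matrix such as $A = \begin{bmatrix} 1 & 0 \\ 0 & a \end{bmatrix}$ (or a small variant), and the initial unary signature $g$ can be taken to be $[1,a]$ itself or $[1,0]$ combined with $[1,a]$.

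Next I would verify the three conditions of Lemma~\ref{lemma:interpolation_conditions} for this pair $(A,g)$. Condition (1), $\det(A)\neq 0$, holds because $a\neq 0$. Condition (3), that the ratio of eigenvalues is not a root of unity: if $A$ has eigenvalues $1$ and $a$ with $a$ real, then $a/1 = a$ is a root of unity only if $a = \pm 1$, which is excluded by hypothesis; so this condition holds precisely because $a\notin\{1,-1\}$. Condition (2), that $g$ is not an eigenvector of $A$ (and nonzero): this is where the choice of $g$ matters, and I would choose $g$ so that it has both coordinates nonzero (e.g.\ $g=[1,a]$ with $a\neq 0$), which is not an eigenvector of the diagonal matrix $\operatorname{diag}(1,a)$ since that matrix's eigenvectors are $[1,0]$ and $[0,1]$. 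Thus all three conditions are met and the recursive construction interpolates all unary signatures, in particular $[0,1]$.

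The only real subtlety — and the step I expect to need the most care — is producing a transfer matrix $A$ with the right eigenvalue ratio using only the signatures actually available: $[1,0,1]$, the even-arity equalities $[1,0,1,0,\ldots]$, $[1,0]$ and $[1,a]$. One must check that some small planar gadget assembled from these realizes a $2\times 2$ matrix whose two eigenvalues are (a ratio like) $1$ and $a$ — equivalently, that decorating an edge or a degree-3 vertex by a copy of $[1,a]$ and/or $[1,0]$ has the effect of scaling one coordinate by $a$. This is a short direct computation on $2\times 2$ matrices and tensor contractions, and once it is in hand the eigenvalue bookkeeping above goes through immediately. The hypothesis $a\notin\{0,1,-1\}$ is used exactly twice: $a\neq 0$ for invertibility, and $a\neq\pm 1$ to keep the eigenvalue ratio off the unit circle's torsion points. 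An alternative, if the diagonal gadget is awkward, is to use two recursive constructions (one starting from $[1,0]$, one from $[0,1]$ or from $[1,a]$) and argue as in the proof of Lemma~\ref{lemma:interpolation} that at least one of them satisfies condition (2); I would fall back on this if the single-gadget argument runs into a degenerate case.
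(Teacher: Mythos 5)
Your high-level plan---build a sequence of unary signatures and invoke the interpolation machinery of Section~\ref{sec:interpolation}---is exactly what the paper does, and your identification of where the hypothesis $a\notin\{0,1,-1\}$ is used (nonsingularity and eigenvalue ratio not a root of unity) is on target. But the transfer matrix you guess, $A=\mathrm{diag}(1,a)$, is not realizable from the given signature set $\{[1,a],[1,0],[1,0,1],[1,0,1,0],\ldots\}$; to get $\mathrm{diag}(1,a)$ you would need a binary signature $[1,0,a]$, which you cannot assemble here. What you \emph{can} realize is the gadget obtained by taking $[1,0,1,0]$ (arity $3$) and attaching one copy of $[1,a]$ to one leg. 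Writing $[1,0,1,0]=\tfrac12\bigl[(1,1)^{\otimes 3}+(1,-1)^{\otimes 3}\bigr]$, the contraction with $[1,a]$ gives the $2\times 2$ matrix
\[
A \;=\; \begin{pmatrix}1 & a\\ a & 1\end{pmatrix},
\]
whose eigenvalues are $1+a$ and $1-a$ with eigenvectors $(1,1)$ and $(1,-1)$. So the eigenvalue ratio is $(1+a)/(1-a)$, not $a$. For real $a\notin\{0,1,-1\}$ this ratio is a real number of absolute value $\neq 1$, hence not a root of unity, so condition (3) of Lemma~\ref{lemma:interpolation_conditions} holds; $\det A=1-a^2\neq 0$ gives condition (1); and taking $g=[1,0]$ (which has nonzero components in both eigendirections) gives condition (2). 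Once the matrix is corrected in this way, your argument closes.

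It is worth noting that the corrected version of your argument is essentially a transfer-matrix reformulation of the paper's proof. The paper takes the arity-$(j+1)$ signature $F_{j+1}=[1,0,1,\ldots]$ and attaches $j$ copies of $[1,a]$ directly, computing the resulting unary signature to be $\tfrac{(1+a)^{j}}{2}(1,1)+\tfrac{(1-a)^{j}}{2}(1,-1)$. Iterating the matrix $A$ above starting from $[1,0]^{\tt T}=\tfrac12\bigl[(1,1)^{\tt T}+(1,-1)^{\tt T}\bigr]$ produces exactly the same sequence, because chaining these equality-type gadgets collapses to a single longer equality gadget. So the two arguments are two views of the same construction; the paper's direct computation avoids needing to identify a $2\times2$ transfer matrix at all, while yours highlights the structural role of Lemma~\ref{lemma:interpolation_conditions}. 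The one genuine correction needed in your write-up is the form of $A$: had you proceeded with $\mathrm{diag}(1,a)$ the gadget would simply not exist.
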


\begin{proof}
For every $j \ge 1$, we can take a function
$F_{j+1}=[1,0,1,0,1,\ldots]$  of arity $j+1$,
 and connect $j$ functions $[1,a]$ to
it.
The row vector form
  of the function (i.e., a listing of its values)
of arity $j$ composed of  $j$
copies of  $[1,a]$ is $(1,a)^{\otimes j}$.

The column vector form of $F_{j+1}$ is $1/2 \left( \begin{array}{c}
1 \\ 1 \end{array} \right)^{\otimes (j+1)}+1/2 \left( \begin{array}{c}
1 \\ -1 \end{array} \right)^{\otimes (j+1)}$. The $2^j \times 2$
matrix form of $F_{j+1}$ is $1/2 \left( \begin{array}{c} 1 \\ 1
\end{array}
\right)^{\otimes j} \otimes (1, 1)+1/2 \left( \begin{array}{c} 1 \\
-1
\end{array} \right)^{\otimes j} \otimes (1, -1)$.

Our gadget realizes
\[ (1,a)^{\otimes j}
\left[
1/2 \left( \begin{array}{c} 1 \\ 1 \end{array}
\right)^{\otimes j} \otimes (1, 1)+1/2 \left( \begin{array}{c} 1 \\
-1
\end{array} \right)^{\otimes j} \otimes (1, -1)
\right]
=\frac{(1+a)^{j}}{2} (1,  1)  + \frac{(1-a)^j}{2}  (1, -1) \textrm{. } \]

Because $a \in  \mathbb{R}$ and
$a  \not \in \{0,1,-1\}$, $(1+a)/(1-a)$ is well defined
and is neither zero nor  a root of unity.
We can interpolate any unary function $x(1,  1)  + y (1, -1)$,
in particular $[0,1]$.
\end{proof}

\begin{lemma}\label{lem:get [0,0,1] interpol}
Let $a \in \mathbb{R}$.
If $a \not \in \{0,1,-1\}$, then
we can interpolate $[0,0,1]$ from
$[1,0,a]$.
\end{lemma}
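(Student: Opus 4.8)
The plan is to use $[1,0,a]$ in a recursive interpolation construction, working entirely in the even-parity sector so that the effective state space is two-dimensional. First I would observe that $[1,0,a]$ is a binary signature, so it is natural to think of it as the $2\times 2$ diagonal matrix $A=\begin{bmatrix}1&0\\0&a\end{bmatrix}$ acting on unary signatures, and to build a chain: connecting $s$ copies of $[1,0,a]$ in series realizes $[1,0,a^{s}]$. More directly, for the target arity-2 signature $[0,0,1]$, the relevant recursive construction has iteration matrix $A$ and some initial arity-2 gadget $g$ realizable from $\{[1,0,a]\}$; but since we need an arity-2 output, the cleaner route is to first interpolate $[1,0,w]$ for all $w\in\mathbb{C}$ and then note that $[0,0,1]=\lim$-style is simply the case we want, i.e. after interpolating all $[1,0,w]$ we in particular get $w=0$, which gives $[0,0,1]$ up to the nonzero scalar from the $1$-entry — wait, $[1,0,0]$ is not $[0,0,1]$. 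So instead I would interpolate directly in the $\{[1,0,\cdot]\}$ family but track the right coordinates: write $[1,0,a^s] = a^s\,[a^{-s},0,1]$, and as $s$ ranges, the coefficients $a^{-s}$ and $1$ in the basis $\{[1,0,0],[0,0,1]\}$ let us set up a Vandermonde system in the unknowns $c_{k,\ell}$ from Section~\ref{sec:interpolation}, exactly as in the proofs of Lemma~\ref{thm:a000b-and-010x} and Lemma~\ref{lemma:00100}.

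Concretely, the plan is as follows. Suppose $G$ is a signature grid containing $n$ occurrences of $[0,0,1]$ together with other signatures. Classify the assignments $\sigma$ by how many of these $n$ vertices receive input of Hamming weight $0$ versus weight $2$ on their two incident edges (the weight-$1$ inputs contribute $0$, so only these two matter), giving $\mathrm{Holant}_G=\sum_{k+\ell=n}x_{k,\ell}\cdot 0^k\cdot 1^\ell = x_{n,0}$ — hmm, that degenerates. The right move: replace each $[0,0,1]$ by the chain gadget $[1,0,a^s]$, whose value on weight-$0$ input is $1$ and on weight-$2$ input is $a^s$, so the new grid $G_s$ has $\mathrm{Holant}_{G_s}=\sum_{k+\ell=n}x_{k,\ell}\,1^k\,(a^s)^\ell=\sum_{\ell=0}^n x_{n-\ell,\ell}(a^s)^\ell$. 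Taking $s=0,1,\dots,n$ gives a Vandermonde system with distinct nodes $1,a,a^2,\dots,a^n$ (distinct because $a\notin\{0,1,-1\}$, so $a$ is not a root of unity and the powers are pairwise distinct — here I use $a\in\mathbb{R}\setminus\{0,\pm1\}$ so $|a|\neq 1$, hence the $a^s$ are genuinely distinct). Solving recovers all $x_{k,\ell}$, in particular the value $x_{0,n}$ which is $\mathrm{Holant}$ of the grid with $[0,0,1]$ at every one of those $n$ vertices. Each $G_s$ is an instance of the problem over $\{[1,0,a],\dots\}$ and the chain construction preserves planarity, so this is a valid polynomial-time interpolation reduction.

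The only subtlety — and the place I expect to spend a sentence of care rather than real difficulty — is making sure the chain gadget $[1,0,a^s]$ is literally realizable as an $\mathcal{F}$-gate from copies of $[1,0,a]$ with the two dangling edges at the two ends, which is immediate: connect $s$ copies in a path, summing over the internal edges, and a trivial parity/computation shows the signature is $[1,0,a^s]$. I would also remark, as the other lemmas in this section do, that we need distinctness of the interpolation nodes, which follows from $a$ being a real number of absolute value different from $1$. Thus from $[1,0,a]$ we interpolate $[0,0,1]$, completing the proof; by Lemma~\ref{lemma001} this is exactly what is needed to push the Pl-Holant$^c$ dichotomy through to Pl-\#CSP.

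\begin{proof}
Regard $[1,0,a]$ as a binary gadget and form, for each integer $s\ge 0$, the chain $C_s$ consisting of $s$ copies of $[1,0,a]$ connected in a path, with the two free edges at the ends designated as dangling edges. Summing over the internal edge assignments, a direct computation (only all-$0$ or all-$1$ internal assignments contribute) shows that $C_s$ has signature $[1,0,a^s]$; note $C_0$ is just a single edge $[1,0,1]=(=_2)$, and the construction is planar.

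Let $\Omega$ be a planar signature grid for Pl-Holant over $\mathcal{F}\cup\{[1,0,a],[0,0,1],[1,0,1],[1,0,1,0],\ldots\}$ in which $[0,0,1]$ occurs at $n$ vertices. For an edge assignment $\sigma$, each such vertex sees inputs of Hamming weight $0$, $1$, or $2$ on its two incident edges; weight-$1$ inputs kill the product, so only weights $0$ and $2$ matter. For $k+\ell=n$ let $x_{k,\ell}$ be the sum, over all $\sigma$ in which exactly $k$ of these vertices see weight $0$ and $\ell$ see weight $2$, of the product of all other signature values. Replacing every occurrence of $[0,0,1]$ by the gadget $C_s$ yields a planar instance $\Omega_s$ of Pl-Holant$(\mathcal{F}\cup\{[1,0,a],[1,0,1],[1,0,1,0],\ldots\})$ with
\[
\mathrm{Holant}_{\Omega_s}=\sum_{\ell=0}^{n} x_{n-\ell,\ell}\,(a^{s})^{\ell}.
\]
Since $a\in\mathbb{R}$ and $a\notin\{0,1,-1\}$ we have $|a|\ne 1$, so the values $a^{0},a^{1},\ldots,a^{n}$ are pairwise distinct. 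Hence taking $s=0,1,\ldots,n$ gives a Vandermonde linear system of full rank in the unknowns $x_{n-\ell,\ell}$, which we solve in polynomial time using oracle access to the target problem. Recovering $x_{0,n}$ gives exactly $\mathrm{Holant}_{\Omega}$, the value of the instance with $[0,0,1]$ at all $n$ designated vertices. Thus $[0,0,1]$ is interpolable from $[1,0,a]$.
\end{proof}
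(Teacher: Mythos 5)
Your proof is correct and follows exactly the paper's approach: chain $s$ copies of $[1,0,a]$ to realize $[1,0,a^s]$, then observe that since $a$ is real with $a\notin\{0,\pm 1\}$ the powers $a^0,\dots,a^n$ are pairwise distinct, and solve the resulting full-rank Vandermonde system. The paper states this in two sentences; you have simply filled in the standard interpolation boilerplate (as set up in Section~\ref{sec:interpolation}), and all the details, including the planarity of the chain gadget and the identification of the target value with $x_{0,n}$, check out.
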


\begin{proof}
The function of a chain of length $j$ composed of $[1,0,a]$ is
$[1,0,a^j]$. Since the real number $a \not \in \{0,1,-1\}$, we can
interpolate all $[x,0,y]$, and in particular $[0,0,1]$,
by polynomial interpolation.
\end{proof}


\noindent {\bf Proof of Theorem \ref{thm:csp}.}
In this proof, we augment the class
$\mathcal{F}_1 \cup \mathcal{F}_2 \cup \mathcal{F}_3$
to include  those degenerate signatures which can be
obtained from tensor products from unary signatures
in $\mathcal{F}_1 \cup \mathcal{F}_2 \cup \mathcal{F}_3$.

If $H\mathcal{F} \subseteq \mathcal{F}_1 \cup \mathcal{F}_2 \cup \mathcal{F}_3 $, then
the problem is tractable even for general graphs
 and the conclusion of the theorem holds.
Now we assume that there exists an $f \in H\mathcal{F} - (\mathcal{F}_1 \cup \mathcal{F}_2 \cup \mathcal{F}_3)$.
In the following, we will prove
 that we can realize (or interpolate) $[0,1]$ or $[0,0,1]$ from
$f$ and $\{[1,0], [1,0,1], [1,0,1,0],[1,0,1,0,1],\ldots\}$.
%

The general thrust of the proof is to squeeze all
possible $f$ into several standardized forms,
and either prove \#P-hardness or reach a contradiction.
 We assume for a
contradiction that we cannot realize
(or interpolate) $[0,1]$ or $[0,0,1]$.
Suppose
$f=[f_0,f_1,\ldots,f_n]$. Since we have $[1,0]$, we can always  take
an initial subsequence of an $f$ we already have
 as the signature of
a realizable function. Given a
function $g$ with arity $r>1$, we often use the gadget composed of
two  copies of $g$ such that $r-1$ inputs of them are connected to each other.
We call this the double gadget from $g$.
We separate two cases according to whether $f_0=0$, or $f_0 \not =0$
which we normalize to $f_0=1$.

\begin{enumerate}
\item{$f_0=0$.}

As the constant 0 function is in $\mathcal{F}_1 \cup \mathcal{F}_2
\cup \mathcal{F}_3$,  $f$ is not identically 0, and thus for some
$i\ge 1$, $f_i  \not = 0$. If $f_0=0$ and $f_1 \neq 0$, then we can
connect  $n-1$ functions $[1,0]$ to $f$ to get $[0,f_1]$, which is
$[0,1]$ up to a factor.

So we have  $f_0=f_1 = 0$, then  $n \ge 2$.
If  $f_2 \neq 0$, then we can connect  $n-2$
functions $[1,0]$ to $f$ to get $[0,0,f_2]$, which is $[0,0,1]$ up
to a factor.

So we have  $f_0=f_1 =f_2 = 0$, then  $n \ge 3$.
Let $m \le n$ be the first nonzero,
$f_0=f_1=f_2=\cdots=f_{m-1}=0$, $f_m \neq 0$, then we can
connect a function $[1,0,1]$ to two dangling edges of
 $f$ to get a function whose first
nonzero entry is $f_m$ at index $m-2$.
We can repeat this process until exactly one or two zeros are
left at index 0 or  at index 0 and 1,  and
we reach one of the two scenarios above.

\item{$f_0=1$.}

By Lemma
\ref{lem:get [0,1] interpol}, we only need to consider $f_1  \in
\{0,1,-1\}$. Otherwise, we are done.

\begin{enumerate}
\item{$f_0=1$ and $f_1=\pm 1$.}


If $n=1$, then $f =
[f_0,f_1]=[1,\pm 1]\in \mathcal{F}_1 \cup \mathcal{F}_2 \cup \mathcal{F}_3$, a contradiction.
Therefore we have $n\geq 2$, we can take its initial part $[1,f_1,f_2]$.
Connecting one edge to  $[1,f_1]$, we get $[1+f_1^2,f_1+f_1f_2]
=[2, \pm (1+ f_2)]$. By Lemma
\ref{lem:get [0,1] interpol}, we only need to consider $f_2  \in
\{1,-1,-3\}$.

We can construct another gadget which connects two inputs of $[1,0,1,0]$ by
$[1,f_1,f_2]$. This produces a unary signature  $[1+f_2,2f_1]$.
It follows that $f_2 \neq -1$, since otherwise we have $[0, 1]$
after normalization.

Next we rule out $f_2=-3$.
The  double gadget of $[f_0, f_1,f_2]
= [1, \pm 1, f_2]$ has signature  $[2,-2,10]$ and
$[2,2,10]$. After normalizing,
this gives $[1, \pm 1, 5]$ and $5 \not \in \{1,-1,-3\}$.
 Hence, we may assume  $f_2=1$.

Our goal in this case 2.(a) of $f_0=1$ and $f_1=\pm 1$
is to extend this pattern $[1, \pm 1, 1, \ldots]$.
Assume we have proved that $f_j=1$ (respectively $f_j=(-1)^j$) for
$j=0,1,\ldots,m$ ($m \geq 2$). Since $f \not \in
 \mathcal{F}_1 \cup \mathcal{F}_2 \cup \mathcal{F}_3$,
the arity $n > m$. We can connect $[f_0,f_1]$ to
$[f_0,f_1,\ldots, f_{m+1}]$, to get a function
$[f_0^2 + f_1^2, f_0 f_1 + f_1 f_2, f_0 f_2 + f_1 f_3, \ldots]$
 of arity $m$, which is
$[2,2,\ldots,1+f_{m+1}]$
 (respectively $[2,-2,\ldots,f_{m-1}-f_{m}
,f_m-f_{m+1}]$).
By what has been proved inductively,
$f_{m+1}=1$ (respectively $f_{m+1}=f_{m-1}$).
So in this case
 we showed that either
 $f \in \mathcal{F}_1 \cup \mathcal{F}_2 \cup \mathcal{F}_3$,
which is a contradiction, or \#P-hardness.

\item{$f_0=1$ and $f_1=0$.}

Since $[1,0] \in \mathcal{F}_2$, and $f \not \in \mathcal{F}_2$,
 we have $n > 1$.
If $f$ were degenerate it would be $[1,0]^{\otimes n} = [1, 0,
\ldots, 0]$, which would  belong to the augmented class of
$\mathcal{F}_1 \cup \mathcal{F}_2 \cup \mathcal{F}_2$.  But $f$ does
not. So $f$ is non-degenerate, in particular there is some nonzero
entry other than $f_0$. Suppose $f_m$ is the first nonzero $f_i$,
for $i > 1$.  Because we can connect some $[1,0,1]$ to
$[f_0,0,\ldots,0, f_m]$ to get $[1,f_m]$ or $[1,0,f_m]$, we have
$f_m=\pm 1$ by Lemma~\ref{lem:get [0,1] interpol} and \ref{lem:get
[0,0,1] interpol}. Since $f \not \in \mathcal{F}_1$ we have $n > m$.

We prove $m$ is an even number. Otherwise, we can get
$[1,f_m,f_{m+1}]$ by connecting some $[1,0,1]$. By the proof for
case 2.(a),
we get  $f_{m+1}=1$. We can also get $[1,0,0,f_m,f_{m+1}]$, since $m
> 1$, whose double gadget has the signature $[1+f_m^2,f_m
f_{m+1},3f_m^2+f_{m+1}^2]=[2,\pm 1,4]$. This gives \#P-hardness, as
we can get $[2, \pm 1]$.

Now we know $m$ must be even. Next we show that in fact $m=2$.
Otherwise, $m \ge 4$ and we can
get $[1,0,0,0,f_m,f_{m+1}]$, whose double gadget has the signature
$[1+f_m^2,f_m f_{m+1} ,4f_m^2+f_{m+1}^2]=[2,\pm f_{m+1},4+f_{m+1}^2]$.
By what has been proved so far
this also leads to \#P-hardness.

Now we have  $m=2$ and  reached $[1, 0, \pm 1, f_3]$, whose double
gadget has the signature $[2, \pm f_3, 2+f_3^2]$, so $f_3=0$.

Again since $f \not \in \mathcal{F}_1  \cup \mathcal{F}_2 \cup
\mathcal{F}_3$ we have $n > 3$. Hence we have $[1,0,\pm 1, 0, f_4]$.
For $[1,0,-1, 0, f_4]$, by connecting two edges with $[1,0,1]$, we
get $[0,0,-1+f_4]$, and we must have $f_4=1$, or else we have the
signature $[0,0,1]$,  after normalization. For $[1,0,1, 0, f_4]$, by
connecting two edges with $[1,0,1]$, we get $[2,0, 1+f_4]$, and it
follows from Lemma~\ref{lem:get [0,0,1] interpol}
 that  $f_4 \in \{1,-1,-3\}$.
Connecting three edges of $[1,0,1, 0, f_4]$ to three edges of $[1,0,1, 0, 1]$,
we get $[4,0,3+f_4]$, which rules out $f_4=-1$,
by Lemma~\ref{lem:get [0,0,1] interpol} again.
 The double gadget of $[1,0,1, 0, f_4]$
gives $[4,0,3+f^2_4]$, which rules out $f_4=-3$. To sum up, we get $f_4=1$.

We have  reached $[1,0,\pm 1, 0, 1, \ldots]$.
The rest of the proof is
similar to the induction proof for
the case 2.(a)
but by skipping all
entries with an odd index.
Assume we have proved that $f_j$ are of the proper
form, for $j=0,1,\ldots,m$. More precisely,
$f_j = 0$ for all odd $j \le m$, and,
either $f_{2j} = 1$ for all $j \le m/2$, or $f_{2j} = (-1)^j$
for all $j \le m/2$.  Since  $f \not \in \mathcal{F}_1  \cup \mathcal{F}_2 \cup \mathcal{F}_3$,
the arity $n > m$.  We
connect $[f_0,f_1,f_2]$ to $[f_0,f_1,\ldots, f_{m+1}]$, to get a
function $g$ of arity $m-1$. If $m$ is even,
$f_{m-1}=0$, and  $f_{m+1}$ is added to or subtracted from
$f_{m-1}$, namely $f_{m-1} \pm f_{m+1}$,
to form the  last entry in $g$ at index $m-1$.
This entry
should be zero by induction, so $f_{m+1}=0$.
If $m$ is odd, we can repeat the  proof in the case 2.(a),
but   we ignore all zero entries at
odd indexed locations,
then the induction can be
completed as before.
This completes the proof.
\end{enumerate}
\end{enumerate}

\qed

\section{Dichotomy for Planar 2-3 Regular Graphs}\label{section:2-3-regular}
In this section we prove a dichotomy for Holant on planar 2-3 regular graphs.
This setting is very interesting for at least two reasons. From dichotomy theorem point of
view, this is the simplest nontrivial setting and always serves
as the starting point of more
general dichotomy theorems as in \cite{STOC09,CHL09}.  This was also a focus of several previous work \cite{FOCS08,mike09,TAMC09,cai},
whose result is the starting point of this theorem.
From the holographic algorithms point of view, most of the known holographic algorithms \cite{HA_J,AA_FOCS}
are essentially for planar 2-3 regular graphs. The dichotomy theorem here explains the reason why they
are special and why many variations of them are \#P-hard.
In the previous two dichotomies for Pl-Holant$^c$ and Pl-\#CSP, the new tractable cases for planar are also done by
holographic algorithms with matchgates. However, only special basis transformations are used since we assume some
signatures are freely available. In this planar 2-3 regular graphs setting, no additional signatures are assumed to
be freely available. Therefore all possible bases can be used in tractable cases.

 \begin{theorem}\label{thm:2-3-regular}
Let $[y_0,y_1, y_2]$ and $[x_0,x_1,x_2,x_3]$ be two complex symmetric signatures with
arity 2 and 3 respectively. Then {\rm Pl-Holant}$([y_0,y_1, y_2]|[x_0,x_1,x_2,x_3])$
is \#P-hard unless  $[y_0,y_1, y_2]$ and $[x_0,x_1,x_2,x_3]$
satisfy one of the following conditions, in which case it is
tractable:
\begin{enumerate}
    \item {\rm Holant}$([y_0,y_1, y_2]|[x_0,x_1,x_2,x_3])$ is tractable (for which
we have an effective dichotomy~\cite{CHL09}); or
    \item There exists a basis $T$ such that both $[y_0,y_1, y_2] (T^{-1})^{\otimes 2}$ and $T^{\otimes 3}[x_0,x_1,x_2,x_3]$ are  realizable by some matchgates
(for which we have a complete characterization \cite{STOC07}).
\end{enumerate}
\end{theorem}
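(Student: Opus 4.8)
The tractability direction is immediate. Condition (1) needs no argument. If condition (2) holds, Valiant's Holant Theorem (Theorem~\ref{thm:holant}) identifies {\rm Pl-Holant}$([y_0,y_1,y_2]|[x_0,x_1,x_2,x_3])$ with a planar Holant problem all of whose signatures are matchgate signatures; substituting the corresponding matchgates into the signature grid yields a planar graph whose perfect matching polynomial equals the Holant value, which the FKT algorithm computes in polynomial time. For the hardness direction assume neither (1) nor (2) holds. First I would discard the degenerate cases: if $[x_0,x_1,x_2,x_3]=[a,b]^{\otimes 3}$, every arity-$3$ vertex in a $2$-$3$ regular bipartite grid splits into three arity-$1$ vertices carrying $[a,b]$, and the Holant value becomes a product of one local contribution per arity-$3$ vertex, so the problem is in P and (1) holds --- a contradiction; similarly a degenerate $[y_0,y_1,y_2]$ makes the Holant a constant power. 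Hence both signatures may be assumed non-degenerate.

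The normalization step is a holographic transformation of the arity-$3$ signature. Over $\mathbb{C}$ a non-degenerate symmetric arity-$3$ signature, viewed as a binary cubic form, lies in exactly one of two $\mathbf{GL}_2$-orbits modulo scalars: that of $(=_3)=[1,0,0,1]$ (three distinct roots) and that of $[0,1,0,0]$ (one double and one simple root); a triple root would force degeneracy. I would choose $T\in\mathbf{GL}_2(\mathbb{C})$ sending $[x_0,x_1,x_2,x_3]$ to one of these, which simultaneously sends $[y_0,y_1,y_2]$ to some $[y_0',y_1',y_2']$; by Theorem~\ref{thm:holant} the transformed problem has the same complexity and the same status with respect to (1) and (2) (an extra basis change in (2) is absorbed into $T$).

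In the first case the problem is exactly {\rm Pl-Holant}$([y_0',y_1',y_2']|[1,0,0,1])$, and Theorem~\ref{lemma-cai} applies directly: it is \#P-hard unless $[y_0',y_1',y_2']$ falls into one of the five listed categories, and what remains is to match these categories with conditions (1) and (2). Categories (1)--(4) are precisely the cases where Holant$([y_0',y_1',y_2']|[1,0,0,1])$ is tractable on general graphs, so pulling back by $T$ gives condition (1). Category (5), $(y_0')^3=(y_2')^3$, is \#P-hard on general graphs but planar-tractable by a holographic algorithm; here I would use the characterization of matchgate-realizable symmetric signatures under a basis (\cite{STOC07}, cf.\ Theorem~\ref{thm:1}, together with its covariant analogue) to exhibit an explicit basis $T'$ simultaneously realizing $[y_0',y_1',y_2']$ and $(=_3)$ as matchgate signatures, and then $T'T$ witnesses condition (2). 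Since categories (1)--(5) exhaust the non-\#P-hard cases of Theorem~\ref{lemma-cai}, being in none of them yields \#P-hardness of the original problem.

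In the second case the arity-$3$ signature is the Exact-One (perfect matching) signature $[0,1,0,0]$, and I would prove a dichotomy for {\rm Pl-Holant}$([y_0',y_1',y_2']|[0,1,0,0])$ directly. The matchgate characterizations (Lemmas~\ref{even-geometric-series-lemma} and~\ref{odd-geometric-series-lemma}) show that the only bases keeping $[0,1,0,0]$ matchgate-realizable are diagonal and anti-diagonal, under which $[y_0',y_1',y_2']$ must have the form $[\ast,0,\ast]$ or $[0,\ast,0]$; these are exactly the instances of condition (2) in this orbit, and they include counting (weighted) perfect matchings on planar $3$-regular graphs, which is \#P-hard on general graphs. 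For every other $[y_0',y_1',y_2']$ not already covered by the general-graph dichotomy of~\cite{CHL09}, I would build from $\{[y_0',y_1',y_2'],[0,1,0,0]\}$ planar gadgets --- a copy of $(=_3)$, or a signature in the \#P-hard part of Theorem~\ref{lemma-cai}, and, where a general-graph reduction must be transported to planar instances, a planar crossover --- reducing a known \#P-hard problem to it. The main obstacle is exactly this last point: producing and verifying (as in the algebraically checked crossover of Theorem~\ref{thm:a010b}) the planar gadgets in each residual sub-case, and confirming that the locus where these constructions break down coincides precisely with the holographic-tractability locus of condition (2); essentially all the difficulty of the theorem lies there.
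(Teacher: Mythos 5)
Your high-level plan coincides with the paper's: dispose of tractability via Valiant's Holant theorem and FKT, discard degenerate signatures, normalize the non-degenerate arity-$3$ signature to one of two $\mathbf{GL}_2(\mathbb{C})$-orbit representatives via a holographic transformation, and handle each orbit separately. For the orbit of $(=_3)=[1,0,0,1]$ your argument and the paper's are essentially identical: both invoke Theorem~\ref{lemma-cai}, identify categories (1)--(4) with condition~(1), and identify the planar-only case $y_0^3=y_2^3$ with the matchgate basis of condition~(2). (The paper also notes the reversal of $[1,1,0,0]$ must be considered; this is a triviality you elide but should mention.)

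The real divergence is in the second orbit. You propose to prove the dichotomy for {\rm Pl-Holant}$([z_0,z_1,z_2]\mid[0,1,0,0])$ from scratch, by classifying the bases preserving matchgate realizability of $[0,1,0,0]$ and then, in the residual \#P-hard sub-cases, constructing planar gadgets \emph{and} a planar crossover analogous to Theorem~\ref{thm:a010b}; you explicitly flag this crossover as the main open difficulty. The paper takes a shorter route that avoids any new crossover. Writing the orbit representative as $[1,1,0,0]$, it first disposes of $z_0=0$ (trivially tractable by a counting argument, and the unique general-graph-tractable case). For $z_0\neq 0$ it applies a \emph{second} holographic transformation $T$, chosen so that the arity-$2$ signature becomes $[1,0,1]$ and the arity-$3$ signature becomes $[v,1,0,0]$ for a single parameter $v$ expressible in $z_0,z_1,z_2$. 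This turns the bipartite problem into the non-bipartite {\rm Pl-Holant}$([v,1,0,0])$; the locus $v=0$ is precisely when condition~(2) holds. For $v\neq 0$ the paper constructs a small planar gadget from three copies of $[v,1,0,0]$ yielding $[v^3+3v,\,v^2+1,\,v,\,1]$, which has a tensor decomposition allowing yet another holographic transformation that converts the arity-$3$ side to $[1,0,0,1]$, reducing the entire case back to the first orbit and hence to Theorem~\ref{lemma-cai}. A handful of exceptional values of $v$ (roots of $3v^4+16v^2+12$) are dispatched by a second gadget giving a nondegenerate binary signature on the other side. No fresh crossover or interpolation is needed. In short: your decomposition is correct up to the second orbit, where your plan would entail a second algebraic crossover analysis whose feasibility you leave open; the paper's second holographic normalization collapses that orbit into the first, making the case both shorter and free of that obstacle.
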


\begin{proof}
If $[x_0,x_1,x_2,x_3]$ or $[y_0,y_1, y_2]$ is degenerate, the problem is tractable, even for the non-planar case, and so
this falls in condition {\it 1}. Now we assume that
they are both non-degenerate.
As proved in \cite{STOC09}, we can choose an  invertible $T_1$ such that
$[x_0,x_1,x_2,x_3]$ (or its reversal, which is similar and we omit that case) can be written
as $T_1^{\otimes 3} [1,0,0,1]$ or $T_1^{\otimes 3} [1,1,0,0]$. Therefore by
a holographic reduction,
we can always reduce the problem equivalently to one of the following two problems:
(1) {\rm Pl-Holant}$([z_0,z_1, z_2]|[1,0,0,1])$ and (2) {\rm Pl-Holant}$([z_0,z_1, z_2]|[1,1,0,0])$.
So it is sufficient to prove the theorem for these two cases.

For {\rm Pl-Holant}$([z_0,z_1, z_2]|[1,0,0,1])$, by the theorem
\ref{lemma-cai} (\cite{cai}),
the only case which is hard for general graphs
 and tractable for planar graphs is
 $z_0^3=z_2^3$. This condition is exactly the same as the condition that there exists
 a basis $T$ such that both $[y_0,y_1, y_2] (T^{-1})^{\otimes 2}$ and $T^{\otimes 3}[1,0,0,1]$ are  realizable by some matchgates.
This proves Theorem~\ref{thm:2-3-regular} for case (1).

Now we consider {\rm Pl-Holant}$([z_0,z_1, z_2]|[1,1,0,0])$.
If $z_0=0$, the problem is trivially
tractable even for general graphs.
This can be seen by a simple counting argument:
in a bipartite graph the LHS vertices
all have the signature $[0, z_1, z_2]$ and thus at least
half the edges must be 1,
while the RHS vertices
all have the signature $[1,1,0,0]$ and thus less than
half  the edges are 1.
  This is also the only case where the problem is not \#P-hard
for general graphs
when the RHS has $[1,1,0,0]$ by~\cite{CHL09}.
Now we assume $z_0 \neq 0$. Then it is sufficient to prove that either the problem is \#P-hard or there
exists a basis transformation $T$ such that $[1,1,0,0]T^{\otimes 3}$ and  $ (T^{-1})^{\otimes 2} [z_0,z_1, z_2]$  are  realizable by some matchgates.
Let $T=\begin{bmatrix}\sqrt{z_0} & 0 \\
z_1/{\sqrt{z_0}}  &  \sqrt{(z_0 z_2 - (z_1)^2)/z_0}
\end{bmatrix}$.
Note that $T$ is well defined
and invertible since $z_0\neq 0$ and $[z_0, z_1, z_2]$ is non-degenerate (
i.e., $z_0 z_2 - (z_1)^2 \neq 0$).
Then we can verify that
$$[1,1,0,0]T^{\otimes 3}= [\sqrt{z_0}(z_0+3 z_1), \sqrt{z_0 (z_0 z_2 - (z_1)^2)}, 0,0] \mbox{ ~~~ and ~~~ } (T^{-1})^{\otimes 2} [z_0,z_1, z_2]=[1,0,1]. $$
We note that $\sqrt{z_0 (z_0 z_2 - (z_1)^2)}\neq 0$.
If $\sqrt{z_0}(z_0+3 z_1)=0$, then both $[\sqrt{z_0}(z_0+3 z_1), \sqrt{z_0 (z_0 z_2 - (z_1)^2)}, 0,0] $ and $[1,0,1]$ can be realized by
matchgates and the problem for planar graphs
 is tractable. We denote $v=\frac{\sqrt{z_0}(z_0+3 z_1)}{\sqrt{z_0 (z_0 z_2 - (z_1)^2)}}\neq 0$.
Then the problem is equivalent to (non-bipartite) Pl-Holant($[v,1,0,0]$).
Now it is sufficient to prove the following claim:

{\bf Claim:} Let $v\neq 0$ be a complex number. Then Pl-Holant($[v,1,0,0]$) is \#P-hard.

We can realize $[v^3+3v,v^2+1,v,1]$ by connecting
3 copies of
$[v,1,0,0]$'s as illustrated in Figure \ref{fig-g0}.
\begin{figure}[httb]
    \centering
    \includegraphics[height=2.6 cm]{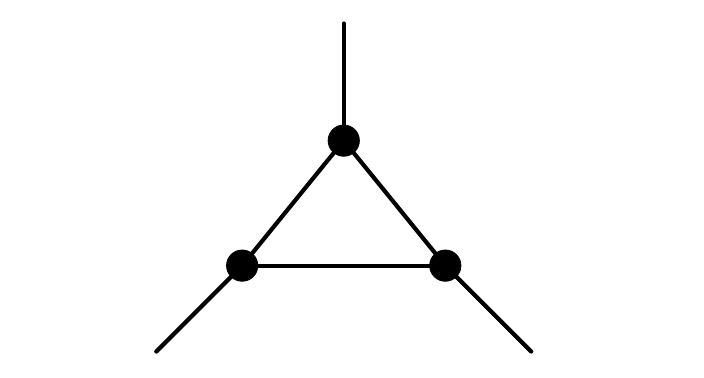}
    \caption{All vertex signatures are $[v,1,0,0]$.}
    \label{fig-g0}
\end{figure}
If we can prove that  Pl-Holant($[v^3+3v,v^2+1,v,1]$) is \#P-hard, then we are done.
 In tensor product notation this signature is
\begin{displaymath}
[v^3+3v,v^2+1,v,1]^{\tt T} = \frac{1}{2}\left(
    \left[
        \begin{array}{c}
            v+1 \\
            1
        \end{array}
    \right]^{\otimes 3}
    +\left[
        \begin{array}{c}
            v-1 \\
            1
        \end{array}
    \right]^{\otimes 3}
\right).
\end{displaymath}
Then the following reduction chain holds:
\begin{eqnarray*}
  {\mbox {\rm Pl-Holant}}([v^3+3v,v^2+1,v,1])
& \equiv_{\tt T} & {\mbox {\rm Pl-Holant}}([1,0,1]|[v^3+3v,v^2+1,v,1]) \\
& \equiv_{\tt T} & {\mbox {\rm Pl-Holant}}([v^2+2v+2,v^2,v^2-2v+2]|[1,0,0,1]),
\end{eqnarray*}
where the second step is a holographic reduction using
$\left[
        \begin{array}{c c}
                v+1 & v-1 \\
                1 & 1
        \end{array}
\right]$.
This transforms the problem to our first case
where the RHS all have $[1,0,0,1]$.
The only possible exceptional case happens when  $(v^2+2v+2)^3=(v^2-2v+2)^3$.
Since $ (v^2+2v+2)^3-(v^2-2v+2)^3 =4 v (3 v^4+ 16 v^2+ 12)$ and $v\neq 0$,
we will have proved the claim as long as $3 v^4+ 16 v^2+ 12 \neq 0$.
There are four roots for the equation $3 v^4+ 16 v^2+ 12 = 0$, and for these
four exceptional values of $v$, we prove it separately as follows.

In addition to the gadget in Figure~\ref{fig-g0},
we can construct a gadget in Figure~\ref{fig-g1} with a
 binary signature $[v^2+2, v, 1]$.
\begin{figure}[httb]
    \centering
    \includegraphics[height=2.6 cm]{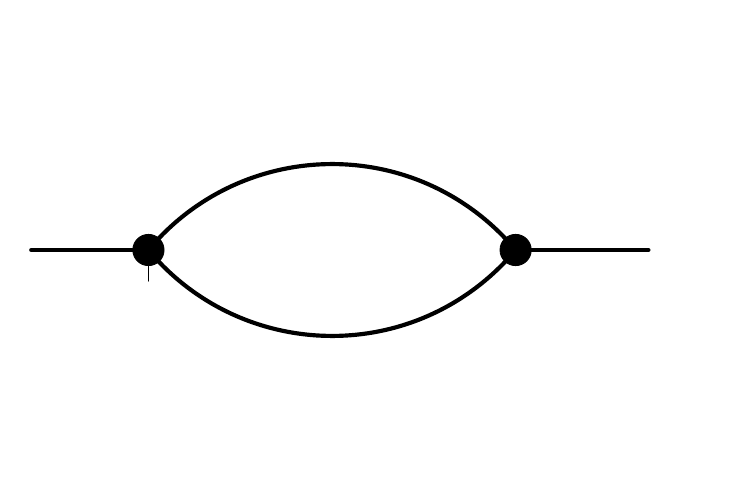}
    \caption{All vertex signatures are $[v,1,0,0]$.}
    \label{fig-g1}
\end{figure}
Now it is  enough to prove that Pl-Holant($[v^2+2,v,1]|[v^3+3v,v^2+1,v,1]$) is \#P-hard.
Under the same basis $\left[
        \begin{array}{c c}
                v+1 & v-1 \\
                1 & 1
        \end{array}
\right]$, we will get an equivalent problem
{\rm Pl-Holant}($[X, Y, Z] \mid [1,0,0,1]$),
where
$X = (v^2+2) (v^2+2v+1)+2 v(v+1)+ 1$,
$Y= (v^2+2)(v^2-1)+2 v^2 +1$,
and
$Z = (v^2+2) (v^2-2v+1)+2 v(v-1)+ 1$.
 Again this  transforms the problem to our first case,
and,  it is easy to verify that any
 root of  $3 v^4+ 16 v^2+ 12 = 0$ is not
a tractable case here.
 This completes the proof of the claim and also the proof of
 the theorem.
\end{proof}

\section*{Appendix: Some Connections to Statistical Physics}

In this section we describe some background and connections from
Statistical Physics. Our discussion is necessarily a superficial one,
both due to our limited knowledge and limitation on space.
The purpose is to illustrate that, even at such a superficial level,
 a strong
connection exists, and that our complexity results may shed some
light on the venerable question from physics: Exactly  what
``systems" can be solved ``exactly" and what ``systems" are
``difficult".

The Ising model was named after Ernst Ising~\cite{ising1925beitrag}.
Wilhelm Lenz invented this model and gave it to his student Ising
to work on it.  The model consists of a discrete set of variables,
called spins, that can be assigned one of two values (states).
These spins are usually placed on a lattice structure or a graph,
and each spin interacts with its nearest neighbors.

Denoting the values each spin $i$ can take as $\sigma_i = +1$
 and $-1$,
the energy (the Hamiltonian) of the Ising model is
$H(\sigma) = - \sum_{\mbox{edge} \{i,j\}} J_{i,j} \sigma_i \sigma_j$.
The interaction between spins $i$ and $j$
 is called ferromagnetic if $J_{i,j} >0$,
antiferromagnetic if $J_{i,j} < 0$,
and noninteracting  if $J_{i,j} =0$.
E.g.,
if all the spins are placed on a one-dimensional lattice,
then the antiferromagnetic one-dimensional Ising model (with
the same value $J_{i,j} = J <0$) has the energy
function $H = \sum_{i} \sigma_i \sigma_{i+1}$, after normalization.
The ferromagnetic two-dimensional Ising model
on a square lattice (with the same value $J_{i,j} = J>0$) has energy
$H = -\sum_{i,j}
(\sigma_{i,j} \sigma_{i, j+1} + \sigma_{i,j} \sigma_{i+1, j})$.
The Ising model may be modified by magnetic fields which amounts
to a unary function at each spin
$H = - \sum_{\mbox{edge} \{i,j\}} J_{i,j} \sigma_i \sigma_j
 - \sum_{i} h_i \sigma_i$.

The model is a statistical model. The  central premise of statistical physics
is that the probability of each
configuration $\sigma$ is given
 by the Boltzmann distribution,
$e^{- H(\sigma)/k T}/\sum_{\sigma} e^{- H(\sigma)/k T}$,
where $k$ is Boltzmann constant and $T$ is the (absolute) temperature.
This focuses attention on the partition function
\[Z = \sum_{\sigma} e^{- H(\sigma)/k T}.\]
Note that the exponential $e^{- H(\sigma)/k T}$ turns
this into a sum-of-product functions exactly as we discussed in \#CSP.

In 1925, Ising solved the one-dimensional Ising model~\cite{ising1925beitrag}.
The 2-dimensional square lattice Ising model with zero magnetic field
was solved by Onsager  in 1944~\cite{onsager1944crystal}.
Onsager announced the formula for the spontaneous magnetization
for the two-dimensional model in 1949 but did not give a derivation~\cite{onsager1949}.
C.N.Yang (1952) gave the first published proof of this formula~\cite{yang1952spontaneous},
using a limit formula for Fredholm determinants,
proved in 1951 by Szeg\"{o} in direct response to Onsager's work.
There are many extensions to the basic Ising model~\cite{mccoy1973two,baxter1982exactly}.

Another landmark achievement is the exact computation of
the number of perfect matchings (dimer problem) on any
planar graph using Pfaffians.
This was independently discovered by Kasteleyn and
by Fisher and Temperley~\cite{TF1961,Kasteleyn1961,Kasteleyn1967}.
This problem can also be nicely expressed by a partition function
in our Holant framework; where this time the Boolean variables
 are the edges (to include an edge or not), and the local constraint
function at each vertex is the {\sc Exact-One} function.
Freedman, Lov\'{a}sz and Schrijver~\cite{freedman-l-s}
recently proved that this partition function {\it cannot} be expressed
as a graph homomorphism function, where the vertices are
variables as in the Ising model.
However in the framework of Holant problems we can find a unity
for all these problems.

We note the following.  In the paper \cite{STOC07} we gave a
complete characterization of matchgate realizable symmetric
signatures. The following lemma is proved~\cite{STOC07}:

\begin{lemma} \label{basis:rec3}
The set of bases under which the signature $[x_0,x_1,x_2]$ is
realizable as a signature by some matchgate is
\[\left\{\left[
\begin{pmatrix} n_0 \\ n_1\end{pmatrix},
          \begin{pmatrix} p_0 \\ p_1 \end{pmatrix}  \right]
\in {\bf GL}_2({\mathbb{C}}) \left|~ \begin{matrix}
          x_0 p_1^2-2x_1 p_1 n_1+ x_2 n_1^2=0,x_0 p_0^2-2x_1 p_0 n_0+ x_2 n_0^2=0
          \\ \mbox{{\rm or}~~~}
x_0 p_0 p_1-x_1 (n_0 p_1+ n_1 p_0)+x_2 n_0 n_1=0 \end{matrix}  \right. \right\}  .\]
\end{lemma}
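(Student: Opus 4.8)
\noindent\emph{Proof idea.}
The plan is to deduce this lemma from the known classification of the binary signatures realizable by a matchgate under the \emph{standard} basis, and then transport that classification through the change of basis $T$ by a $2\times 2$ matrix computation. Identify a symmetric binary signature $[x_0,x_1,x_2]$ with the symmetric matrix $M=\begin{pmatrix}x_0 & x_1\\ x_1 & x_2\end{pmatrix}$, and write $T=\left[\begin{pmatrix}n_0\\ n_1\end{pmatrix},\begin{pmatrix}p_0\\ p_1\end{pmatrix}\right]\in{\bf GL}_2({\mathbb C})$, so that $T^{-1}=\frac{1}{\det T}\begin{bmatrix}p_1 & -p_0\\ -n_1 & n_0\end{bmatrix}$.

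First I would pin down the standard matchgate binary signatures. Apply Lemma~\ref{even-geometric-series-lemma} with $m=2$: the only parity constraint is $z_1=0$, and its ``geometric-series'' requirement is the existence of $(r_1,r_2)\neq(0,0)$ with $r_1z_0=r_2z_2$, which always holds (take $(r_1,r_2)=(z_2,z_0)$, or any nonzero pair when $z_0=z_2=0$). So the even matchgate binary signatures are exactly $\{[a,0,b]:a,b\in{\mathbb C}\}$, i.e.\ the diagonal matrices. Similarly Lemma~\ref{odd-geometric-series-lemma} with $m=2$ forces $z_0=z_2=0$ and has a vacuous geometric-series requirement (there is no odd $k$ with $3\le k\le 2$), so the odd matchgate binary signatures are exactly $\{[0,c,0]:c\in{\mathbb C}\}$, i.e.\ the anti-diagonal matrices. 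Since a matchgate has a fixed vertex parity, $[x_0,x_1,x_2]$ is realizable by a matchgate under basis $T$ iff the standard signature $\widehat M$ obtained from $M$ by the inverse basis change is diagonal (even case) or anti-diagonal (odd case); and with the covariant/contravariant conventions of holographic reductions, ``realizable under $T$'' unwinds to $M=T^{\tt T}\widehat M\,T$, i.e.\ $\widehat M=(T^{-1})^{\tt T}M\,T^{-1}$.

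Then I would just expand $\widehat M=(T^{-1})^{\tt T}M\,T^{-1}$. Up to the nonzero factor $1/(\det T)^2$,
\[
\widehat M\ \propto\ \begin{pmatrix}
x_0p_1^2-2x_1p_1n_1+x_2n_1^2 & -\bigl(x_0p_0p_1-x_1(n_0p_1+n_1p_0)+x_2n_0n_1\bigr)\\[2pt]
-\bigl(x_0p_0p_1-x_1(n_0p_1+n_1p_0)+x_2n_0n_1\bigr) & x_0p_0^2-2x_1p_0n_0+x_2n_0^2
\end{pmatrix}.
\]
Hence $\widehat M$ is anti-diagonal (realizable by an odd matchgate) iff both diagonal entries vanish, which is the first alternative $x_0p_1^2-2x_1p_1n_1+x_2n_1^2=0$ and $x_0p_0^2-2x_1p_0n_0+x_2n_0^2=0$; and $\widehat M$ is diagonal (realizable by an even matchgate) iff the off-diagonal entry vanishes, which is the second alternative $x_0p_0p_1-x_1(n_0p_1+n_1p_0)+x_2n_0n_1=0$. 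Taking the union over the two parities gives precisely the stated set.

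The last computation is routine, so the two places where care is needed are: (i) confirming, as above, that at arity $2$ the geometric-series conditions of Lemmas~\ref{even-geometric-series-lemma} and~\ref{odd-geometric-series-lemma} impose nothing beyond the parity conditions, so the standard matchgate binary signatures really are all diagonal and all anti-diagonal matrices; and (ii) fixing the variance of the transformation, i.e.\ that realizability under $T$ corresponds to $\widehat M=(T^{-1})^{\tt T}M\,T^{-1}$ rather than $T^{-1}M(T^{-1})^{\tt T}$ or $T^{\tt T}MT$ --- only the first reproduces the mixed $p$--$n$ terms appearing in the statement (e.g.\ $p_1n_1$ inside a diagonal entry, and $p_0p_1$ paired with $n_0n_1$ off-diagonal), and it is the reading forced by the holographic-reduction conventions already set up in the paper. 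This is the only genuine subtlety; once it is settled, everything else is a short calculation.
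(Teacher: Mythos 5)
The paper does not prove this lemma; it cites it as established in \cite{STOC07}, so there is no in-paper argument to compare against, and your proof must be judged on its own. It is correct and complete: instantiating Lemmas~\ref{even-geometric-series-lemma} and \ref{odd-geometric-series-lemma} at $m=2$ does yield exactly the diagonal matrices $[a,0,b]$ (the geometric-series constraint being vacuously satisfiable by $(r_1,r_2)=(z_2,z_0)$ or $(1,0)$) and the anti-diagonal matrices $[0,c,0]$ as the standard binary matchgate signatures; the covariant reading $\widehat M=(T^{-1})^{\tt T}M\,T^{-1}$ is the one consistent with the paper's conventions (compare the form $[y_0,y_1,y_2](T^{-1})^{\otimes 2}$ in Theorem~\ref{thm:2-3-regular}) and is, as you observe, the only one of the four candidate transformation rules that reproduces the mixed $p$--$n$ quadratic forms; and the $2\times 2$ expansion you display matches the three stated conditions exactly, the off-diagonal entry appearing with an inessential overall sign. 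The only point worth stating explicitly, though it is implicit in your argument, is that a fixed matchgate has a definite vertex parity, so ``realizable by some matchgate'' splits cleanly into the even case ($\widehat M$ diagonal, giving the single off-diagonal condition) and the odd case ($\widehat M$ anti-diagonal, giving the two diagonal conditions), which is precisely the disjunction in the lemma.
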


This has the consequence that under the basis
$\left[
\begin{pmatrix} n_0 \\ n_1\end{pmatrix},
          \begin{pmatrix} p_0 \\ p_1 \end{pmatrix}  \right]
= \left[
\begin{pmatrix} 1 \\ 1\end{pmatrix},
          \begin{pmatrix} 1 \\ -1 \end{pmatrix}  \right]$,
the signature $[x, y, x]$ is realizable by a matchgate, for all
values $x$ and $y$.  In terms of the Ising model,
when two interacting  spins $i$ and $j$ take the same
assignment value $\sigma_i = \sigma_j = \pm 1$,
the contribution to the Hamiltonian is $- J_{i,j}$,
and when they take the  opposite assignment $\sigma_i = - \sigma_j = \pm 1$,
the contribution is $J_{i,j}$. Translating this to
the contributions to the partition function we get
exactly the local constraint evaluation $x = e^{J_{i,j}/kT}$ when
inputs are $00$ or $11$,  and
$y = e^{-J_{i,j}/kT}$ when inputs are $01$ and $10$.

Then,
the theory of Holographic Algorithms tells us that
for planar graphs, this Ising model is exactly solvable
by a holographic reduction to the FKT algorithm.

The present paper, especially Theorem~\ref{thm:csp},
tells us why this is exactly where physicists stopped,
and attempts to generalize this to non-planar systems
have not been successful in the past 85 years.

Sorin Istrail~\cite{Istrail00} showed that computing the free energy
of an arbitrary subgraph of an Ising model on a lattice of dimension
three or more is NP-hard; see a nice article by Barry Cipra in the
SIAM News~\cite{Cipra2000}. A very partial list of a
 great deal of research on this and related models, from
a computational complexity perspective,  can be found in
~\cite{Bulatov08,BulatovD03,BulatovG04,CaiC06,STACS07,STOC09,TAMC09,DyerGP06,mike09,BulatovG05,Homomorphisms,SODA08,BASES,STOC07}.

\section*{Acknowledgments}
We thank the following colleagues for their interests and helpful
comments: Xi Chen, Martin Dyer, Alan Frieze, Sean Hallgren, Leslie
Goldberg, Sorin Istrail, Richard Lipton, Jason Morton, Dana Randall,
Leslie Valiant and Santosh Vempala.

\bibliography{refs}





\end{document}